\mathchardef\mhyphen="2D
\newtheorem{theorem}{Theorem}
\newtheorem{lemma}[theorem]{Lemma}
\newtheorem{definition}[theorem]{Definition}
\newtheorem{corollary}[theorem]{Corollary}
\newtheorem{problem}[theorem]{Problem}
\newcommand{\Path}{\mathsf{Path}}
\newcommand{\ch}{\mathsf{ch}}
\newcommand{\height}{\mathsf{height}}
\tikzset{external/only named=true}
\title{Connected $k$-Center and $k$-Diameter Clustering\thanks{This work has been funded by the Deutsche Forschungsgemeinschaft (DFG, German
Research Foundation) – 390685813; 459420781 and by the Lamarr Institute for Machine Learning and Artificial Intelligence
lamarr-institute.org.
}}
\author{ Lukas Drexler\thanks{Heinrich-Heine Universität Düsseldorf, Germany, e-Mail-addresses: \href{lukas.drexler@hhu.de}{lukas.drexler@hhu.de}, \href{mschmidt@hhu.de}{mschmidt@hhu.de}, \href{julian.wargalla@hhu.de}{julian.wargalla@hhu.de}} 
\and
Jan Eube\thanks{University of Bonn, Germany, e-Mail-addresses: \href{mailto:eube@informatik.uni-bonn.de}{eube@informatik.uni-bonn.de}, \href{mailto:s6dorein@uni-bonn.de}{s6dorein@uni-bonn.de}, \href{mailto:roeglin@cs.uni-bonn.de}{roeglin@cs.uni-bonn.de}} 
\and
Kelin Luo\thanks{University at Buffalo, USA, \href{kelinluo@buffalo.edu}{kelinluo@buffalo.edu}} 
\and 
Dorian Reineccius\footnotemark[3]
\and
Heiko R\"oglin\footnotemark[3]
\and
Melanie Schmidt\footnotemark[2] 
\and
Julian Wargalla\footnotemark[2]}
\date{}
\begin{document}

\pagenumbering{Alph}
\maketitle
\thispagestyle{empty}

\begin{abstract}
Motivated by an application from geodesy, we study the \emph{connected $k$-center problem} and the \emph{connected $k$-diameter problem}. These problems arise from the classical $k$-center and $k$-diameter problems by adding a side constraint. For the side constraint, we are given an undirected \emph{connectivity graph} $G$ on the input points, and a clustering is now only feasible if every cluster induces a connected subgraph in $G$. Usually in clustering problems one assumes that the clusters are pairwise disjoint. We study this case but additionally also the case that clusters are allowed to be non-disjoint. This can help to satisfy the connectivity constraints.

Our main result is an $O(1)$-approximation algorithm for the disjoint connected $k$-center and $k$-diameter problem for Euclidean spaces of low dimension (constant $d$) and for metrics with constant doubling dimension. For general metrics, we get an $O(\log^2k)$-approximation. Our algorithms work by computing a non-disjoint connected clustering first and transforming it into a disjoint connected clustering.

We complement these upper bounds by several upper and lower bounds for variations and special cases of the model.
\end{abstract}

\newpage
\pagenumbering{arabic}
\setcounter{page}{1}

\section{Introduction}

Clustering problems occur in a wide range of application domains. Because of the general importance and interesting combinatorial properties, well-known $k$-clustering problems like $k$-center, $k$-median, and $k$-means have also been vastly studied in theory. These problems are NP-hard and APX-hard, but many constant-factor approximation algorithms for them are known.
All $k$-clustering problems ask to partition a set of points (usually in a general metric space or in Euclidean space) into $k$ clusters, often by picking $k$ centers and assigning every point to its closest center. The clusters are then evaluated based on the distances between the points and their corresponding centers. For example in the case of $k$-center, the objective is to minimize the maximum distance between any point and its closest center.

In applications, clustering problems are often subject to side constraints. Consequently, clustering with side constraints has also become a thriving topic for designing approximation algorithms. Probably the most known example is clustering with capacities where the number of points in a cluster is limited. Notice how this constraint prevents us from assigning points to their closest center because there might not be enough space. So, for example, uniform capacitated (center-based) clustering consists of finding $k$ centers and an assignment of points to those centers such that every center gets at most $U$ points (and then evaluating the desired objective). Finding a constant factor approximation for uniform capacitated $k$-median clustering is a long standing open problem. Other constraints that have been studied are for example lower bounds (here, a cluster has to have a certain minimum number of points, so it may be beneficial to open less than $k$ clusters) and clustering with outliers (here we are allowed $k+z$ clusters, but $z$ of them have to be singletons, i.e. outliers). There are also results on constraints that restrict the choice of centers, for example by demanding that the centers satisfy a given matroid constraint. Among the newer clustering problems with constraints are those that evolve around aspects of fairness. These constraints are typically more complex and can either be point-based or center-based. Each constrained clustering problem, old or new, comes with a unique combinatorial structure, giving rise to a plethora of insights on designing approximation algorithms.

\newcommand{\drsdebuggrid}[4]{
  \draw[gray!50!white, step=0.5,thin] (#1, #2) grid (#3, #4);
  \foreach \x in {#1,...,#3} {
    \node [below, font={\sffamily\footnotesize}, gray] at (\x, #2) {\x};
  }
  \foreach \y in {#2,...,#4} {
    \node [left, font={\sffamily\footnotesize}, gray] at (#1, \y) {\y};
  }
}

In this paper, we study a constraint that stems from the area of sea level geodesy but which is also of interest for other domains (discussed briefly below). For the application that motivated our work, consider the left picture in Figure~\ref{fig:maps}.
\begin{figure}
\centering
\begin{tikzpicture}
\node at (0,0) {\includegraphics[width=0.43\textwidth]{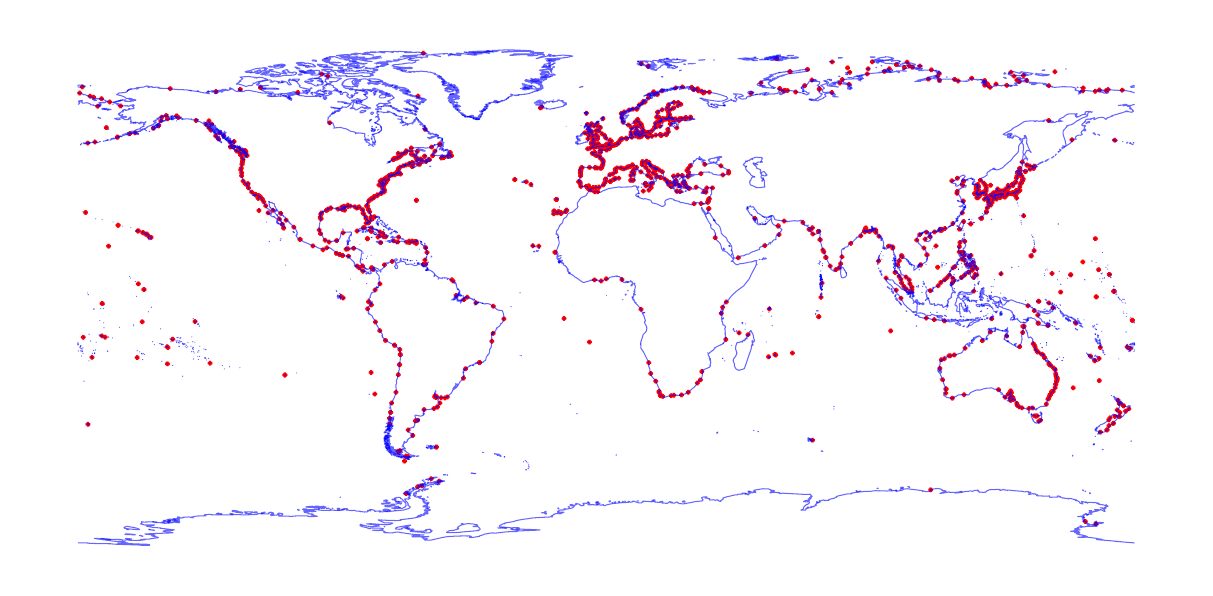}};
\node at (8,0) {\includegraphics[width=0.43\textwidth]{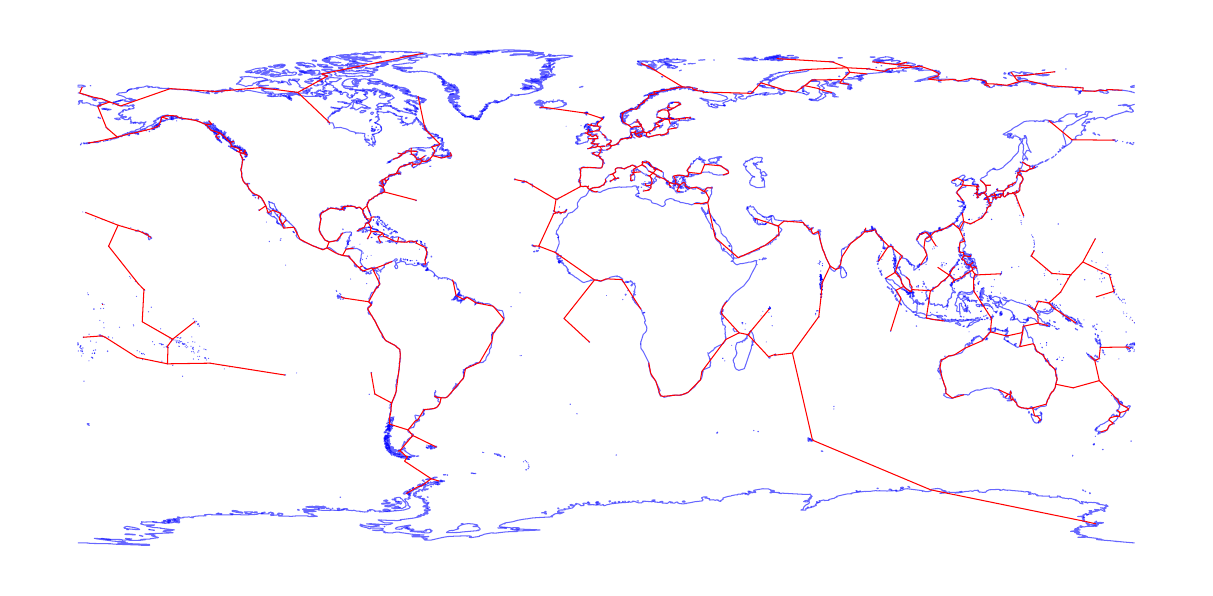}};
\node [fill=white,draw=black, inner sep=0cm] at (6,-1) {\includegraphics[width=3.2cm]{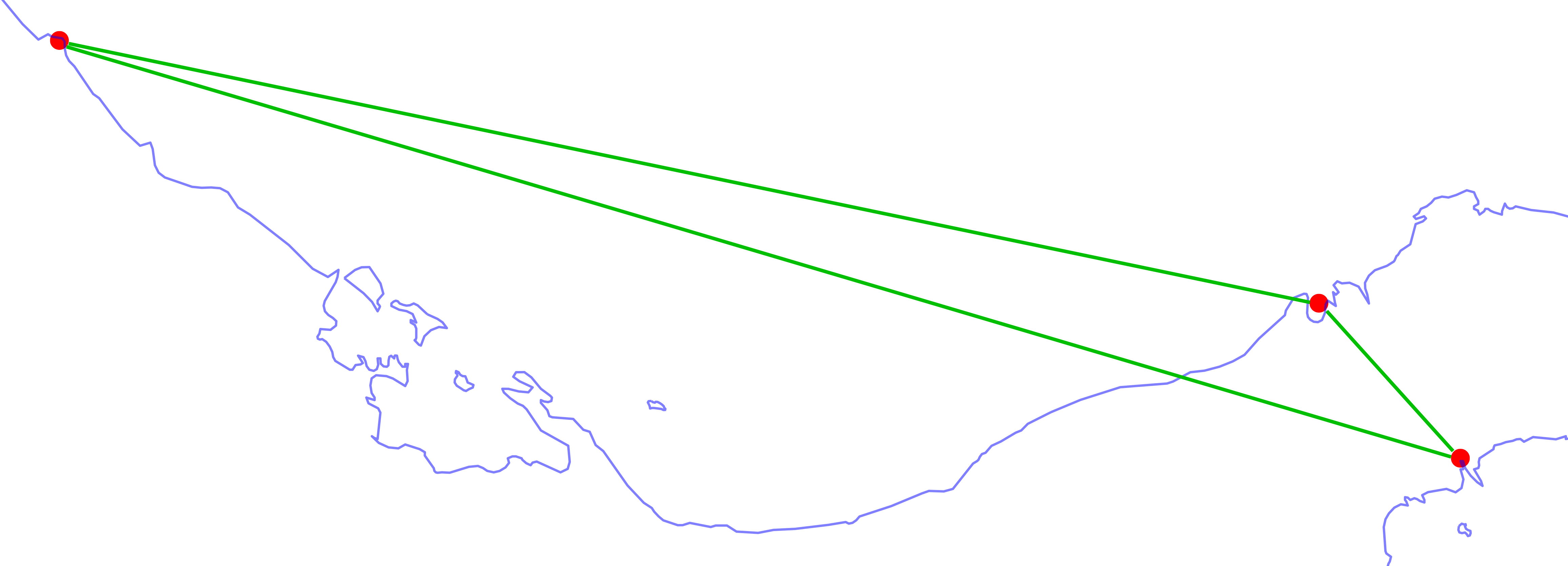}};
\coordinate (p1) at (6.67,0.173) {};
\coordinate (p2) at (6.72,0.163) {};
\coordinate (p3) at (6.726,0.156) {};
\draw[line width=0.05mm] ($(p1)+(-0.01,0.01)$) rectangle ($(p3)+(0.01,-0.01)$);
\coordinate (r1) at (4.397,-0.42);
\coordinate (r2) at (7.6,-0.42);
\draw [gray] ($(p1)+(-0.01,0.01)$) to (r1);
\draw [gray] ($(p3)+(0.01,0.027)$) to (r2);
\node at (6,-0.7) {\fontsize{4}{5}\selectfont $27$ };
\node at (7.41,-1.15) {\fontsize{1}{5}\selectfont $119$};
\node at (6,-1.1) {\fontsize{4}{5}\selectfont $114$ };
\end{tikzpicture}
\caption{Gauge stations around the globe, with station location data from PSMSL (\href{http://www.psmsl.org/data/obtaining/}{http://www.psmsl.org/data/obtaining/}), plotted onto the map from the Natural Earth data set (\href{https://www.naturalearthdata.com/downloads/10m-physical-vectors/10m-coastline/}{https://www.naturalearthdata.com/downloads/10m-physical-vectors/10m-coastline/}). Highlighted are three stations in Central America, and the numbers are Fréchet distances computed on the curves defined by sea levels between 1953 and 1968. \label{fig:maps}}
\end{figure}
 We see the location of tide gauge stations around the globe from the PSMSL data set~\cite{PSMSL-data,PSMSL-paper}. 
At every station, sea level heights have been collected over the years, constituting monthly time series. These records can be used to reconstruct regional or global mean sea levels. However, the tide gauges have usually been constructed for practical purposes and not for sea level science. As a result, they are unevenly distributed over the globe. One way out of this is to replace clusters of tide gauges by representative records to thin out the data set. 
Our general goal is therefore to cluster the tide gauges into a given number $k$ of clusters. However, the objective is not based on the gauge stations' geographic distance but on the time series. 
 We wish to combine gauge stations with similar time series into one, i.e., when we cluster, we want to find clusters where the center's time series is similar to the records collected at the tide gauges represented by that center.
We can model the distance between time series by a metric distance measure for times series or curves (like the Fréchet distance). As the objective we pick $k$-center, so we want to minimize the maximum distance between the center and the points that are replaced by it.
 Now we get to the complication: The gauge stations are \emph{also} points on the map. We do not want to have points in the same cluster that are geographically very far away.

 It is not immediately clear how to best model this scenario. We could resort to bicriteria approximation and look for solutions where both the time series of points in a cluster are similar and the radius of clusters is small, by either looking at the Pareto front or weighting the two objectives. Alternatively, we could fix a threshold and limit the geographic distance between centers and points, i.e., demand that a point $x$ can only be assigned to center $c$ if its geographic distance is at most some $T$. Both modelings have the drawback that they really only capture the distance on the map, while in reality, we would like to have somewhat coherent  clusters that correspond to non-overlapping areas on the map. Indeed, we might be fine with having points of large geographic distance in the same cluster if all points \lq between\rq\ them are also in the same cluster (i.e., that larger area of the sea behaves very similar with respect to the gauge station measurements).

The modeling that we study incorporates this via a preprocessing step. We assume that the points have been preprocessed such that we get a connectivity graph like shown on the right in Figure~\ref{fig:maps}. The graph on the map was computed by finding a minimum spanning tree of the points, but it could be computed in other ways, too. The important part is that it captures a neighborhood structure. To model coherence, we now demand that clusters are \emph{connected} in this graph. Figure~\ref{fig:linegraphexample} gives an example.

 \begin{problem}
  In a connected $k$-clustering problem, we are given points $V$, a metric $d$ on $V$, a number $k$, and an unweighted and undirected connectivity graph $G=(V,E)$. A feasible solution is a partitioning of $V$ into $k$ clusters $C_1,\ldots,C_k$ which satisfies that for every $i\in\{1,\ldots,k\}$ the subgraph of $G$ induced by $C_i$ is \emph{connected}.
 \end{problem}

For the connected $k$-center problem, a solution also contains centers $c_1,\ldots,c_k$ corresponding to the clusters $C_1,\ldots,C_k$ and the objective is to minimize the maximum radius $\max_{i \in [k],x \in C_i} d(x,c_i)$.
For the connected $k$-diameter problem the objective is to minimize the maximum diameter $\max_{i\in[k]}\max_{x,y \in C_i} d(x, y)$. It is easy to see that the connected $k$-clustering problem generalizes the classic $k$-center and $k$-diameter problems when choosing the connectivity graph $G$ as a complete graph.

Interestingly, the connected $k$-center problem was independently defined in an earlier paper by Ge et al.~\cite{GeEGHBB08} (previously unknown to us. We thank the anonymous reviewer who pointed us to this reference.) In that paper, connected clustering is motivated in the context of  applications where both attribute and relationship data is present. It is applied to scenarios of community detection and gene clustering, showing the wide applicability of the modeling. We discuss their work further in the related work section.

\subsection*{Disjoint vs non-disjoint clusters} Notice that we demand that the $C_i$ are \emph{disjoint}. For some clustering problems with constraints the objective value can be decreased when we are allowed to assign points to more than one cluster: For example, lower bounds are easier to satisfy when points can be reused. The same is true for connected clustering: It is easier to satisfy connectivity when we can put important points into multiple clusters. For our application, we want to have disjoint clusters, but we still study the variation for completeness and also since it allows for better approximation algorithms that can be at least tested for their usefulness in the application (e.g., leaving it to the user to resolve overlaps). Notice that in Figure~\ref{fig:linegraphexample}, allowing non-disjoint clusters enables the solution $\{c,d\}$ with clusters $\{a,b,c,d\}$, $\{c,d,e,f\}$ which has cost $1$.

\begin{figure}
 \centering
\begin{tikzpicture}
 \begin{scope}

 \node [circle,draw,inner sep=0cm, minimum width=0.2cm,label=below:{$a$}] (a) at (0,0) {};
 \node [circle,draw,inner sep=0cm, minimum width=0.2cm,label=below:{$b$}] (b) at (1,0) {};
 \node [circle,draw,inner sep=0cm, minimum width=0.2cm,label=below:{$c$}] (c) at (2,0) {};
 \node [circle,draw,inner sep=0cm, minimum width=0.2cm,label=below:{$d$}] (d) at (3,0) {};
 \node [circle,draw,inner sep=0cm, minimum width=0.2cm,label=below:{$e$}] (e) at (4,0) {};
 \node [circle,draw,inner sep=0cm, minimum width=0.2cm,label=below:{$f$}] (f) at (5,0) {};
 \draw [thick,dashed] (a) -- (b) -- (c) -- (d) -- (e) -- (f);
 \draw [thick,bend left=30] (a) to (d);
 \draw [thick,bend left=30] (b) to (d);
 \draw [thick,bend left=30] (c) to (e);
 \draw [thick,bend left=30] (c) to (f);
 \draw [thick, bend right = 30] (c) to (d);
 \end{scope}

 \begin{scope}[xshift=-5cm]
 \node [circle,draw,inner sep=0cm, minimum width=0.2cm,label=above:{$a$}] (a) at (0,0) {};
 \node [circle,draw,inner sep=0cm, minimum width=0.2cm,label=left:{$b$}] (b) at (1,-1) {};
 \node [circle,draw,inner sep=0cm, minimum width=0.2cm,label=above:{$c$}] (c) at (2,0) {};
 \node [circle,draw,inner sep=0cm, minimum width=0.2cm,label=above:{$d$}] (d) at (1,0) {};
 \node [circle,draw,inner sep=0cm, minimum width=0.2cm,label=right:{$e$}] (e) at (2,-1) {};
 \node [circle,draw,inner sep=0cm, minimum width=0.2cm,label=above:{$f$}] (f) at (3,0) {};

 \draw [thick,dashed] (a) -- (b) -- (c) -- (d) -- (e) -- (f);
 \draw [very thick] (a) to (d);
 \draw [very thick] (b) to (d);
 \draw [very thick] (c) to (e);
 \draw [very thick] (c) to (f);
 \draw [thick, bend right = 30] (c) to (d);
 \end{scope}

 \end{tikzpicture}
 \caption{An example input.
 The solid edges form the metric: Vertices connected by a solid edge have distance $1$ and all other distances are $2$. The dashed edges form the connectivity graph.
 Both pictures show the same graph. 
 The optimal $k$-center solution with centers $\{c,d\}$ and clusters $\{a,b,d\}$ and $\{c,e,f\}$ is not connected. Any optimal (disjoint) connected $k$-center solution has radius $2$.
 \label{fig:linegraphexample}}
\end{figure}
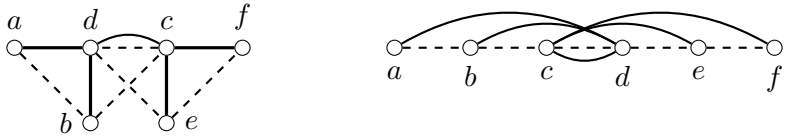

\begin{definition}
 We distinguish between connected $k$-clustering \emph{with disjoint clusters} and \emph{with non-disjoint} clusters, referring to whether the clusters $C_i$ have to be pairwise disjoint or not.
\end{definition}

Finally, we observe that in our application the connectivity graph is not necessarily arbitrary. Depending on the way that we build the graph, it could be a tree (the minimum spanning tree) or even a line (if we follow the coast line). Thus, we are interested in the problem on restricted graph classes as well. In Figure~\ref{fig:linegraphexample}, the connectivity graph is a line.

\paragraph{Results.}
Our main result is an approximation algorithm that works for both the disjoint connected $k$-center problem and the disjoint connected $k$-diameter problem for general connectivity graphs $G$. For general metrics, the algorithm computes an $O(\log^2 k)$-approximation. If the metric has bounded doubling dimension $D$, the approximation ratio improves to $O(2^{3 \cdot D})$, and for $d$-dimensional Euclidean spaces with $L_p$-norm to $O(d^{2+1/p})$.

In Section~\ref{sec:resultssummary} we set this result into context by discussing different classes of restricted connectivity graphs (mostly lines, stars and trees) and also the variant of non-disjoint connected clustering. 
The main result of our work on restricted graph classes is an exact algorithm for trees. We have now learned that Ge et al. have already developed a quite similar algorithm, and thus we have moved our results on trees to Appendix~\ref{chap:treeopt}. We review what is known about restricted graph classes in Section~\ref{sec:resultssummary}.
Table~\ref{table} in that section gives an overview of the smaller results that we obtain and of what was known previously.

\paragraph{Techniques.}
For the general result, we first compute a non-disjoint clustering. Then we develop a method using a concept of a layered partitioning (see Definition~\ref{def:wellseparated}) to make the clusters disjoint. We show how to obtain such a partitioning for different metrics. Both steps are novel and form the main contribution of this paper.
We use dynamic programming for obtaining optimal solutions on trees (this was already done similarly by Ge et al.~\cite{GeEGHBB08} in independent work). We also demonstrate how to solve the line case by greedy algorithms. Then we derive some hardness results. Most of them are short reductions, but the worst-case instance in Sections~\ref{sec:WorstCaseInstance} and~\ref{sec:WorstCaseInstancePartII} and the hardness proof in Section~\ref{lower:bound:three} require more complicated constructions.

\subsection*{Related work.}
The $k$-center problem and the $k$-diameter problem are both NP-hard to approximate  better than by a factor of 2 (see~\cite{HN79,H84} for $k$-center, $k$-diameter follows along the same lines).
There are two popular $2$-approximation algorithms for $k$-center which both also work for $k$-diameter with the same approximation guarantee~\cite{G85,HS86}.

\paragraph{Related work on connected clustering.}

The connected $k$-center problem with disjoint clusters has been introduced and studied by Ge et al.~\cite{GeEGHBB08}\footnote{We thank an anonymous reviewer for pointing us to this reference}. Besides other results, Ge et al.\ present a greedy algorithm for the problem and claim that it computes a 6-approximation. In Appendix~\ref{sec:Counterexample} we present an example showing that this greedy algorithm actually only obtains an $\Omega(k)$-approximation. The greedy algorithm is based on the approach of transforming a non-disjoint clustering into a disjoint one. In this transformation, it does not change the centers, i.e., it uses the given centers of the non-disjoint clustering also as centers for the disjoint clustering. In Section~\ref{sec:WorstCaseInstance} we prove a lower bound showing that no algorithm based on transforming a non-disjoint clustering into a disjoint one with the same centers can compute an $O(1)$-approximation. Hence, without fundamental changes of the algorithm, no $O(1)$-approximation can be obtained. In Section~\ref{sec:WorstCaseInstancePartII} we even show that in general the optimal non-disjoint clustering can be better than the optimal disjoint clustering by a factor of $\Omega(\log\log{k})$. Hence, if one uses only the radius of an optimal non-disjoint clustering as a lower bound for the radius of an optimal disjoint clustering, one cannot show a better approximation factor than $\Omega(\log\log{k})$. To the best of our knowledge, no other approximation algorithms with provable guarantees for the connected $k$-center or $k$-diameter problem are known.

Ge et al.\ introduce the connected $k$-center problem to model clustering problems where both attribute and relationship data is present. They perform experiments in the context of gene clustering and community detection and demonstrate that for both these applications modelling them as connected clustering problems leads to superior results compared to standard clustering formulations without connectivity constraint. For community detection for example, they construct datasets from DBLP\footnote{\url{https://dblp.org/}} where researchers are supposed to be clustered according to their main research area. Based on keyword frequencies they defined a distance measure for the researchers. At the same time, the coauthor network can be used as a connectivity graph. The advantage of connected clustering compared to traditional models is that it naturally takes into account both the distance measure and the coauthor network. For their experiments, Ge et al.\ develop a heuristic called NetScan for the connected $k$-center problem with disjoint clusters, which is reminiscent of the $k$-means method, and efficient on large datasets. In their experiments, the outcomes of this heuristic were significantly better than the outcomes of state-of-the-art clustering algorithms that take into account either only the distance measure or only the coauthor network. The work of Ge et al.\ has attracted some attention and it is cited in many other articles on community detection and related subjects.

Furthermore, Ge et al. show that already for $k=2$, the connected $k$-center problem with disjoint clusters is NP-hard. They also argue that it is even NP-hard to obtain a $(2-\epsilon)$-approximation for any $\epsilon>0$. Additionally they give an algorithm based on dynamic programming with running time $O(n^2\log{n})$ that solves the connected $k$-center problem with disjoint clusters optimally when the connectivity graph is a tree, similar as we did (later) as it is described in Appendix~\ref{chap:treeopt}.

Gupta et al.~\cite{GuptaPS11} study the connected $k$-median and $k$-means problem and prove upper and lower bounds on their approximability. Related to our motivation, Liao and Peng~\cite{LiaoP12} consider the connected $k$-means problem to model clustering of spatial data with a geographic constraint. They develop a local-search based heuristic and conduct an experimental evaluation.

\paragraph{Related work on clustering with side constraints.}

It is beyond the scope of this paragraph to list all constant-factor approximations for clustering with side constraints. Constraints that aim at fairness, diversity or non over-representation have gained a lot of attention, e.g., see~\cite{AEKM19,BIOSVW19,BCFN19,BGKKRS019,CFLM19,CKLV17,CKR20,KAM19,LiYZ10,TOG21}.
Lower bounds on cluster sizes are a classical topic that received a renewed interest because they can be used to model a very mild form anonymization~\cite{AS12-lowerbounded,AS16,AS21,RS18,S10}. In particular, \cite{AS21} explores the idea of adding points to more than one cluster to make it easier to satisfy lower bounds.
Clustering with \emph{upper} bounds (capacities) on the cluster sizes has always been popular and is still producing many interesting results~\cite{ABCGMS15,ASS17,CHK12,KS00}. Outliers\footnote{Clustering with outliers allows to ignore $z$ points from the input point set. Formulated as a constraint, it allows $k+z$ clusters instead of $k$ clusters under the constraint that $z$ clusters are singletons.} are also a popular topic~\cite{CGK20,CKMN01,CN19,HJLW18,KLS18,McCutchenK08}. There are many more interesting results on clustering with constraints, but we are not aware of any approximation algorithm for connected clustering as we defined it above. The variant where points can only be assigned to a center if they are within a certain distance of it was studied in~\cite{GGS16} in combination with $k$-median, the problem is named \emph{local $k$-median}. Since the problem captures set cover, it does not allow for a constant-factor approximation, and a bicriteria approximation for the problem is given in~\cite{GGS16}.

When the clustering problem is $k$-means in Euclidean space and the constraint can be expressed as a constraint on the \emph{allowed partitionings} of the input point set into clusters (this is true for capacitated clustering and also for many variants of fairness constraints), then one can use a generic framework described in~\cite{DX15,BJK18}. 

There is also a more generic approach to solving clustering with constraints which restrict the choice of \emph{centers}. If the constraint can be formulated by a matroid constraint (the set of centers has to be independent in a suitably defined matroid), there are generic constant-factor approximations by Chen et al.~\cite{CLLW16} for the $k$-center case and by Krishnaswamy et al.~\cite{KKNS15} for the $k$-median case. Even so, there are problem-tailored approximation algorithms for specific cases to improve upon the generic runtime, for example for $k$-center with a center-based fairness criterion~\cite{CKR20,KAM19}.

In \cite{CGLMPS11}, Cygan et al.\ consider a variant of the facility location problem where an instance can have two or more different cost functions on the same set of facilities and clients, and the goal is to find a solution that is good for all cost functions. They give an $\tilde{O}(k^{1-1/h})$-approximation algorithm, where $h$ is the number of cost functions and $\tilde{O}$ suppresses logarithmic factors.

\paragraph{Outline.}
In Section~\ref{sec:resultssummary} we give an overview of our results and prove the first results for non-disjoint clusterings. In Section~\ref{chap:graph} we prove our main upper bounds for general connectivity graphs and disjoint clusterings, while Section~\ref{sec:LowerBounds} contains our lower bounds. We conclude the main part of this article by considering the case that the connectivity graph is a line in Section~\ref{sec:lineproofs}. In Appendix~\ref{sec:Counterexample} we present a counterexample to the approximation factor of $6$ for the approximation algorithm by Ge et al.~\cite{GeEGHBB08}. In Appendix~\ref{chap:treeopt}, we consider the case that the connectivity graph is a tree and we present a dynamic programming approach. This result has already been obtained by Ge et al.~\cite{GeEGHBB08}.
\section{Overview and First Results}\label{sec:resultssummary}

In this section, we give some intuition about connected clustering, review results about restricted graph classes and briefly discuss the proof of our main result for general graph classes.
We use the following well-known fact for $k$-center/$k$-diameter approximation first used by Hochbaum and Shmoys~\cite{HS86}: For the $k$-diameter or $k$-center problem (connected or not), the value of the cost function is always equal to one of the at most $n^2$ different distances between two points in $V$ where $n=|V|$. This is true because it is either the distance between two points in the same cluster ($k$-diameter) or it is the distance between a point and its center ($k$-center). Thus, a standard scheme to follow is to sort these distances in time $O(n^2 \log n)$ and then search for the optimum value by binary search. The problem then reduces to finding a subroutine for the following task.

\begin{problem}\label{p:subroutine}
If there is a solution which costs $r$ for a given~$r$, find a solution that costs at most $\alpha \cdot r$.
Otherwise, report that $r$ is too small.
\end{problem}

An algorithm that solves this task can easily be turned into an $\alpha$-approximation by searching for the smallest $r$ for which the algorithm returns a solution.
The running time of the resulting algorithm is $O(n^2 \log n)$ for the preprocessing plus $O(\log n)$ times the running time of the subroutine.

\begin{table}
    \centering
    \begin{adjustbox}{width=1\textwidth}
    \small
        \begin{tabular}{l|c|c|c|c}
        \multirow{2}{*}{\diagbox{Restriction}{Objective}}           & \multicolumn{2}{c|}{$k$-Center}                                                                                & \multicolumn{2}{c}{$k$-Diameter}\\
        \cline{2-5}
        & disjoint & non-disjoint & disjoint & non-disjoint \\
        \hline
        $G$ is a line          & \multirow{3}{*}{\begin{tabular}[c]{@{}c@{}}$1$\\[-0.2cm]\tiny{\mbox{Ge et al.~\cite{GeEGHBB08}}, also see Appendix~\ref{chap:treeopt}}\end{tabular}}          & \begin{tabular}[c]{@{}c@{}}$1$\\[-0.2cm]\tiny{\mbox{Cor. \ref{thm:line:all}}}\end{tabular}                          & \multicolumn{2}{c}{\begin{tabular}[c]{@{}c@{}}$1$\\[-0.2cm]\tiny{\mbox{Lem. \ref{thm:opt_diam_line}}}\end{tabular}} \\
        \cline{1-1}\cline{3-5}
        $G$ is a star / tree   &                              &
        & \begin{tabular}[c]{@{}c@{}}$[2, 2]$\\[-0.2cm]\tiny{\mbox{Lem.  \ref{thm:tree_lowerbound_diameter}, Thm. \ref{thm:opt_kcenter_tree} }}\end{tabular} & 
        \\
        \cline{1-2}\cline{4-4}
        \small{Doubling dimension $D$} & {\begin{tabular}[c]{@{}c@{}}$ O(2^{3D})$\\[-0.2cm]\tiny{\mbox{Thm. \ref{thm:doubling_dimension}}}\end{tabular} }           & \begin{tabular}[c]{@{}c@{}}$[2, 2]$\\[-0.2cm]\tiny{\mbox{Lem. \ref{thm:lower_bound_non-disjoint_k-center}, Lem. \ref{lem:2-approx_non-disjoint}}}\end{tabular}  & {\begin{tabular}[c]{@{}c@{}}$ O(2^{3D})$\\[-0.2cm]\mbox{\tiny{Thm. \ref{thm:doubling_dimension}}}\end{tabular} } &     \begin{tabular}[c]{@{}c@{}}$[2, 2]$\\[-0.2cm]\mbox{\tiny{Lem.  \ref{thm:tree_lowerbound_diameter_non-disjoint}, Lem. \ref{lem:2-approx_non-disjoint}}}\end{tabular}                       \\
        \cline{1-2}\cline{4-4}
        \small{$L_p$ metric in dimension $d$} & {\begin{tabular}[c]{@{}c@{}}$ O(d^{2+1/p})$\\[-0.2cm]\mbox{\tiny{Thm. \ref{thm:euclidean_approx}}}\end{tabular} }           & & {\begin{tabular}[c]{@{}c@{}}$ O(d^{2+1/p})$\\[-0.2cm]\mbox{\tiny{Thm. \ref{thm:euclidean_approx}}}\end{tabular} } &                            \\
        \cline{1-2}\cline{4-4}
        No Restrictions & {\begin{tabular}[c]{@{}c@{}}$[3,O(\log^2 k)]$\\[-0.2cm]\mbox{\tiny{Thm. \ref{thm:lower_bound_4-center}, Thm. \ref{thm:log-approx_disjoint}}}\end{tabular} }           & & {\begin{tabular}[c]{@{}c@{}}$[2, O(\log^2 k)]$\\[-0.2cm]\mbox{\tiny{Lem. \ref{thm:tree_lowerbound_diameter}, Thm. \ref{thm:log-approx_disjoint}}}\end{tabular} } &

        \end{tabular}
    \end{adjustbox}
\caption{An overview of the bounds shown in this paper and the literature for connected $k$-clustering. The notation $[\ell, u]$ stands for a lower bound $\ell$ and an upper bound $u$ on the approximation factor (achievable in polynomial time and assuming $P\neq NP$). \label{table}}
\end{table}

\paragraph{Lines, stars and trees.}
Connected $k$-clustering demands that the clusters are connected in a given connectivity graph $G$. How tricky is this condition? Maybe it can actually \emph{help} to solve the problem?
This is true if $G$ is very simple, i.e., a line.

\begin{restatable}{corollary}{thmLineAll}
    \label{thm:line:all}
    When the connectivity graph $G$ is a line graph, then the connected $k$-center problem and the connected $k$-diameter problem can be solved optimally in time $O(n^2\log n)$ both with disjoint and non-disjoint clusters. This is true even if the distances are not a metric.
\end{restatable}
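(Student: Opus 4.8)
The plan is to reduce each of the four variants to the feasibility subroutine of Problem~\ref{p:subroutine} and to exploit the single structural fact that makes lines easy: in a path $v_1,\dots,v_n$ a vertex set induces a connected subgraph if and only if it is a \emph{contiguous interval} $\{v_a,\dots,v_b\}$. Hence a disjoint connected clustering is exactly a partition of the path into at most $k$ intervals, while a non-disjoint one is a cover of the path by at most $k$ intervals. Since the optimum of $k$-center/$k$-diameter is always one of the $O(n^2)$ pairwise distances (as recalled before Problem~\ref{p:subroutine}), it suffices to give, for a fixed guess $r$, a polynomial subroutine deciding whether the path admits such a partition or cover in which every interval is \emph{$r$-feasible}; here feasibility means diameter at most $r$ for the diameter objective, and the existence of an in-cluster center $v_c$ with $d(v_l,v_c)\le r$ for all points $v_l$ of the interval for the center objective. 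None of these definitions uses the triangle inequality, which is why the result survives for non-metric distances.

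For the diameter objective, feasibility is \emph{monotone}: every sub-interval of an $r$-feasible interval is $r$-feasible. This lets me run the standard left-to-right greedy that opens an interval at the leftmost uncovered vertex and extends it as far right as the diameter bound permits; a routine exchange argument shows it uses the minimum possible number of intervals, and monotonicity further shows that allowing overlaps cannot help, so the disjoint and non-disjoint diameter values coincide (this is Lemma~\ref{thm:opt_diam_line}). For the disjoint center objective I would instead use dynamic programming: set $P(i)$ to the minimum number of $r$-feasible intervals partitioning the prefix $v_1,\dots,v_i$ and use the recurrence $P(i)=\min\{P(a-1)+1 : [a,i]\text{ is }r\text{-feasible}\}$ with $P(0)=0$. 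This is exactly the line specialization of the tree algorithm (Appendix~\ref{chap:treeopt}); precomputing, for every candidate center $v_c$, the maximal interval around $c$ whose points lie within $r$ of $v_c$ yields all interval feasibilities within budget.

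The interesting case, and the one I expect to be the main obstacle, is the non-disjoint center objective, because here feasibility is \emph{not} monotone: an interval can be $r$-feasible while its prefixes are not, since the center serving it may sit at its far end. The example in Figure~\ref{fig:linegraphexample} already shows that overlaps then strictly help, so the disjoint and non-disjoint values genuinely differ and the partition greedy is invalid (truncating an optimal interval to mend a boundary may destroy feasibility, which breaks the exchange argument). Instead I would treat this as a one-dimensional \emph{interval point cover}: the family of $r$-feasible intervals is fixed, and I greedily cover the leftmost still-uncovered vertex $p$ by the $r$-feasible interval containing $p$ that reaches furthest to the right, then advance past its right endpoint. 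The classical exchange argument for interval cover shows this minimizes the number of intervals, and it runs in time $O(n^2)$ once, for each $p$, the furthest reach of an $r$-feasible interval through $p$ has been tabulated.

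Finally, to output exactly $k$ clusters when the computed minimum number $m$ of intervals is smaller (assuming $n\ge k$): in the non-disjoint case I simply add overlapping singleton clusters, which are trivially feasible. In the disjoint case I repeatedly peel a singleton off a multi-vertex interval from the end \emph{not} containing its serving center $v_c$ — the remaining interval keeps $v_c$ and stays feasible, while the peeled singleton is feasible as well — until $k$ intervals are reached. Wrapping all four subroutines into the binary-search scheme of Problem~\ref{p:subroutine} costs $O(n^2\log n)$ for sorting the distances plus $O(\log n)$ subroutine calls, giving the claimed total running time of $O(n^2\log n)$ in every case.
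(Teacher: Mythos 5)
Your proposal is correct and follows essentially the same route as the paper: guess the optimum among the $O(n^2)$ pairwise distances, solve the non-disjoint center case by the furthest-reach greedy interval cover with precomputed per-center reaches, handle both diameter variants via monotonicity of interval feasibility (so overlaps never help) plus the left-to-right greedy, and solve the disjoint center case by dynamic programming. The only cosmetic difference is that the paper obtains the disjoint $k$-center case by invoking its tree algorithm (Theorem~\ref{thm:opt_kcenter_tree}), whereas you inline the equivalent prefix recurrence directly on the line.
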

\begin{proof}
We only show how to solve the connected $k$-center problem with non-disjoint clusters. The full proof can be found in Section~\ref{sec:lineproofs}. 
The line graph $G$ is defined by vertices $V=\{v_1, v_2, ..., v_n\}$ and edges $E=\{\{v_i, v_{i+1}\}\mid i\in \{1,\ldots,n-1\}\}$.
Assume that $r$ is given.

 Notice that any connected cluster is a subpath of $G$.
   We start by precomputing for every $v_i$ how far a cluster with center at $v_i$ can stretch to the left and right: Let $a_i$ be the smallest $\ell$ such that $d(v_j,v_{j'})\le r$ for all $j,j' \in \{\ell,\ldots,i\}$ and let $b_i$ be the largest $\ell$ such that $d(v_j,v_{j'})\le r$ for all $j,j' \in \{i,\ldots,\ell\}$. 
	We can compute all $a_i$ and all $b_i$ in time $O(n^2)$.
   Now we cut the line into clusters. We start by finding an index~$i$ with $a_i=1$ for which $b_i$ is as large as possible because we have to cover the first vertex and want to cover as many other vertices as possible. We place a center at $v_i$ and know that all vertices until $v_{b_i}$ are covered by the cluster.
Now we know that the next cluster has to contain $v_{b_i+1}$, so we search for an $i'$ 
 which satisfies $b_i+1 \in \{a_{i'}, \ldots,b_{i'}\}$, if there are multiple, we take the one with maximum $b_{i'}$. This finds the center which covers $v_{b_i+1}$ and the largest number of additional vertices. 
We place a center at $v_{i'}$. It may be that $i' < i $ as in Figure~\ref{fig:linegraphexample}) and thus the clusters have to overlap (recall that we are in the non-disjoint case).
The process is iterated until $v_n$ is covered. 
   If the number of clusters is more than $k$, we report that $r$ was too small, otherwise, we report the clustering. This way we solve Problem~\ref{p:subroutine} for~$\alpha=1$ in time $O(n^2)$.
\end{proof}

For trees, $k$-center and $k$-diameter differ. Surprisingly, the connected $k$-diameter problem is already NP-hard if $G$ is a star. We prove the following lemma by a reduction from the uniform minimum multicut problem on stars in Section~\ref{sec:hardness:star:diameter}.

\begin{restatable}{lemma}{lemLowerBound}
\label{thm:tree_lowerbound_diameter}
Let~$\epsilon>0$. Assuming P~$\neq$~NP, there is no $(2-\epsilon)$-approximation algorithm for the connected k-diameter problem with disjoint clusters even if $G$ is a star.
\end{restatable}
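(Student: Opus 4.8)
The plan is to reduce from the \emph{uniform minimum multicut problem on stars}, whose decision version is NP-hard: it is equivalent to \textsc{Vertex Cover}, since removing the star edge to a leaf corresponds to selecting that leaf, and a demand pair of leaves is separated exactly when at least one of its two endpoints is selected. An instance thus consists of a star with center $c$ and leaves $v_1,\dots,v_m$, a set $D$ of demand pairs among the leaves, and a budget $B$; the question is whether $B$ edges can be removed so that no demand pair stays connected, i.e.\ whether the demand graph on the leaves has a vertex cover of size $B$.

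From such an instance I would build a connected $k$-diameter instance as follows. The point set is $\{c,v_1,\dots,v_m\}$ and the connectivity graph $G$ is exactly the given star. I define the metric by setting $d(v_i,v_j)=2$ for every demand pair $\{v_i,v_j\}\in D$ and $d(x,y)=1$ for all other pairs of distinct points (in particular all center-to-leaf distances are $1$). Since all distances lie in $\{1,2\}$, the triangle inequality holds automatically, so $d$ is a metric. Finally I set $k=B+1$, assuming without loss of generality that $B<m$ so that the instance is non-degenerate.

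The key structural observation is that a star admits very few connected partitions: a connected vertex subset either contains $c$ (and may then be $c$ together with an arbitrary set of leaves) or is a single leaf. Hence in every feasible solution exactly one cluster contains $c$ and all remaining clusters are singleton leaves; counting clusters then forces precisely $k-1=B$ leaves to be singletons, so the center cluster is $\{c\}$ together with the remaining $m-B$ leaves. A feasible solution is therefore nothing but a choice of $B$ leaves to ``cut off'', its diameter equals the diameter of the center cluster, and the singletons contribute $0$. Because center-to-leaf distances are $1$, this diameter is $1$ if the retained leaves contain no demand pair and $2$ otherwise. Consequently the optimal connected $k$-diameter value equals $1$ exactly when some set of $B$ cut-off leaves covers all demand pairs (padding with arbitrary extra leaves if the minimum multicut is strictly smaller than $B$ only removes more leaves from the center cluster and keeps the diameter at $1$), and it equals $2$ otherwise.

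This yields the claimed hardness. A polynomial-time $(2-\epsilon)$-approximation algorithm would, on a YES-instance, return a clustering of diameter at most $(2-\epsilon)\cdot 1<2$ and hence of diameter exactly $1$, whereas on a NO-instance every clustering, and in particular its output, has diameter $2$; comparing the diameter of the returned clustering against the threshold would therefore decide the NP-hard multicut question in polynomial time, forcing P$=$NP. The only points that require care are the structural claim characterizing connected subsets of a star and the bookkeeping that ties the number of clusters to the multicut budget through $k=B+1$; the factor-$2$ gap itself is hard-wired into the two-valued metric, so no approximation hardness of multicut (and in particular no Unique Games assumption) is needed, which is exactly what makes the bound hold under P$\neq$NP alone.
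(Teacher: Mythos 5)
Your proposal is correct and follows essentially the same route as the paper: both reduce from the uniform minimum multicut problem on stars, use the identical two-valued metric (distance $2$ for demand pairs, $1$ otherwise), set the number of clusters to the multicut budget plus one, and exploit the same structural fact that connected clusters in a star are either a single leaf or the set containing the root, so that the optimum is $1$ precisely when the cut-off leaves separate all demand pairs and $2$ otherwise. The only cosmetic differences are that you justify NP-hardness of the multicut problem via its equivalence to Vertex Cover on the demand graph, whereas the paper cites Garg--Vazirani--Yannakakis, and you add the (correct) padding remark for budgets larger than the minimum multicut.
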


Notice how the connected $k$-diameter problem with $G$ being a star is thus very different from the $k$-diameter problem where the \emph{metric} is given by a graph metric that is a star. The latter problem can be solved optimally by sorting the edges by weight and then deleting the $k-1$ most expensive edges to form $k$ connected components which form an optimal clustering. Say we have distances $d(e_1) \ge d(e_2) \ge \ldots \ge d(e_n)$, then this optimal clustering has cost $d(e_{k})+d(e_{k+1})$. However, any clustering that keeps an edge from $\{e_1,\ldots,e_{k-1}\}$ costs at least $d(e_{k+1})+d(e_{k-1}) \ge d(e_{k})+d(e_{k+1})$ since it deletes at most $k-1$ edges.

Ge et al.~\cite{GeEGHBB08} show that the connected $k$-center problem is still solvable optimally for trees by dynamic programming. In Appendix~\ref{chap:treeopt} we explain a similar derivation of a dynamic program for trees.

\begin{restatable}{theorem}{thmTreeOpt}[Ge et al.~\cite{GeEGHBB08}]
    \label{thm:opt_kcenter_tree}
    When the connectivity graph $G$ is a tree, then the connected $k$-center problem with disjoint clusters can be solved optimally in time $O(n^2\log n)$. This is true even if the distances are not a metric.
\end{restatable}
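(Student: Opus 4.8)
The plan is to use the reduction to Problem~\ref{p:subroutine} described above: after sorting the $O(n^2)$ candidate distances in time $O(n^2\log n)$ and binary searching over them, it suffices to give a subroutine that, for a fixed radius $r$, decides feasibility exactly (i.e.\ achieves $\alpha=1$) and, if feasible, outputs a clustering. I would root the tree $G$ at an arbitrary vertex $\rho$ and exploit the fact that a disjoint connected clustering of a tree is exactly a partition of $V$ into connected subtrees, obtained by cutting a subset of the tree edges. Moreover, it suffices to compute the \emph{minimum} number $m$ of connected clusters of radius at most $r$ needed to cover $V$ and to compare it with $k$: if $m\le k$ one can always refine the solution to exactly $k$ clusters by repeatedly splitting off a single leaf of some cluster's induced subtree as its own radius-$0$ cluster, which stays connected and keeps the remaining part within radius $r$ of its old center. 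The subroutine therefore reduces to computing $m$ by a bottom-up dynamic program.

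For a vertex $v$ let $T_v$ denote the subtree rooted at $v$. I would define $f(v,c)$, for every vertex $v$ and every candidate center $c\in V$ with $d(v,c)\le r$, as the minimum number of clusters that lie entirely inside $T_v$ and are different from the cluster containing $v$ (the \emph{open} cluster), under the assumption that the open cluster uses $c$ as its center and that all of its vertices inside $T_v$ are within distance $r$ of $c$. Crucially, $c$ is allowed to lie outside $T_v$, modelling the situation where the open cluster is completed, and its center anchored, higher up in the tree. For a leaf, $f(v,c)=0$ whenever $d(v,c)\le r$. Writing $h(w)=1+\min_{c'\in T_w}f(w,c')$ for the minimum number of clusters needed to cover all of $T_w$ with the topmost cluster closed (so its center $c'$ must lie in $T_w$), the transition at an internal vertex $v$ with children $v_1,\dots,v_t$ processes each child independently: for a child $v_j$ one may either \emph{cut} the edge $(v,v_j)$, contributing $h(v_j)$, or \emph{merge} it into $v$'s cluster, contributing $f(v_j,c)$ with the same center $c$. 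The overall answer is $\min_{c\in V} f(\rho,c)+1$, which is then compared against $k$.

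The main subtlety, and the step I expect to be the real obstacle, is the requirement that a center belong to its own cluster. This means that the unique child $v_i$ whose subtree actually contains $c$ must not be cut: that edge is \emph{forced} to be a merge, and $f(v_i,c)$ carries the invariant that $c$ really sits inside $v_i$'s open cluster. Only the remaining children get the free choice $\min\{f(v_j,c),h(v_j)\}$. Closing a cluster via $h$ correctly demands $c'\in T_w$, so a center that is still pending outside the current subtree can only be anchored once the enclosing cluster is finally closed, which at the root is always possible since $T_\rho=V$. Getting this anchoring exactly right, and tracking which edges are cut in order to reconstruct the witness clustering, is where the care lies. Note that the program only ever evaluates predicates of the form $d(\cdot,c)\le r$ and never invokes the triangle inequality or symmetry, which is precisely why the result holds for arbitrary, possibly non-metric, distances.

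Finally I would bound the running time. The values $h(w)$ are computed once per vertex from the already-available table $f(w,\cdot)$. For a fixed $v$, computing $f(v,c)$ for all $n$ choices of $c$ costs $O(t_v)$ per center (for each child, a constant-time ancestor test decides whether it must be merged, and otherwise one takes the cheaper of merge and cut), hence $O(n\cdot t_v)$ in total; summing over all vertices gives $\sum_v n\,t_v=n(n-1)=O(n^2)$ for a single value of $r$. Combined with the $O(\log n)$ binary-search steps and the $O(n^2\log n)$ preprocessing, the total running time is $O(n^2\log n)$, as claimed.
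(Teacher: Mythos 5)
Your proposal is correct and follows essentially the same route as the paper's proof in Appendix~\ref{chap:treeopt}: binary search over the $O(n^2)$ candidate radii combined with a bottom-up dynamic program on the rooted tree over (vertex, center) pairs that minimizes the number of clusters, with the crucial case distinction of whether the open cluster's center lies inside or outside the current subtree and a forced merge along the path from a vertex to its center. Your tables $f(v,c)$ and $h(w)$ correspond (up to a shift by one and indexing by the vertex rather than its parent) to the paper's $I(a,b)$ and $F(a,b)$, and your incremental checks $d(\cdot,c)\le r$ along forced-merge paths are equivalent to the paper's precomputed path-maximum distances $d'(b,a)\le r$.
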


It follows immediately that the (metric) connected $k$-diameter problem with disjoint clusters on trees can be $2$-approximated by the same algorithm. This is tight because of Lemma~\ref{thm:tree_lowerbound_diameter}.

\paragraph{General $G$, non-disjoint clusters.} 
The connected $k$-center and $k$-diameter problems with non-disjoint clusters behave similarly to the unconstrained versions. On the positive side, there is a $2$-approximation; on the negative side, it is NP-hard to approximate these problems better than $2$. In contrast to the case of disjoint clusters, APX-hardness starts with stars for \emph{both} $k$-center and $k$-diameter.
We show this via reductions from clique cover and set cover in Section~\ref{sec:non-disjoint}.

\begin{restatable}{corollary}{thmNonDisjointBoth}
    \label{thm:nondisjoint:both}
    Let $\epsilon>0$.
    Assuming P~$\neq$~NP,  there is no $(2-\epsilon)$-approximation algorithm for the connected k-diameter problem with non-disjoint clusters, even if $G$ is a star. The same is true for the connected $k$-center problem with non-disjoint clusters.
\end{restatable}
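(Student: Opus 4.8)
The plan is to prove both non-disjoint hardness statements by reductions that produce instances in which all pairwise distances lie in $\{1,2\}$, using the standard fact that a symmetric zero-diagonal function with values in $\{1,2\}$ is automatically a metric (every triangle has longest side $\le 2 \le 1+1$). On such an instance the optimum is either $1$ or $2$, so a polynomial-time $(2-\epsilon)$-approximation would have to return a solution of cost $<2$, hence of cost exactly $1$, precisely when a cost-$1$ solution exists. It would therefore decide in polynomial time whether the constructed instance admits a $k$-clustering of cost $1$, and it remains only to arrange that this decision is NP-hard. This is exactly the gap template behind the unconstrained lower bounds, recast through Problem~\ref{p:subroutine}.

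For the \emph{$k$-diameter} statement I would reduce from clique cover (equivalently, coloring the complement graph), which is NP-hard to decide against a target $k$. Given a graph $H=(U,F)$, let the star have center $c$ and one leaf per vertex of $U$, and set $d(c,u)=1$ for every leaf $u$, $d(u,w)=1$ if $\{u,w\}\in F$, and $d(u,w)=2$ otherwise. Since the star forces every cluster of size $\ge 2$ to contain $c$, and $c$ is at distance $1$ from all leaves, a cluster $\{c\}\cup S$ has diameter $1$ exactly when $S$ is a clique of $H$ and diameter $2$ otherwise, while singletons have diameter $0$. Here non-disjointness is essential, as every clique-cluster reuses $c$. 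Consequently a non-disjoint $k$-clustering of diameter $\le 1$ covering $U$ exists if and only if $U$ can be covered by $k$ cliques of $H$, so deciding diameter $\le 1$ is NP-hard and the gap argument applies.

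For the \emph{$k$-center} statement the same construction fails: with $d(c,\cdot)=1$ one could center a single cluster at $c$ and cover everything at radius $1$. I would instead reduce from set cover, using both set points and a gadget that neutralizes the star center. Given a universe $X$ and sets $S_1,\dots,S_m$, build the star with center $c$, a leaf $p_i$ per set and a leaf $q_j$ per element, and set $d(c,p_i)=d(p_i,p_{i'})=1$, $d(p_i,q_j)=1$ if $x_j\in S_i$ and $2$ otherwise, and $d(c,q_j)=d(q_j,q_{j'})=2$ for $j\ne j'$. Then a radius-$1$ cluster centered at a set point $p_i$ (which, having size $\ge 2$, must contain $c$) covers $c$, all set points, and exactly the element leaves of $S_i$; whereas any radius-$1$ cluster covering an element leaf $q_j$ must be centered at $q_j$ (and is then forced to be the singleton $\{q_j\}$, since $d(q_j,c)=2$ blocks connectivity at radius $1$) or at some $p_i$ with $x_j\in S_i$. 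Reusing $c$ and the set points across clusters (again via non-disjointness), one checks that a non-disjoint $k$-clustering of radius $\le 1$ exists if and only if $X$ has a set cover of size $k$, which is NP-hard to decide, and the gap argument again rules out a $(2-\epsilon)$-approximation.

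The main obstacle in both reductions is controlling the forced inclusion of the star center dictated by connectivity, and this is exactly where $k$-center and $k$-diameter diverge. For diameter, placing $c$ at distance $1$ from every leaf makes its mandatory presence harmless, and clique cover transfers directly. For center, the freedom to choose the cluster center would otherwise let $c$ absorb every point at radius $1$; the set-point/element-point gadget with $d(c,q_j)=2$ is designed precisely so that covering an element still requires paying for a set that contains it, which is what encodes set cover. Verifying the two if-and-only-if correspondences — in particular that non-disjoint overlaps and singleton clusters cannot undercut the cover number (assuming, without loss of generality, every element lies in some set) — is the only part that needs care, and it reduces to the routine case analysis sketched above.
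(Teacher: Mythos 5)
Your proposal is correct and matches the paper's own proof essentially step for step: the paper likewise proves the diameter case by a reduction from clique cover on a star (root at distance $1$ from all leaves, leaf distances $1$ or $2$ according to adjacency) and the center case by a reduction from set cover with the same element/set-leaf gadget, including the crucial choices $d(c,q_j)=2$ and the WLOG assumption that every element lies in some set, followed by the same $\{1,2\}$-gap argument. No gaps to report.
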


For the positive result, the classical result by Hochbaum and Shmoys~\cite{HS86} can be used. For the unconstrained $k$-center problem, Problem~\ref{p:subroutine} for $\alpha=2$ can be solved as follows: Given input $V$, $k$, and a radius $r$, one picks an arbitrary point~$x\in V$ and puts all nodes within distance~$2r$ of~$x$ into one cluster. When~$r$ is at least the radius of the optimal $k$-clustering, this cluster will contain all nodes that are in the same optimal cluster as~$x$. The cluster is then removed from~$V$ and the process is repeated until all nodes are covered. If the number of clusters is at most $k$, the solution is returned, otherwise, it is reported that $r$ was too small.

This algorithm can easily be adapted to the connected $k$-center problem with non-disjoint clusters by the following observation: Let~$x$ and~$y$ be two nodes from the same optimal cluster with center~$c$ and radius~$r$. Then $x$ and~$y$ are connected in the connectivity graph by a path that contains only nodes within distance~$2r$ from~$x$ and~$y$. 
So the algorithm is:
When a node~$x$ is selected, put all nodes into a cluster that have distance at most $2r$ from~$x$ and are reachable from~$x$ in the connectivity graph via a path on which all nodes have a distance of at most~$2r$ from~$x$. This set can be determined by the BFS-type algorithm \texttt{ComputeCluster} (see Algorithm~\ref{alg:compute_Cluster:intro} with $R=2r$). Say the resulting cluster is $T$. Do not remove $T$ from $G$ but only mark all nodes in $T$ as covered.
As long as there are uncovered nodes, pick an arbitrary such node and form a cluster of radius $2r$ around it (in general this cluster will also contain nodes that are already covered). This will result in at most $k$ connected clusters with radius~$2r$ if $r$ is at least the radius of an optimal connected $k$-clustering. We call this algorithm \texttt{GreedyClustering}.

\begin{algorithm}
{\small
   \caption{\textsc{ComputeCluster}$(G,M,R,c)$}\label{alg:compute_Cluster:intro}

   \KwIn{points $V$, graph $G=(V,E)$, metric $M=(V,d)$, radius $R$, node $c\in V$}
      $T\gets \{c\}$\;
      $N \gets \{u\in V\setminus T\mid \exists v\in T, (v,u)\in E: d(u, c)\le R\}$\;

      \While{$N\neq \emptyset$}
      {
         $T \gets T \cup N$\;
         $N \gets \{u\in V\setminus T\mid \exists v\in T, (v,u)\in E: d(u, c)\le R\}$\;
      }
   \KwOut{cluster $T$}
}
\end{algorithm}

The same algorithm works for the connected $k$-diameter problem when \texttt{ComputeCluster} is evoked with $R=r$ (not $2r$) if $r$ is at least the optimal diameter. By adding all points in distance $r$ to the cluster of the chosen center $x$, it is ensured that the optimum cluster is added if $r$ is at least the optimum value (since the distance between two points is then at most $r$). Furthermore, the resulting cluster has diameter at most $2r$ by the triangle inequality.

\begin{lemma}
   \label{lem:2-approx_non-disjoint}
   There exists a $2$-approximation algorithm for the connected $k$-center problem with non-disjoint clusters and also for the connected $k$-diameter problem with non-disjoint clusters.
\end{lemma}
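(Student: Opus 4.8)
The plan is to invoke the binary-search framework set up before Problem~\ref{p:subroutine}: it suffices to give, for a fixed candidate value $r$, a polynomial-time subroutine that either returns a connected (non-disjoint) clustering with at most $k$ clusters of cost at most $2r$, or correctly reports that no solution of cost $r$ exists. I would use exactly the \texttt{GreedyClustering} procedure described above, calling \texttt{ComputeCluster}$(G, M, R, x)$ with $R = 2r$ for the $k$-center objective and with $R = r$ for the $k$-diameter objective, on a freshly chosen uncovered node $x$ in each round, marking (but not deleting) the returned set as covered, and repeating until every node of $V$ is covered.

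Three things then need to be checked. First, feasibility: every returned set $T$ induces a connected subgraph of $G$, which is immediate because \texttt{ComputeCluster} grows $T$ in BFS fashion, only ever adding a node $u$ that is adjacent in $G$ to a node already in $T$; and the loop terminates only once all of $V$ is covered. Second, the cost bound: in the $k$-center case every node placed in the cluster of $x$ satisfies $d(u,x) \le 2r$ by the defining condition of \texttt{ComputeCluster}, so using $x$ as the center gives radius at most $2r$; in the $k$-diameter case every node satisfies $d(u,x) \le r$, so any two nodes $u,v$ of the cluster have $d(u,v) \le d(u,x) + d(x,v) \le 2r$ by the triangle inequality. Both bounds hold regardless of whether $r \ge \mathrm{OPT}$.

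The main work is the third point: bounding the number of clusters by $k$ whenever $r$ is at least the optimal value. Here I would use the structural observation already isolated in the text. Fix an optimal solution and let $x$ be the uncovered node selected in some round; let $C^*$ be the optimal cluster containing $x$, with center $c$ (in the $k$-center case) or diameter at most $r$ (in the $k$-diameter case). For any $y \in C^*$, connectivity of $C^*$ in $G$ gives a path from $x$ to $y$ whose nodes all lie in $C^*$; every such node $z$ satisfies $d(x,z) \le d(x,c) + d(c,z) \le 2r$ in the $k$-center case (each summand is at most $r$) and $d(x,z) \le \mathrm{diam}(C^*) \le r$ in the $k$-diameter case. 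Thus this path is entirely admissible for \texttt{ComputeCluster} with the chosen radius, so the cluster grown from $x$ contains all of $C^*$. Consequently each round covers at least one optimal cluster that previously still had an uncovered node, so the number of rounds --- and hence the number of clusters produced --- is at most $k$. The contrapositive gives correct rejection: if the subroutine produces more than $k$ clusters, then $r$ is below the optimum.

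The anticipated obstacle is not any single step but making the covering argument airtight: one must verify that covering $C^*$ strictly decreases the number of optimal clusters still containing an uncovered node (so the count telescopes to at most $k$), and that the admissibility of the connecting path really only needs distances to $x$, the node passed as center to \texttt{ComputeCluster}, rather than to the optimal center $c$. Once these are pinned down, polynomial running time is clear since \texttt{ComputeCluster} is a BFS and at most $n$ rounds occur, and the overall $2$-approximation follows from the binary-search reduction.
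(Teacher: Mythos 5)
Your proposal is correct and follows essentially the same route as the paper: it runs \texttt{GreedyClustering} with \texttt{ComputeCluster} at radius $2r$ (center) respectively $r$ (diameter), shows that for $r$ at least the optimum the grown cluster swallows the entire optimal cluster of the selected uncovered node (so distinct rounds consume distinct optimal clusters, giving at most $k$ clusters), and bounds the cost by $2r$ via the defining distance condition and the triangle inequality, wrapped in the standard binary-search reduction of Problem~\ref{p:subroutine}. The telescoping count you spell out is exactly the paper's observation (Lemma~\ref{lemma:RadiusWithin2Opt}) that all chosen centers must lie in distinct optimal clusters.
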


\paragraph{General case.}
The disjoint case for general connectivity graphs is more challenging. We start with algorithm \texttt{GreedyClustering} from the previous paragraph on the non-disjoint case. Notice that in general, the output of this algorithm is not node-disjoint. We could opt to delete the nodes in $T$ computed by Algorithm~\ref{alg:compute_Cluster:intro} to enforce disjointness, however, the problem is this:
\begin{figure}
\centering
\begin{tikzpicture}[scale=1]
 \node [circle,draw,inner sep=0cm, minimum width=0.45cm] (x) at (0,0) {$x$};
 \node [circle,draw,inner sep=0cm, minimum width=0.45cm] (y) at (0,1) {$u$};
 \node [circle,draw,inner sep=0cm, minimum width=0.45cm] (z) at (1.5,-0.5) {$e$};
 \node [circle,draw,inner sep=0cm, minimum width=0.45cm] (a) at (1.5,0.5) {$z$};
 \node [circle,draw,inner sep=0cm, minimum width=0.45cm] (b) at (1.5,1.5) {$c$};
 \draw (y) to node [label=left:{$r$}] {} (x);
 \draw (x) to node [label=above:{$2r$}] {} (a);
 \draw (a) to node [label=right:{$r$}] {} (z);
 \draw (a) to node [label=right:{$r$}] {} (b);
 \node [anchor=west] at (2.5,0.5){\begin{minipage}{8cm}\small The optimal connected $2$-clustering  has centers $x$ and $z$ with clusters $\{x,u\}$ and $\{z,c,e\}$ and a radius of~$r$. 
 The greedy algorithm started with~$x$ forms $\{x,u,z\}$ as the first cluster. After that, only $c$ and $e$ remain. Without $z$, they are not connected anymore and have to go into different clusters.\end{minipage}};
\end{tikzpicture}
\caption{An example where greedy disconnects an optimum cluster.}
\label{fig:BadExampleGreedy:intro}
\end{figure}
The first cluster that the algorithm forms around a vertex~$x$ is guaranteed to be a superset of the optimal cluster that~$x$ is contained in. It might be a strict superset and contain a node that belongs to a different optimal cluster. This node will get removed from~$G$ together with all other nodes in the cluster around~$x$. However, its removal might make the optimal cluster it is contained in unconnected. This is problematic because then $k$ connected clusters might not suffice anymore to cover all points from~$G$ even if we guessed the optimal radius~$r$ correctly. See Figure~\ref{fig:BadExampleGreedy:intro} for an example where this happens.

We first compute the output of the algorithm \texttt{GreedyClustering} for the non-disjoint case from the previous paragraph. In general this is a set of non-disjoint clusters that cover all points. We transform this set into a set of pairwise disjoint clusters that cover all points at the cost of increasing the radius or diameter. This transformation has to be performed very carefully in order to not increase the radius or diameter by too much.

Let $C$ with $|C|\le k$ denote the set of centers around which the non-disjoint clusters have been formed by the algorithm and let~$R$ denote their radius. The following two observations are helpful: (1) When two centers are more than $2R$ apart then their corresponding clusters are disjoint. (2) If a set of centers have pairwise distance at most~$L$ then merging the corresponding clusters results in a single cluster with radius at most $R+L$ and diameter at most $2R+L$. 

If it is possible to partition the centers into groups such that all centers within the same group have a distance of at most~$L$ and all centers from different groups have a distance of more than~$2R$, we could make the clusters disjoint as follows: as long as there are two non-disjoint clusters whose centers are in the same group of the partition, merge them into a single cluster. In the end, the algorithm will return no more than $|C|\le k$ clusters. By isolating some singletons as new clusters, we obtain a solution with exactly $k$ clusters as required without worsening the solution. 
After this, all clusters whose centers are in the same group are disjoint (if not they would have been merged) and clusters whose centers are in different groups are disjoint because their centers are far enough from each other. Hence, such a partition results in a solution with disjoint clusters with radius $R+L$ and diameter $2R+L$. A key idea in our algorithm for the general case is to find such a partition of the centers in~$C$ with small~$L$. However, observe that this is not possible in general. A simple counterexample would be that all centers are equally spaced on a line with distance~$R$ between two consecutive centers. Then all centers have to be in the same group and~$L$ would be $(k-1)R$, resulting in an approximation factor of $\Omega(k)$.

To circumvent this problem, we do not partition all centers from~$C$ at once but we start with a partition of a subset of~$C$ that satisfies the properties above (i.e., centers in the same group have distance at most~$L$, while centers in different groups have a distance of more than~$2R$). We call this the first layer of the partition. Then we remove all centers contained in the first layer from~$C$ and proceed with the remaining centers analogously: Let~$C'$ denote the set of centers not contained in the first layer. We find a partition of a subset of~$C'$ that satisfies the properties above and call this the second layer of the partition. We repeat this process until all points from~$C$ are in some layer. We call such a partition a \emph{well-separated partition}. Figure~\ref{fig:WellSeparatedPartition} shows possible partitions for the example above. 

\begin{figure}
  \centering
  \includegraphics[scale=0.8,page=1]{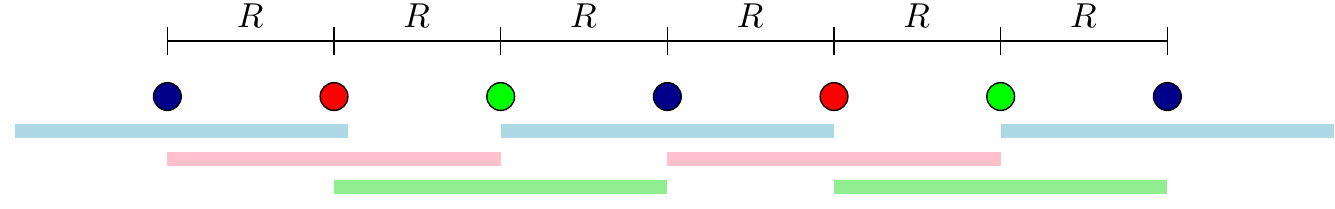}\\
  \centering
  \includegraphics[scale=0.8,page=2]{external/WellSeparated.pdf}
  \caption{We consider an instance with 7 centers on a line where consecutive centers have a distance of~$R$. The top figure shows a well-separated partition of this instance with~$L=0$ and~$\ell=3$ layers. The colors depict the different layers and the colored rectangles depict the clusters of radius~$R$ around these centers. On the blue layer there are, e.g., three groups where each group consists of a single blue center. The bottom figure shows a well-separated partition of the same instance with~$L=R$ and~$\ell=2$. The blue layer contains two groups of two centers each, while the red layer contains two groups, one with two centers and one with only one center.}
  \label{fig:WellSeparatedPartition}
\end{figure}

It is not clear at first glance why a well-separated partition is helpful for obtaining a solution with disjoint clusters. For every layer of the partition, we can use the reasoning above. That is, we merge all non-disjoint clusters whose centers are in the same group to obtain disjoint clusters with radius $R+L$ and diameter $2R+L$. However, a cluster is then only disjoint from all clusters on the same layer but in general not from clusters on other layers (see Figure~\ref{fig:WellSeparatedPartition}). A main ingredient of our algorithm is a non-trivial way to merge clusters on different layers. For this, we add the layers one after another. Consider the case of two layers. The clusters from the first layer are disjoint from each other. We add the clusters of the second layer one after another. For each cluster from the second layer, we first check with which clusters from the first layer it overlaps. If there is more than one, we split the cluster from the second layer into multiple parts and merge the parts with different clusters from the first layer with which they overlap. This is done in such a way that the final result is a set of disjoint connected clusters. We prove with an inductive argument that the radius and diameter of these clusters is $O(\ell\cdot L)$, where~$\ell$ denotes the number of layers of the well-separated partition.

With the discussion above, finding a good approximation algorithm is reduced to finding an efficient algorithm for computing a well-separated partition with small~$L$ and few layers. For general metrics, we present an efficient algorithm that computes a well-separated partition for~$L=O(R\cdot\log{k})$ and~$\ell=O(\log{k})$. This yields a clustering of radius and diameter $O(R\cdot \log^2{k})$. Details can be found in the proof of Theorem~\ref{thm:log-approx_disjoint} below.
We give better results for computing well-separated partitions for $L_p$-metrics and metric spaces with bounded doubling dimension in Theorem~\ref{thm:euclidean_approx}
and Theorem~\ref{thm:doubling_dimension}. 
Overall, we get the following results.

\begin{theorem}
   \label{thm:all results}
   There exists an $O(\log^2{k})$-approximation algorithm for the connected $k$-center problem with disjoint clusters and for the connected $k$-diameter problem with disjoint clusters. The approximation ratio improves
   \begin{itemize}
    \item to $O(2^{3\cdot\mathrm{dim}(M)})$ if the metric space has bounded doubling dimension $\mathrm{dim}(M)$, and
    \item to $O(d^{2+1/p})$ if the distance is an $L_p$-metric in~$\mathbb{R}^d$.
   \end{itemize}
\end{theorem}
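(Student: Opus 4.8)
The plan is to cast both problems in the binary-search framework of Problem~\ref{p:subroutine}: after sorting the $O(n^2)$ candidate costs, it suffices to give, for a guessed value $r$ that is at least the optimum, a polynomial-time subroutine producing a \emph{disjoint} connected clustering of cost $\alpha r$, with $\alpha=O(\log^2 k)$ in general, $\alpha=O(2^{3\,\mathrm{dim}(M)})$ under bounded doubling dimension, and $\alpha=O(d^{2+1/p})$ for an $L_p$-metric in $\mathbb{R}^d$. First I would invoke \texttt{GreedyClustering} (Lemma~\ref{lem:2-approx_non-disjoint}) with $R=2r$ for $k$-center and $R=r$ for $k$-diameter; when $r$ is large enough this yields a set $C$ of at most $k$ centers together with non-disjoint connected clusters of radius $R$ covering $V$ (otherwise $r$ is reported too small). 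All remaining work is to disjointify these clusters while keeping them connected and bounding the radius blow-up.

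The engine is a well-separated partition (Definition~\ref{def:wellseparated}) of $C$ into $\ell$ layers in which, on each layer, two centers in a common group have distance at most $L$ while two centers in different groups have distance more than $2R$. Given such a partition I would disjointify layer by layer. On a single layer I merge all clusters whose centers share a group; by observation~(2) of the overview each merged cluster is connected with radius $R+L$ and diameter $2R+L$, and the $2R$-separation of distinct groups makes these pairwise disjoint within the layer. Combining layers is the delicate part: to add layer $j$ to the disjoint clustering already built from layers $1,\dots,j-1$, a layer-$j$ cluster may meet several existing clusters, so I split it along them and attach each piece to an existing cluster it touches, arranging the splits so that connectivity survives and no two final clusters intersect. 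I would establish by induction on $j$ that after processing $j$ layers the clusters have radius and diameter $O(j\cdot L)$; balancing the geometry of this splitting against the $2R$-separation in the inductive step is what I expect to be the main obstacle, and it is what forces the extra factor $\ell$ (hence the quadratic-looking bounds). A final singleton-isolation step restores exactly $k$ clusters without increasing the cost.

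It then remains to construct well-separated partitions with small $L$ and few layers, which specializes to the three bounds. For general metrics I would peel off layers by a recursive region-growing decomposition of the graph that joins centers at distance $\le 2R$: its depth is $O(\log k)$ because there are at most $k$ centers, and each group can be carved with diameter $O(R\log k)$, giving $\ell=O(\log k)$ and $L=O(R\log k)$, hence radius $O(R\log^2 k)=O(r\log^2 k)$ (Theorem~\ref{thm:log-approx_disjoint}). For a metric of doubling dimension $\mathrm{dim}(M)$ I would instead exploit that a ball of radius $2R$ is covered by $2^{O(\mathrm{dim}(M))}$ balls of radius $R$: this caps how many pairwise-far groups can coexist locally and lets one build a partition with $\ell=2^{O(\mathrm{dim}(M))}$ layers and $L=O(R)$, the constants working out to the stated $O(2^{3\,\mathrm{dim}(M)})$ (Theorem~\ref{thm:doubling_dimension}). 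For an $L_p$-metric in $\mathbb{R}^d$ I would use a coordinate-based recursive splitting tuned to the $L_p$ geometry, where $L_\infty$- and $L_p$-diameters differ by a $d^{1/p}$ factor, obtaining a well-separated partition with $\ell\cdot L=O(d^{2+1/p})\cdot R$ (Theorem~\ref{thm:euclidean_approx}). Finally, the total running time is the $O(n^2\log n)$ preprocessing of Problem~\ref{p:subroutine} times the polynomial cost of a single subroutine call.
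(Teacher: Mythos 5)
Your proposal is correct and follows essentially the same route as the paper: \texttt{GreedyClustering} to get a non-disjoint clustering, the well-separated partition of the centers with layer-by-layer merging and tree-splitting to disjointify (Lemma~\ref{lem:approx_partition} and its diameter analogue), and then the same three partition constructions --- region growing for general metrics, ball covering plus greedy coloring for doubling dimension, and a grid coloring for $L_p$-metrics. The only slight imprecision is stating the inductive bound as $O(j\cdot L)$ rather than $O\bigl(j\cdot(R+L)\bigr)$, which is what the paper's induction actually yields, but this does not affect any of the three final bounds since $L=\Omega(R)$ in each application.
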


In addition, we study how to compute well-separated partitions if the number of clusters is small, particularly when $k=2$. We obtain a $2$-approximation algorithm for the connected $k$-center problem with disjoint clusters in Corollary~\ref{cor:2approx-2freecenters} and a $4$-approximation algorithm for the connected $k$-diameter problem with disjoint clusters in Corollary~\ref{cor:3and4approx-2givencenters}.

It is an intriguing question if better well-separated partitions exist for general metrics and for the special metrics that we have considered. By our framework, better partitions would immediately give rise to better approximation factors.

We show a lower bound of~3 on the approximability of the connected $k$-center problem with disjoint clusters (Theorem~\ref{thm:lower_bound_4-center}). A lower bound of~2 for the approximability of the connected $k$-diameter problem with disjoint clusters follows from the lower bound of~2 for the standard $k$-diameter problem without connectivity constraint. In addition to these lower bounds, we also prove a lower bound of $\Omega(\log\log{k})$ for our algorithmic framework. To be precise, we construct an instance together with a set of $k$ centers~$C$ that could be produced by the algorithm \texttt{GreedyClustering} such that even the optimal disjoint solution with centers~$C$ is worse than the optimal disjoint solution for arbitrary centers by a factor of~$\Omega(\log\log{k})$. Hence, to prove a constant-factor approximation one cannot rely on the centers chosen by \texttt{GreedyClustering}.

\section{Connected Clustering with General Connectivity Graphs}
\label{chap:graph}

In this section, we study the connected $k$-center problem and the connected $k$-diameter problem for general connectivity graphs: given an unweighted graph $G=(V,E)$ and a metric space $M=(V,d)$ with $d:V\times V\to \mathbb{R}$, find $k$ node-disjoint connected subgraphs of $G$ (clusters) that cover all vertices and minimize the maximum radius or diameter of these subgraphs.
To keep the presentation simple, we will focus in the following on the connected $k$-center problem, and later adapt the algorithm and its analysis to the connected $k$-diameter problem. We follow the approach discussed in Section~\ref{sec:resultssummary}. That is we first use the algorithm \texttt{GreedyClustering} to compute a set of non-disjoint clusters and then transform this set into a set of disjoint clusters using a well-separated partition of the metric.

\subsection{Greedy Clustering} 

\begin{algorithm}
   \caption{\textsc{GreedyClustering}$(G,M,r)$}\label{alg:guess_general}
   \KwIn{graph $G=(V, E)$, metric $M=(V, d)$, radius $r$}   
   $C \gets \emptyset$;     \tcp{center nodes}
   $V' \gets V$;     \tcp{uncovered nodes}
   \While{$V'\neq \emptyset$}
   {
      select a node $c\in V'$ and add it to $C$\;
      $T_c\gets$ \textsc{ComputeCluster}$(G,M,r,c)$\;
      $V'\gets V'\setminus T_c$\;
   }
   \KwOut{centers $C$, sets~$T_c$ for all $c\in C$}
\end{algorithm}

We give the pseudocode of \texttt{GreedyClustering} as Algorithm~\ref{alg:guess_general}. In general, the sets~$T_c$ computed by this algorithm are not disjoint but the centers are pairwise distinct.
\begin{lemma}\label{lemma:RadiusWithin2Opt}
Let~$r^*$ denote the radius of an optimal connected $k$-center clustering with non-disjoint clusters.
For~$r\ge 2r^*$, Algorithm~\ref{alg:guess_general} computes a center set~$C$ with~$|C|\le k$.
\end{lemma}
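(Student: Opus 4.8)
The plan is to fix an optimal non-disjoint clustering of radius $r^*$, consisting of connected clusters $C_1^*,\dots,C_k^*$ with centers $c_1^*,\dots,c_k^*$, and then to charge each center selected by \texttt{GreedyClustering} to a \emph{distinct} optimal cluster. Since there are only $k$ optimal clusters, an injective charging immediately yields $|C|\le k$.

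First I would establish the crucial containment: whenever \texttt{GreedyClustering} selects a currently uncovered node $c$ and forms $T_c=\texttt{ComputeCluster}(G,M,r,c)$, the entire optimal cluster containing $c$ is swallowed by $T_c$. Pick an index $i$ with $c\in C_i^*$, and let $x\in C_i^*$ be arbitrary. Since $C_i^*$ induces a connected subgraph of $G$, there is a path $P$ from $c$ to $x$ using only vertices of $C_i^*$. Every vertex $v$ on $P$ lies in $C_i^*$, so $d(v,c_i^*)\le r^*$, and likewise $d(c,c_i^*)\le r^*$; the triangle inequality then gives $d(v,c)\le d(v,c_i^*)+d(c_i^*,c)\le 2r^*\le r$. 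Hence every vertex of $P$ passes the distance test checked inside \texttt{ComputeCluster}, and because $P$ is a path in $G$, the BFS-type exploration started at $c$ adds all of $P$ to $T_c$, in particular $x$. As $x$ was arbitrary, $C_i^*\subseteq T_c$. I would emphasize here that this step relies on \texttt{ComputeCluster} being run on the full graph $G$ (as it is invoked), not on the uncovered vertices only, since the connecting path $P$ may pass through already-covered nodes.

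With the containment in hand, I would define the charging map: to each selected center $c$ assign an index $i(c)$ of some optimal cluster containing it. This map is injective. Indeed, suppose $c$ and $c'$ are distinct selected centers with $c$ chosen earlier and $i(c)=i(c')=:i$. Then $c'\in C_i^*=C_{i(c)}^*\subseteq T_c$ by the containment, so $c'$ is removed from $V'$ when $c$ is processed; but an already-covered node is never selected later, a contradiction. Thus the indices $i(c)$ are pairwise distinct, and since there are only $k$ optimal clusters we conclude $|C|\le k$.

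The heart of the argument — and the single place where the hypothesis $r\ge 2r^*$ enters — is the containment claim; the factor of $2$ is exactly what lets the triangle inequality push every vertex of an optimal cluster within radius $r$ of any of its own members. I expect this to be the only nontrivial step: once the containment holds, injectivity of the charging is immediate, and nothing about the metric beyond the triangle inequality is used.
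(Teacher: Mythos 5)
Your proof is correct and follows essentially the same route as the paper's: the containment $C_i^*\subseteq T_c$ via the triangle inequality (using $r\ge 2r^*$) together with connectivity of the optimal cluster, followed by the observation that no later-selected center can lie in an already-swallowed optimal cluster. Your writeup is in fact slightly more careful than the paper's on two points — making the charging map explicitly injective (which matters since in the non-disjoint setting a node may lie in several optimal clusters) and noting that \texttt{ComputeCluster} explores the full graph $G$ rather than only uncovered nodes — but these are refinements of the same argument, not a different approach.
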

\begin{proof}
Consider a node~$c\in V$ that is chosen as a center by the algorithm and the optimal cluster~$O$ node~$c$ is contained in. This cluster is centered around some node~$c'$ and has a radius of at most~$r^*$. Hence, by the triangle inequality all nodes in~$O$ have a distance of at most~$2r^*$ from~$c$. Also since~$O$ is connected, all nodes in~$O$ are reachable from~$c$. In particular, all nodes in~$O$ are reachable from~$c$ on paths that contain only nodes within distance~$2r^*$ of~$c$. This implies that for~$r\ge 2r^*$, the set~$T_c$ is a superset of the optimal cluster~$O$. Since the centers in Algorithm~\ref{alg:guess_general} are chosen among the uncovered nodes, all chosen centers must be from distinct optimal clusters. This implies that there can be at most~$k$ centers in~$C$.
\end{proof}

In the following, we assume~$r$ to be chosen as the smallest radius for which Algorithm~\ref{alg:guess_general} outputs at most $k$ center nodes. This value can be found by binary search. Since there are no more than $n(n-1)$ distinct distances in the metric $M=(V,d)$, we need only $O(\log{n})$ calls to Algorithm~\ref{alg:guess_general} for this. 

\subsection{Making the Clusters Disjoint}\label{subsec:MakingDisjoint}

In this section we will describe how the clusters returned by Algorithm~\ref{alg:guess_general} can be made pairwise disjoint. Since the radius of all returned clusters is at most~$r$, two clusters formed around centers with a distance of more than~$2r$ are disjoint. On the other hand, if two centers~$c$ and~$c'$ have a distance of at most~$L$ and we merge the two corresponding clusters~$T_c$ and~$T_{c'}$ then we get a new cluster with radius at most~$r+L$ with respect to~$c$ or~$c'$. Hence, two clusters whose centers are close together can be merged without increasing the radius too much.

In order to determine which clusters should be merged, we introduce the notion of a \emph{well-separated} partition, which is a partition of the center set~$C$ with some additional properties.

\begin{definition}
   \label{def:wellseparated}
   Let $M=(C,d)$ be a metric and~$r>0$. An \emph{$r$-well-separated partition} with $\ell\in\mathbb{N}$ layers and with parameters $(h_1,\ldots,h_{\ell})$ is a partition of~$C$ into groups $\{C_{1,1}, \ldots, C_{1,\ell_1}\}, \{C_{2,1}, \ldots, C_{2,\ell_2}\}, \ldots, \{C_{\ell,1},\ldots,C_{\ell,\ell_{\ell}}\}$ with the following properties.  
   \begin{enumerate}[label=(\roman*)]
   \item The groups cover all points from~$C$, i.e., $\bigcup_{i\in[\ell], j\in[\ell_i]} C_{i,j}=C$.
   \item The groups are pairwise disjoint, i.e., $\forall i, i', j, j'$ with $i\neq i'$ or $j\neq j'$, $C_{i,j}\cap C_{i',j'}=\emptyset$.
   \item For~$i\in[\ell]$, we call the sets~$C_{i,1},\ldots,C_{i,\ell_i}$ the \emph{sets on layer~$i$}. Two different sets from the same layer are more than $2r$ away, i.e., $\forall i\in [\ell], v\in C_{i, j}, v'\in C_{i, j'}$ with $j\neq j'$, $d(v, v')>2r$.
   \item For~$i\in[\ell]$, the maximum diameter of a group on layer $i$ is at most~$h_i$, i.e., $\max_{j} \max_{v, v'\in C_{i, j}} d(v, v')\le h_i$.
   \end{enumerate} 
\end{definition}

Assume that Algorithm~\ref{alg:guess_general} has been executed for some radius~$r$ and let its output be the center set~$C$ with~$|C|\le k$ and the corresponding sets~$T_c$ for $c\in C$. Furthermore assume that we have computed an $r$-well-separated partition of the center set~$C$ into $\ell$ layers with parameters~$(h_1,\ldots,h_{\ell})$.

Since two centers~$c$ and~$c'$ from different groups of the same layer have a distance of more than~$2r$, their corresponding clusters~$T_c$ and~$T_{c'}$ do not intersect. However, the clusters belonging to two centers in the same group or to groups from  different layers can intersect. The following lemma describes an algorithm that adjusts the clusters layer by layer to make them pairwise disjoint.

\begin{lemma}
   \label{lem:approx_partition}
   Consider an instance $(G=(V,E),M=(V,d),k)$ of the connected $k$-center problem and assume that Algorithm~\ref{alg:guess_general} computes a center set~$C\subseteq V$ with~$|C|\le k$ for some radius~$r$. Furthermore, let an $r$-well-separated partition of~$C$ with $\ell$ layers and parameters~$(h_1,\ldots,h_{\ell})$ be given. Then we can efficiently find  a feasible solution for the connected $k$-center problem with disjoint clusters with radius at most $(2\ell-1)r+\sum_{i=1}^{\ell}h_i$.
\end{lemma}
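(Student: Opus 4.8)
The plan is to realize the layer-by-layer merging scheme sketched in Section~\ref{sec:resultssummary} and to prove the radius bound by induction on the number of layers. Throughout, every cluster carries a \emph{reference center} from $C$, and I will use repeatedly that each point of $T_c$ lies within distance $r$ of its own \emph{original} center $c$ by construction of Algorithm~\ref{alg:guess_general} (it only adds a vertex $u$ to $T_c$ when $d(u,c)\le r$). First I would handle one layer in isolation. Fix layer $i$ and a group $C_{i,j}$; since the group has diameter at most $h_i$ by property (iv) of Definition~\ref{def:wellseparated}, merging the clusters $\{T_c : c\in C_{i,j}\}$ that overlap (transitively) into connected components yields clusters of radius at most $r+h_i$ around any group center, and property (iii) guarantees that clusters from two \emph{different} groups of the same layer are automatically disjoint because their centers are more than $2r$ apart. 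This settles the base case: after layer $1$ we have pairwise disjoint connected clusters of radius at most $\rho_1:=r+h_1=(2\cdot 1-1)r+h_1$.

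Next I would add the layers one after another, maintaining the invariant that after processing layers $1,\dots,i$ the current clusters are pairwise disjoint, connected, cover all points seen so far, and each has radius at most $\rho_i:=(2i-1)r+\sum_{t\le i}h_t$. To add layer $i$, I process its within-layer-merged clusters one by one. Let $A$ be such a cluster; it is connected and \emph{all} of its original centers lie in one group, hence within $h_i$ of each other. Let $B_1,\dots,B_p$ be the already-built clusters that $A$ meets. If $p=0$, keep $A$ as a new standalone cluster. Otherwise I would \emph{split} $A$ by a graph-Voronoi (multi-source BFS) computation inside $G[A]$ whose sources are the overlap sets $S_j:=A\cap B_j$: every vertex of $A$ is assigned to the source set closest to it in $G[A]$, producing connected regions $R_1,\dots,R_p$ with $S_j\subseteq R_j$, and I attach $R_j$ to $B_j$. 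Since each $R_j$ is a graph-Voronoi region, every vertex of $R_j$ reaches $S_j\subseteq B_j$ along a path inside $R_j$, so $B_j\cup R_j$ stays connected; and since the $R_j$ partition $A$ with $S_m\subseteq R_m$, the enlarged clusters stay pairwise disjoint and disjoint from every untouched cluster. Coverage is preserved because we only redistribute the vertices of $A$.

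The heart of the argument is the radius bookkeeping, and the subtle point is that a point $v\in R_j$ may be separated from its own original center, so I cannot simply charge it to a nearby center. Instead I would route through the group. Pick any overlap vertex $w\in S_j=A\cap B_j$ and let $c_w,c_v$ be the original centers (both in the single group that $A$ comes from) of $w$ and $v$. Then, writing $c_j$ for the reference center of $B_j$,
\begin{equation*}
d(c_j,v)\le d(c_j,w)+d(w,c_w)+d(c_w,c_v)+d(c_v,v)\le \rho_{i-1}+r+h_i+r=\rho_{i-1}+2r+h_i,
\end{equation*}
using $w\in B_j$ (radius $\rho_{i-1}$), $w\in T_{c_w}$ and $v\in T_{c_v}$ (radius $r$), and $c_w,c_v$ in a common group (diameter $h_i$). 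Keeping $c_j$ as the reference of $B_j\cup R_j$ therefore gives radius at most $\rho_{i-1}+2r+h_i=\rho_i$, the points already in $B_j$ keep their bound $\rho_{i-1}\le\rho_i$, and a standalone $A$ has radius $r+h_i\le\rho_i$. This is exactly the increment that makes the induction telescope to $\rho_\ell=(2\ell-1)r+\sum_{i=1}^{\ell}h_i$.

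Finally I would observe that every resulting cluster contains at least one center of $C$ and that the clusters are pairwise disjoint, so their number is at most $|C|\le k$; isolating singletons yields exactly $k$ clusters without increasing the radius, and all points of $V$ remain covered since Algorithm~\ref{alg:guess_general} covered them initially. The main obstacle is the second step: designing the split of $A$ so that connectivity and disjointness are \emph{simultaneously} preserved across all layers, and — the genuinely delicate part — arguing that each new layer inflates the radius by only $2r+h_i$ rather than the naive $2r+2h_i$ one gets by traversing the full diameter of $A$. This forces the routing through two original centers of a common group, as above, rather than through the reference center of $A$.
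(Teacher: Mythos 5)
Your proposal is correct and follows the same overall architecture as the paper's proof: merge clusters within each group, then process layers in order, splitting each layer-$i$ cluster among the existing clusters it overlaps and attaching the pieces, with an induction showing the radius grows by exactly $2r+h_i$ per layer. The one genuine difference is the splitting gadget. The paper roots a spanning tree of the layer-$(i{+}1)$ cluster at its center and, for every overlap vertex $v\in V^*$, deletes the last edge on the root-to-$v$ path, so that each non-root component contains exactly one overlap vertex and is merged with the unique host it meets; you instead grow multi-source BFS (graph-Voronoi) regions from the overlap sets $S_j=A\cap B_j$ inside $G[A]$. Both deliver the property that is actually needed -- every vertex of a piece reaches its host along a path inside the piece -- though your phrase ``connected regions'' is slightly imprecise: if some $S_j$ is disconnected in $G[A]$, the region $R_j$ need not be connected, but each of its BFS trees touches $S_j$, so $B_j\cup R_j$ is still connected, which is all you use. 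Your radius accounting, routing $c_j\to w\to c_w\to c_v\to v$, is an equivalent repackaging of the paper's bound ``radius of host plus diameter of attached piece,'' since that chain is precisely the proof that $\mathrm{diam}(A)\le 2r+h_i$ (so your closing worry about a ``naive'' $2r+2h_i$ is moot). One point deserves an explicit sentence in both your write-up and, arguably, the paper's: when $A$ is processed, a host $B_j$ may already have been enlarged by an earlier layer-$i$ cluster $A'$, so its current radius can be $\rho_i$ rather than $\rho_{i-1}$; the bound $d(c_j,w)\le\rho_{i-1}$ survives because $A\cap A'=\emptyset$ forces the overlap $S_j$ to lie in the part of $B_j$ that existed before layer $i$ began. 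This follows from facts you established (layer-$i$ clusters are pairwise disjoint), so it is a small omission rather than a gap.
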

\begin{proof}
According to Definition~\ref{def:wellseparated} and Algorithm~\ref{alg:guess_general}, we have the following properties:  
\begin{enumerate}[label=(\roman*)]
\item $\bigcup_{i\in[\ell], j\in[\ell_i]} C_{i, j}=C$
\item  $\forall i, i', j, j'$ with $i\neq i'$ or $j\neq j'$: $C_{i, j}\cap C_{i', j'}=\emptyset$
\item $\forall i\in [\ell], c\in C_{i, j}, c'\in C_{i, j'}$ with $j\neq j'$: $d(c, c')>2r$ and $T_c\cap T_{c'}=\emptyset$
\item $ \forall i\in [\ell]$, $j\in[\ell_i]$, $c, c'\in C_{i, j}$: $d(c, c')\le h_i$
\item $\bigcup_{i\in [\ell], j\in[\ell_i]} \bigcup_{c\in C_{i,j}} T_c =V$
\end{enumerate}  

In the first step, we adjust the clusters by merging all non-disjoint clusters whose centers belong to the same group. To be precise, for each group~$C_{i,j}$ we do the following: As long as there are two different centers~$c\in C_{i,j}$ and $c'\in C_{i,j}$ with~$T_c\cap T_{c'}\neq\emptyset$, we remove $c'$ from $C_{i,j}$ and replace~$T_c$ by $T_c\cup T_{c'}$. That is, we merge the two clusters $T_c$ and $T_{c'}$ and define $c$ as its center. Since centers in the same group on layer~$i$ have a distance of at most~$h_i$, after this step the clusters in each group~$C_{i,j}$ are pairwise disjoint and have a radius of at most~$r+h_i$ and a diameter of at most~$2r+h_i$. They are still connected because we only merge connected clusters that have at least one node in common. 

Since clusters in different groups of the same layer are pairwise disjoint anyway, all clusters on the same layer are pairwise disjoint after this step. Hence, in the next step we only need to describe how clusters from different layers can be made disjoint. For this, it will be helpful to view the clusters as trees. To make this more precise, consider a cluster~$T_c$ with center~$c$. We know that the subgraph of~$G$ induced by~$T_c$ is connected. For any cluster~$T_c$ we choose an arbitrary spanning tree in this induced subgraph and consider~$c$ to be the root of this tree. Let $\mathcal{T}_i$ denote the set of all such trees in the $i$-th layer for $i\in [\ell]$. In the following we will use the terms clusters and trees synonymously. By abuse of notation we will use~$T_c$ to denote both the cluster with center~$c$ and the spanning tree with root~$c$, depending on the context.

For every $i\in[\ell]$, all trees in $\mathcal{T}_i$ are node-disjoint. We will now describe how to ensure that  trees on different layers are also node-disjoint. For this, we will go through the layers~$i=1,2,\ldots,\ell$ in this order and replace $\mathcal{T}_i$ by an adjusted set of trees $\mathcal{T}'_i$. We will construct these trees so that at each step~$i\in[\ell]$ all trees from $\cup_{j\in[i]}\mathcal{T}'_j$ are pairwise disjoint. Furthermore, at step~$i$ the radius of any tree from $\cup_{j\in[i]}\mathcal{T}'_j$ will be bounded from above by $(2i-1)r+\sum_{j\in [i]}h_j$. Finally, our construction ensures that in the end, the trees in $\cup_{i\in[\ell]}\mathcal{T}'_i$ cover all nodes in~$V$. Hence, these trees form a feasible solution to the connected $k$-center problem with disjoint clusters with the desired radius.

We set $\mathcal{T}'_1=\mathcal{T}_1$. Then for~$i=1$, the desired properties are satisfied because the trees on layer~1 are pairwise disjoint and have a radius of at most~$r+h_1$. Now assume that the properties are true for some~$i$ and let us discuss how to ensure them also for~$i+1$. We start with $\mathcal{T}_{i+1}'=\emptyset$ and add trees to it one after another. Consider an arbitrary tree $T\in \mathcal{T}_{i+1}=(V',E')$ with center~$c$ and let~$V^*\subseteq V'$ denote the nodes that also occur in some tree $T'\in \mathcal{T}_j'$ for some~$j\in[i]$. Observe that any node from~$V^*$ can be contained in at most one such tree~$T'$ because by the induction hypothesis all trees in~$\cup_{j\in[i]}\mathcal{T}_j'$ are pairwise disjoint. If $V^*$ is empty then the tree~$T$ is disjoint from all trees in~$\cup_{j\in[i+1]}\mathcal{T}_j'$ and does not need to be adjusted. In this case we simply add it to $\mathcal{T}_{i+1}'$.

If $V^*$ contains only a single node~$v$ then we merge the tree~$T$ with the unique tree~$T'$ from $\mathcal{T}_j'$ for some~$j\le i$ that also contains node~$v$, i.e., we replace $T'$ by~$T\cup T'$ in $\mathcal{T}_j'$. Tree~$T'$ has a radius of at most $(2i-1)r+\sum_{j\in [i]}h_j$. Since the diameter of~$T$ is at most~$2r+h_{i+1}$, the radius of the union of~$T$ and~$T'$ with respect to the center of~$T'$ is at most (see Figure~\ref{fig:GeneralGraphsMerge})
\begin{equation}\label{eqn:RadiusMerge}
  (2r+h_{i+1}) +(2i-1)r+\sum_{j\in [i]}h_j = 
  (2(i+1)-1)r+\sum_{j\in [i+1]}h_j.
\end{equation}

\begin{figure}
\centering
\includegraphics[scale=0.7]{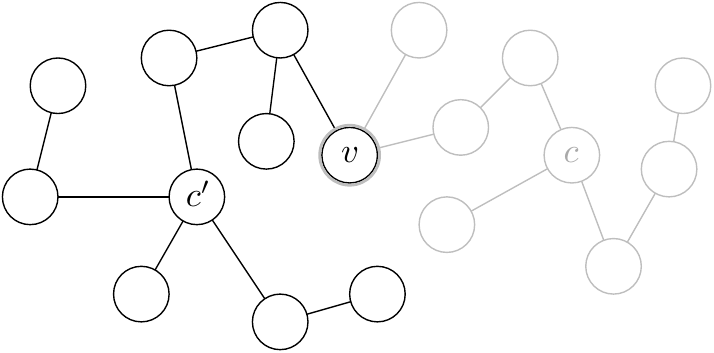}
\caption{This figure shows the tree~$T'$ with center~$c'$ in black and the tree~$T$ with center~$c$ in gray. These trees have node~$v$ in common. When $T$ and~$T'$ are merged into a single tree, the radius of this new tree with respect to~$c'$ is larger than the radius of~$T'$ by at most the diameter of~$T$.}
\label{fig:GeneralGraphsMerge}
\end{figure}

\begin{figure}
\centering
\includegraphics[scale=0.7]{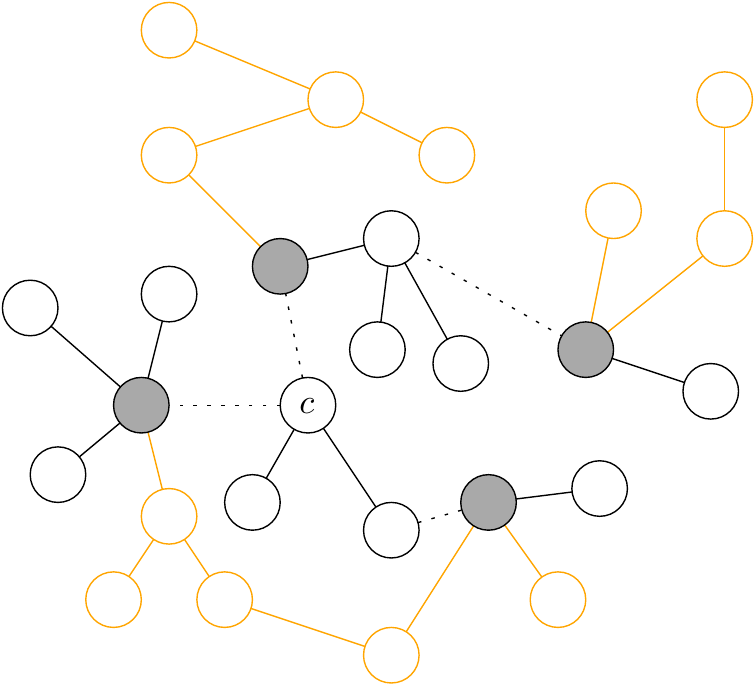}
\caption{This figure shows the tree~$T$ in black. The nodes in~$V^*$ are marked gray and the edges that are removed from~$T$ are shown dotted. The orange trees depict the trees on lower layers that contain the nodes from~$V^*$ and with which the corresponding components are merged.}
\label{fig:GeneralGraphsMerge2}
\end{figure}

Now consider the case that~$V^*$ contains more than one node. In this case we cannot simply merge~$T$ with some tree from~$\cup_{j\in[i]}\mathcal{T}_j'$ because the resulting tree would not be disjoint from the other trees. We also cannot merge all trees that contain nodes from~$V'$ into a single cluster because the radius of the resulting cluster could be too large. Instead we split the tree~$T$ into multiple components and we merge these components separately with different trees from $\cup_{j\in[i]}\mathcal{T}_j'$. For each node~$v\in V^*$ that is not the root~$c$ of~$T$ we consider the path from~$c$ to~$v$ and let~$e$ denote the last edge on this path (i.e., the edge leading to~$v$). We remove edge~$e$ from the tree~$T$ and thereby split the tree~$T$ into two components. Since we do this for every node from~$V^*\setminus\{c\}$, the tree~$T$ will be split into~$|V^*\setminus\{c\}|+1$ pairwise disjoint connected components. Each of these components that does not contain the root~$c$ contains exactly one node from~$V^*$. Hence, for each of these components there is a unique tree from $\cup_{j\in[i]}\mathcal{T}_j'$ from which it is non-disjoint. We merge every component with the tree from which it is non-disjoint (see Figure~\ref{fig:GeneralGraphsMerge2}). In the component that contains the root, only the root might belong to~$V^*$. If this is the case, we merge it with the unique tree from $\cup_{j\in[i]}\mathcal{T}_j'$ from which it is non-disjoint. Otherwise, we add this component to $\mathcal{T}'_{i+1}$. Since~$T$ has a diameter of at most~$2r+h_{i+1}$, the same is true for each of the components. By the induction hypothesis, each tree from $\cup_{j\in[i]}\mathcal{T}_j'$ has a radius of at most $(2i-1)r+\sum_{j\in [i]}h_j$. Hence, as in~\eqref{eqn:RadiusMerge}, the radius of the merged clusters is bounded from above by $(2(i+1)-1)r+\sum_{j\in [i+1]}h_j$.
\end{proof}

\begin{corollary}\label{cor:ApproximationByPartition}
If there exists a polynomial-time algorithm that computes for any metric~$(C,d)$ and any~$r$ an $r$-well-separated partition with $\ell$ layers and parameters $(h_1,\ldots,h_{\ell})$ then there exists an approximation algorithm for the connected $k$-center problem with disjoint clusters that achieves an approximation factor of $4\ell-2+2\sum_{i=1}^{\ell}h_i/r$.
\end{corollary}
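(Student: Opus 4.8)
The plan is to combine the algorithmic guarantee of Lemma~\ref{lem:approx_partition} with the binary-search framework for Problem~\ref{p:subroutine} described in Section~\ref{sec:resultssummary}, using Lemma~\ref{lemma:RadiusWithin2Opt} to control how the radius $r$ guessed by the algorithm relates to the true optimum. First I would fix the instance $(G,M,k)$ and let $r^*$ denote the radius of an optimal \emph{non-disjoint} connected $k$-center clustering, and let $\rho^*$ denote the radius of an optimal \emph{disjoint} connected $k$-center clustering. Since every disjoint clustering is also a valid non-disjoint clustering, we have $r^* \le \rho^*$, so lower bounds in terms of $r^*$ translate into lower bounds in terms of $\rho^*$.

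Next I would run the binary-search scheme over the $O(n^2)$ candidate distances to find the smallest $r$ for which Algorithm~\ref{alg:guess_general} returns at most $k$ centers. By Lemma~\ref{lemma:RadiusWithin2Opt}, any $r \ge 2r^*$ yields $|C| \le k$, so the smallest such $r$ satisfies $r \le 2r^*$; since $r$ is chosen among the true pairwise distances and $2r^*$ need not itself be a distance, the cleanest statement is that the chosen $r$ is at most the smallest candidate distance that is $\ge 2r^*$, which is in turn at most $2r^* \le 2\rho^*$. For this $r$ we have a center set $C$ with $|C|\le k$ and the clusters $T_c$ of radius at most $r$. Invoking the assumed polynomial-time algorithm, I would compute an $r$-well-separated partition of $C$ with $\ell$ layers and parameters $(h_1,\ldots,h_\ell)$, and then apply Lemma~\ref{lem:approx_partition} to obtain a feasible disjoint connected $k$-center solution of radius at most
\[
(2\ell-1)r + \sum_{i=1}^{\ell} h_i.
\]

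Finally I would convert this absolute radius bound into an approximation factor by dividing by $\rho^*$. Using $r \le 2\rho^*$ in the term $(2\ell-1)r$, and using the same bound $r \le 2\rho^*$ to write $\sum_i h_i = (\sum_i h_i / r)\cdot r \le 2(\sum_i h_i/r)\rho^*$, the computed radius is at most
\[
(2\ell-1)\cdot 2\rho^* + 2\Bigl(\sum_{i=1}^{\ell} h_i/r\Bigr)\rho^*
= \Bigl(4\ell - 2 + 2\sum_{i=1}^{\ell} h_i/r\Bigr)\rho^*,
\]
which is exactly the claimed approximation factor. The one point requiring care is the step $r \le 2\rho^*$: it hinges on the fact that the binary search ranges over actual pairwise distances and that $r^*$ is itself realized as a distance, together with $r^* \le \rho^*$; I would make sure the monotonicity of Algorithm~\ref{alg:guess_general}'s output size in $r$ is invoked so that the smallest feasible $r$ is well defined. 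I do not expect a genuine obstacle here, as all the heavy lifting is done by Lemma~\ref{lem:approx_partition}; the corollary is essentially a bookkeeping combination of that lemma, the doubling relation $r \le 2r^* \le 2\rho^*$, and the definition of approximation factor.
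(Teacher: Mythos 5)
Your proposal follows the paper's proof essentially verbatim: take the smallest $r$ for which \texttt{GreedyClustering} returns at most $k$ centers, bound it by $2r^*\le 2r^*_D$ via Lemma~\ref{lemma:RadiusWithin2Opt} (where $r^*_D$ is your $\rho^*$), apply Lemma~\ref{lem:approx_partition} to get a disjoint solution of radius $(2\ell-1)r+\sum_{i}h_i$, and divide by the optimal disjoint radius to obtain $4\ell-2+2\sum_{i}h_i/r$. The only slip is in your discretization aside: the chosen $r$ should be bounded by the \emph{largest} pairwise distance that is $\le 2r^*$ (the algorithm behaves identically at that distance and at radius $2r^*$, since \texttt{ComputeCluster} only evaluates predicates of the form $d(u,c)\le r$), not by the smallest distance $\ge 2r^*$, which is of course not at most $2r^*$; this does not affect the rest of your argument.
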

\begin{proof}
To obtain the desired approximation factor, we first determine the smallest~$r$ for which Algorithm~\ref{alg:guess_general} returns a center set~$C$ with~$|C|\le k$. Due to Lemma~\ref{lemma:RadiusWithin2Opt}, this radius~$r$ will be at most~$2r^*$, where~$r^*$ denotes the radius of an optimal connected $k$-clustering with non-disjoint clusters. Let~$r^*_D$ denote the radius of an optimal connected $k$-clustering with disjoint clusters. Then~$r^*_D\ge r^* \ge r/2$. According to Lemma~\ref{lem:approx_partition}, the polynomial-time algorithm for computing an $r$-well-separated partition can then be used to compute a connected $k$-clustering with disjoint clusters and radius at most $(2\ell-1)r+\sum_{i\in [\ell]}h_i$. The approximation factor of this $k$-clustering is
\[
   \frac{(2\ell-1)r+\sum_{i\in [\ell]}h_i}{r^*_D} \le 
   \frac{(2\ell-1)r+\sum_{i\in [\ell]}h_i}{r/2}
   = 4\ell-2+2\sum_{i\in [\ell]}\frac{h_i}{r}. \qedhere
\]
\end{proof}

\subsection{Extending the Analysis to Connected \texorpdfstring{$k$}{k}-diameter Clustering}

The same algorithm that we developed in the previous sections for the connected $k$-center problem can also be used for the connected $k$-diameter problem without any modifications. Only the analysis of the approximation factor needs to be adapted slightly.

Lemma~\ref{lemma:RadiusWithin2Opt} is changed as follows.
\begin{lemma}\label{lemma:DiameterWithinOpt}
Let~$r^*$ denote the diameter of an optimal connected $k$-diameter clustering with non-disjoint clusters. For~$r\ge r^*$, Algorithm~\ref{alg:guess_general} computes a center set~$C$ with~$|C|\le k$.
\end{lemma}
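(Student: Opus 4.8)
The plan is to mirror the proof of Lemma~\ref{lemma:RadiusWithin2Opt} almost verbatim, replacing the triangle-inequality step (which is what produced the factor~$2$ in the radius case) by a direct appeal to the diameter bound. First I would fix an arbitrary node~$c$ that Algorithm~\ref{alg:guess_general} selects as a center and let~$O$ be the optimal non-disjoint $k$-diameter cluster containing~$c$, so that the diameter of~$O$ is at most~$r^*$. The key observation is that, since $c\in O$ and every pairwise distance inside~$O$ is bounded by its diameter, every node~$u\in O$ already satisfies $d(c,u)\le r^*$ \emph{directly}, without invoking the triangle inequality and hence without any factor-$2$ loss. This is precisely why the threshold improves from $2r^*$ in the radius setting to $r^*$ here.

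Next I would show that $T_c\supseteq O$ whenever $r\ge r^*$. Since~$O$ induces a connected subgraph of~$G$ and each of its nodes lies within distance $r^*\le r$ of~$c$, the BFS-type procedure \textsc{ComputeCluster}$(G,M,r,c)$ started at~$c$ reaches all of~$O$: every node of~$O$ is joined to~$c$ by a path inside~$O$, and along such a path all nodes have distance at most~$r^*\le r$ from~$c$, so none of them is ever discarded by the distance filter $d(u,c)\le R=r$ in the algorithm. Consequently the whole of~$O$ is absorbed into the cluster~$T_c$ and marked as covered.

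The conclusion then follows exactly as in the proof of Lemma~\ref{lemma:RadiusWithin2Opt}. Because Algorithm~\ref{alg:guess_general} always chooses its centers among the still-uncovered nodes, and because each chosen center~$c$ causes its entire optimal cluster~$O$ to become covered, no two chosen centers can lie in the same optimal cluster. As there are only~$k$ optimal clusters, the algorithm can open at most~$k$ centers, i.e.\ $|C|\le k$.

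I do not expect any genuine difficulty in this argument; it is structurally identical to the radius case. The only point requiring care is to verify that the diameter bound is exactly what buys the improved threshold $r\ge r^*$ (as opposed to $2r^*$), and to confirm that the distance test inside \textsc{ComputeCluster} never prunes a node of~$O$, so that the containment $O\subseteq T_c$ is guaranteed rather than merely a superset of a smaller neighborhood.
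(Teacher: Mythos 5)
Your proof is correct and is exactly the argument the paper intends: the paper states this lemma as a direct adaptation of Lemma~\ref{lemma:RadiusWithin2Opt}, where the diameter bound makes every point of the optimal cluster~$O$ lie within distance~$r^*$ of the chosen center~$c$ without the triangle-inequality factor of~$2$, so \textsc{ComputeCluster} with $r\ge r^*$ absorbs all of~$O$ and distinct centers must come from distinct optimal clusters. Nothing is missing; your care about the distance filter never pruning a node of~$O$ is precisely the point that justifies $O\subseteq T_c$.
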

Observe that the diameter of the clusters~$T_c$ that are computed by Algorithm~\ref{alg:guess_general} for some~$r$ can be at most~$2r$.

A straightforward adaption of Lemma~\ref{lem:approx_partition} yields the following result.
\begin{lemma}
   \label{lem:approx_partitionDiameter}
   Consider an instance $(G=(V,E),M=(V,d),k)$ of the connected $k$-diameter problem and assume that Algorithm~\ref{alg:guess_general} computes a center set~$C\subseteq V$ with~$|C|\le k$ for some radius~$r$. Furthermore, let an $r$-well-separated partition of~$C$ with $\ell$ layers and parameters~$(h_1,\ldots,h_{\ell})$ be given. Then we can efficiently find a feasible solution for the connected $k$-diameter problem with disjoint clusters with diameter at most $(4\ell-2)r+h_1+2\sum_{i=2}^{\ell}h_i$.
\end{lemma}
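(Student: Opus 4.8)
The plan is to reuse the layer-by-layer merging argument from the proof of Lemma~\ref{lem:approx_partition} essentially verbatim, tracking the \emph{diameter} of the constructed trees instead of their radius. The algorithm itself is unchanged: we run \texttt{GreedyClustering} with the appropriate radius, merge same-group clusters within each layer, view each resulting cluster as a rooted spanning tree, and then process layers $i=1,2,\ldots,\ell$ in order, splitting each incoming tree at the edges leading to already-used nodes and attaching the pieces to the unique lower-layer trees they touch. Since the construction and the feasibility/connectivity/coverage arguments are identical, the only thing that needs reworking is the inductive bound on the cluster size, and the natural invariant to carry is now an upper bound on the \emph{diameter} of every tree in $\cup_{j\in[i]}\mathcal{T}'_j$ rather than the radius.

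First I would record the base-layer bounds. After the same-group merging step, a cluster on layer~$i$ has radius at most $r+h_i$ and diameter at most $2r+h_i$; this is exactly as in Lemma~\ref{lem:approx_partition}, and the diameter bound follows since each $T_c$ has diameter at most $2r$ (as noted just before the statement) and the centers within a group are within $h_i$. So for $i=1$ the diameter of any tree in $\mathcal{T}'_1=\mathcal{T}_1$ is at most $2r+h_1$, which matches the target formula $(4\cdot1-2)r+h_1 = 2r+h_1$. The inductive claim I would maintain is that after processing layer~$i$, every tree in $\cup_{j\in[i]}\mathcal{T}'_j$ has diameter at most $(4i-2)r+h_1+2\sum_{t=2}^{i}h_t$.

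The key step is the inductive step, and it is the place where the diameter analysis genuinely differs from the radius analysis. When an incoming tree $T$ from layer $i+1$ (of diameter at most $2r+h_{i+1}$) is merged with a lower-layer tree $T'$ at a shared node, the new diameter is not simply the sum of diameters: a longest path in the merged tree either stays within $T'$, or stays within the attached component of $T$, or crosses through the gluing node. In the worst case it traverses from a deep point of $T'$, through the shared node, into $T$, so the merged diameter is bounded by $\operatorname{diam}(T') + \operatorname{diam}(T)$, i.e. by $\big((4i-2)r+h_1+2\sum_{t=2}^{i}h_t\big) + (2r+h_{i+1})$. I would simplify this to $(4(i+1)-2)r+h_1+2\sum_{t=2}^{i+1}h_t$, checking that $2r+h_{i+1}$ correctly contributes $4r$ to the $r$-coefficient jump (from $4i-2$ to $4i+2$) together with the $2h_{i+1}$ term; the arithmetic is straightforward because the $r$-coefficient increases by $4$ per layer while the incoming tree only adds $2r$, the extra $2r$ coming from the slack already present in the diameter-versus-radius accounting. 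The splitting operation does not hurt this estimate: each split component has diameter at most $2r+h_{i+1}$ (being a subtree of $T$), so every merge is governed by the same bound, and the component retained in $\mathcal{T}'_{i+1}$ is handled identically. The only subtle point I would be careful about is that when a component is merged, the shared node already lies in $T'$, so no node is double-counted and the coverage argument of Lemma~\ref{lem:approx_partition} carries over unchanged; after all $\ell$ layers the trees cover $V$ and are pairwise disjoint, giving the claimed diameter bound $(4\ell-2)r+h_1+2\sum_{i=2}^{\ell}h_i$.

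I expect no serious obstacle here, since this is a deliberate ``straightforward adaption'' of an already-proven lemma; the main thing to get right is the triangle-inequality bookkeeping for the diameter across a glued node, and the slightly asymmetric appearance of $h_1$ (which is added only once, because the base layer contributes its full diameter $2r+h_1$ while every subsequent layer contributes an additive $2r+h_{i+1}$ that accounts for the factor $2$ on $h_i$ for $i\ge 2$).
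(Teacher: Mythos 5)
There is a genuine gap in your inductive step: you bound the effect of a \emph{single} merge, but during the processing of layer $i+1$ one lower-layer tree $T'$ may receive \emph{many} components --- several trees of $\mathcal{T}_{i+1}$ can each overlap only $T'$, and several components of one split tree can contain nodes of $V^*$ lying in the same $T'$. Your per-merge bound $\mathrm{diam}(T')+\mathrm{diam}(T)$, applied sequentially with the updated diameter of the receiving tree, grows by $2r+h_{i+1}$ with every attachment; after three or more attachments it exceeds the claimed invariant $(4(i+1)-2)r+h_1+2\sum_{t=2}^{i+1}h_t$, so the induction does not close. The discrepancy you wave away as ``slack already present in the diameter-versus-radius accounting'' is in fact exactly the budget needed for a second attachment, and it is not slack: if two components $A,B$ of diameter $2r+h_2$ attach at different nodes near opposite ends of a layer-$1$ tree of diameter $2r+h_1$, the merged diameter can genuinely reach $(2r+h_1)+2(2r+h_2)=6r+h_1+2h_2$, which already exceeds the bound $4r+h_1+h_2$ that your single-merge accounting would yield for $\ell=2$. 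So the stronger bound your argument implicitly proves is false, which is the red flag you should have pursued instead of absorbing it into the target formula.

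The paper closes this hole with one extra observation, which is the actual content of the adaptation: no matter how many layer-$(i+1)$ components are glued onto $T'$, any two points of the resulting cluster are joined through at most \emph{two} of these components (leave one, cross $T'$, enter another), so by the triangle inequality the diameter increases by at most $2(2r+h_{i+1})$ in total per layer --- not by $(2r+h_{i+1})$ per merge, and also not by more than $2(2r+h_{i+1})$ even with many attachments. This is precisely where the coefficient $2$ on $h_i$ for $i\ge 2$ and the jump of $4r$ per layer come from; $h_1$ appears only once because the base layer contributes its own diameter $2r+h_1$. Your outline is otherwise aligned with the paper (base case via the $2r+h_i$ bound on merged same-group clusters, splitting, coverage, disjointness), but the inductive step must replace the single-merge bound by this ``at most two components per path'' argument.
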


To see that this adapted lemma is true, first consider the diameter of the clusters that are created in the first step by merging overlapping clusters from the same group. Since centers in the same group on layer~$i$ have a distance of at most~$h_i$, the diameter of the merged clusters is at most $2r+h_i$. To see this, consider two centers~$c$ and~$c'$ of merged clusters. Then~$c$ and~$c'$ have a distance of at most~$h_i$. Now consider any points~$v\in T_c$ and~$v'\in T_{c'}$. We then have
\[
   d(v,v') \le d(v,c) + d(c,c') + d(c',v)
           \le r+h_i+r = 2r+h_i.
\]
Hence, the lemma is true for~$\ell=1$.

Now only~\eqref{eqn:RadiusMerge} needs to be adapted: Using the notation from the proof of Lemma~\ref{lem:approx_partition}, we know that~$T$ has a diameter of at most~$2r+h_{i+1}$ and~$T'$ has a diameter of at most $(4i-2)r+h_1+2\sum_{j=2}^{i}h_j$ by the induction hypothesis. A tree~$T'$ from $\cup_{j\in[i]}\mathcal{T}'_j$ might get merged with multiple trees from $\mathcal{T}_{j+1}$. Hence, the diameter of the resulting cluster is at most
\[
   (4i-2)r+h_1+2\sum_{j=2}^{i}h_j + 2(2r+h_{i+1})
   = (4(i+1)-2)r+h_1+2\sum_{j=2}^{i+1}h_j.
\]

Overall we obtain the following corollary.
\begin{corollary}\label{cor:ApproximationByPartitionDiameter}
If there exists a polynomial-time algorithm that computes for any metric~$(C,d)$ and any~$r$ an $r$-well-separated partition with $\ell$ layers and parameters $(h_1,\ldots,h_{\ell})$ then there exists an approximation algorithm for the connected $k$-diameter problem with disjoint clusters that achieves an approximation factor of $4\ell-2+h_1/r+2\sum_{i=2}^{\ell}h_i/r$.
\end{corollary}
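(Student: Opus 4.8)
The plan is to mirror the proof of Corollary~\ref{cor:ApproximationByPartition}, substituting the diameter versions of the two supporting lemmas and exploiting the fact that the diameter analysis loses one fewer factor of two in the lower-bound relation. Concretely, I would first search over the $O(n^2)$ candidate distance values to determine the smallest radius~$r$ for which Algorithm~\ref{alg:guess_general} returns a center set~$C$ with~$|C|\le k$. By the binary-search scheme described in Section~\ref{sec:resultssummary} this costs only $O(\log n)$ invocations of the algorithm, so the overall procedure stays polynomial given the assumed polynomial-time routine for computing $r$-well-separated partitions.

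The crux is to relate this chosen~$r$ to the diameter~$r^*_D$ of an optimal disjoint clustering. Let~$r^*$ denote the diameter of an optimal non-disjoint clustering. Since every disjoint clustering is in particular a feasible non-disjoint clustering, we have~$r^*\le r^*_D$. On the other hand, Lemma~\ref{lemma:DiameterWithinOpt} guarantees that for every~$r\ge r^*$ the algorithm produces at most~$k$ centers, so by minimality the chosen~$r$ satisfies~$r\le r^*$. Chaining these yields~$r\le r^*\le r^*_D$, i.e.\ $r^*_D\ge r$. This is precisely where the diameter case improves on the center case: Lemma~\ref{lemma:DiameterWithinOpt} needs only~$r\ge r^*$ rather than~$r\ge 2r^*$ as in Lemma~\ref{lemma:RadiusWithin2Opt}, so the denominator in the ratio below is~$r$ instead of~$r/2$.

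With the lower bound in hand, I would feed the computed $r$-well-separated partition into Lemma~\ref{lem:approx_partitionDiameter}, which yields a feasible disjoint connected clustering of diameter at most~$(4\ell-2)r+h_1+2\sum_{i=2}^{\ell}h_i$. Dividing by~$r^*_D\ge r$ then gives
\[
   \frac{(4\ell-2)r+h_1+2\sum_{i=2}^{\ell}h_i}{r^*_D}
   \le \frac{(4\ell-2)r+h_1+2\sum_{i=2}^{\ell}h_i}{r}
   = 4\ell-2+\frac{h_1}{r}+2\sum_{i=2}^{\ell}\frac{h_i}{r},
\]
which is exactly the claimed approximation factor. I do not expect a genuine obstacle here, since the two diameter lemmas do all the heavy lifting; the only points requiring care are the directions of the inequalities—ensuring that~$r^*_D$ dominates both~$r^*$ and~$r$—and confirming that the optimal non-disjoint diameter is attained at one of the discrete candidate distances, so that the search indeed locates an~$r$ with~$r\le r^*$.
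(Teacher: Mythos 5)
Your proposal is correct and follows exactly the route the paper intends: the paper states this corollary without an explicit proof precisely because it is the diameter analogue of Corollary~\ref{cor:ApproximationByPartition}, obtained by replacing Lemma~\ref{lemma:RadiusWithin2Opt} with Lemma~\ref{lemma:DiameterWithinOpt} (giving $r\le r^*\le r^*_D$ instead of $r/2\le r^*\le r^*_D$) and Lemma~\ref{lem:approx_partition} with Lemma~\ref{lem:approx_partitionDiameter}, then dividing the resulting diameter bound by $r^*_D\ge r$. Your attention to the two details---the direction of the chain $r\le r^*\le r^*_D$ and the fact that the optimal non-disjoint diameter is itself a pairwise distance, so the discrete search finds an $r\le r^*$---is exactly what makes the adaptation go through.
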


\subsection{Finding Well-separated Partitions}

In this section, we will describe how to efficiently compute $r$-well-separated partitions. First we will deal with general metrics and then we will discuss the Euclidean metric and metrics with constant doubling dimension.

\subsubsection{Well-separated Partitions in General Metrics}   
According to Corollaries~\ref{cor:ApproximationByPartition} and~\ref{cor:ApproximationByPartitionDiameter}, we only need to find an efficient algorithm for computing an $r$-well-separated partition to obtain an approximation algorithm for the connected $k$-center and $k$-diameter problem.

Algorithm~\ref{alg:partition_generalmetric} computes an $r$-well separated partition layer by layer. For each layer it creates the groups in a greedy fashion: At the beginning of a layer~$i$, the set~$U'$ of all nodes that are not assigned to previous layers is considered. The goal is to assign as many of these nodes to the current layer~$i$ as possible. For this, we start with an arbitrary node~$u\in U'$ that is not assigned to any previous layer and we create a group around~$u$. First the group consists only of~$u$ itself. Then we iteratively augment the group by adding all nodes to the group that have a distance of at most~$2r$ from some node that already belongs to the group. We repeat this augmentation step multiple times one after another. We stop when the number of new nodes that join the group is smaller than twice the number of nodes that have already been added to the group for the first time. Then the group around~$u$ is finished and added to layer~$i$. All nodes in the group are removed from~$U'$. Furthermore, we also remove all nodes that have a distance of at most~$2r$ from this group from~$U'$. These nodes have to be assigned to other layers that are created later to ensure property~(iii) in Definition~\ref{def:wellseparated}. As long as $U'$ is not empty, we repeat the process to create another group on layer~$i$. The pseudocode is shown as Algorithm~\ref{alg:partition_generalmetric}.

\begin{algorithm}  
   \caption{\textsc{PartitionGeneralMetric}$((C,d),r)$}\label{alg:partition_generalmetric}
   \KwIn{metric $(C,d)$, radius $r$}

   $U \gets C$;     \tcp{nodes that still have to be assigned}
   $i \gets 0$; 

   \While{$U \neq \emptyset$}
   {  
      $i\gets i+1$;   \tcp{start a new layer}
      
      $j\gets 0$\;

      $U'\gets U$;   \tcp{nodes that could still be assigned on $i$-th layer}

      \While{$U'\neq \emptyset$}
      {
         $j=j+1$;  \tcp{create a new group in $i$-th layer}

         select a node $u\in U'$, $C_{i, j}\gets \{u\}$ and  $N_{0}(u) \gets \{u\}$\;
         
         $U'\gets U'\setminus \{u\}$; 
  
         $U\gets U\setminus \{u\}$;

         $s=1$\;

         \While{$s\neq 0$ and $U'\neq \emptyset$}
         {
            $N_s(u) \gets \{ x\in U' \mid \exists v\in N_{s-1}(u): d(v,x)\le 2r\} $\;   \tcp{nearby nodes of nodes $C_{i, j}$ in $U'$}  

            \eIf{$|N_s(u)|\ge 2 \cdot |C_{i, j}|$}
            {
               $C_{i,j} \gets C_{i,j} \cup N_s(u)$;  \tcp{add nearby nodes to $C_{i,j}$}
               
               $U'\gets U' \setminus N_s(u)$; 
  
               $U\gets U \setminus N_s(u)$; 

               $s=s+1$\;
            }
            {
               $U'\gets U'\setminus N_s(u)$;
               \tcp{nearby nodes cannot be on $i$-th layer} 

               $s=0$;  \tcp{end group of node $u$}
            }
         }
      }
   }
   \KwOut{$\{C_{1,1}, C_{1,2}, \ldots\}, \{C_{2,1}, C_{2,2}, \ldots\}, \ldots$}
\end{algorithm}

\begin{lemma}
   \label{lem:partition_generalmetric}
   Let~$(C,d)$ be an arbitrary metric with $k:=|C|$ and~$r>0$.
  Let $\ell =1+\lfloor \log_{\frac32}(k)\rfloor$ and $h= 4r\lfloor \log_3{k} \rfloor$. The output of Algorithm~\ref{alg:partition_generalmetric} is an $r$-well-separated partition with at most $\ell$ layers and parameters $(h,\ldots,h)$.
\end{lemma}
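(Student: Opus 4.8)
The plan is to prove three things about the output of Algorithm~\ref{alg:partition_generalmetric}: that it satisfies the four defining properties of an $r$-well-separated partition, that the diameter parameter of every group is bounded by $h=4r\lfloor\log_3 k\rfloor$, and that the number of layers is at most $\ell=1+\lfloor\log_{3/2}k\rfloor$. The first of these is mostly bookkeeping from the algorithm's structure, so I would dispatch it quickly and concentrate the effort on the diameter and layer-count bounds, which are where the clever $2\cdot|C_{i,j}|$ growth test pays off.

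First I would verify properties (i)--(iii) of Definition~\ref{def:wellseparated}. Coverage (i) and disjointness (ii) hold because every selected node is immediately removed from the global set $U$ and the algorithm loops until $U=\emptyset$; once a node is placed in a group it never reappears. For property (iii), the key observation is that whenever a group $C_{i,j}$ is finalized, every node within distance $2r$ of it is removed from $U'$ (either it was absorbed into $C_{i,j}$, or it sat in the final $N_s(u)$ that failed the growth test and got deleted from $U'$). Hence no later group \emph{on the same layer} $i$ can contain a node within distance $2r$ of $C_{i,j}$; since distinct groups on layer $i$ are built from the residual $U'$, any two such groups are more than $2r$ apart. This gives (iii).

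The heart of the argument is the diameter bound (property (iv)). Here I would track the sizes $|C_{i,j}|$ through the inner augmentation loop. Each successful augmentation step requires $|N_s(u)|\ge 2|C_{i,j}|$, so adding $N_s(u)$ more than triples the group size ($|C_{i,j}|$ grows by a factor of at least $3$ each successful round, since the new size is $|C_{i,j}|+|N_s(u)|\ge 3|C_{i,j}|$). Because the group sits inside $C$ with $|C|=k$, the number of successful augmentation rounds is at most $\lfloor\log_3 k\rfloor$. Each round extends the reach of the group by at most $2r$ (every newly added node is within $2r$ of a previously added node), so after $t$ successful rounds the group has diameter at most $2\cdot(2r)\cdot t$ via a path argument through the BFS-like layers $N_0,N_1,\ldots$; bounding $t\le\lfloor\log_3 k\rfloor$ yields diameter at most $4r\lfloor\log_3 k\rfloor=h$. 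I would write this as: any two nodes in $C_{i,j}$ are each reachable from $u$ within $t$ hops of length $\le 2r$, so their mutual distance is at most $2\cdot 2r\cdot t$.

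Finally, for the layer count I would use a global potential argument. The crucial point is the failed growth test: when a group closes, the rejected set $N_s(u)$ satisfies $|N_s(u)|<2|C_{i,j}|$, and these rejected nodes are exactly the ones pushed to later layers. Comparing the number of nodes that \emph{stay} on layer $i$ (namely those placed in groups) against those \emph{deferred}, the growth condition forces at least a constant fraction to be retained; quantitatively, across the whole layer the nodes assigned to it are at least half of the nodes that were candidates, so passing from $U$ at the start of layer $i$ to $U$ at the start of layer $i+1$ shrinks the candidate count by a factor of at least $\tfrac{3}{2}$ in the appropriate accounting (each deferred node is ``charged'' against retained nodes at ratio better than $2{:}1$). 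Since $|C|=k$ and each layer reduces the surviving count by a factor of at least $\tfrac{3}{2}$, after $1+\lfloor\log_{3/2}k\rfloor$ layers $U$ is empty, giving $\ell$ layers. The main obstacle I anticipate is making this last charging argument airtight: one must be careful that the $2|C_{i,j}|$ threshold compares new nodes against the \emph{current} group size rather than the full layer, so the per-group inequality must be summed correctly over all groups on a layer to obtain the clean factor-$\tfrac32$ global decrease, and one should confirm the edge cases where a group consists of the single seed node $u$.
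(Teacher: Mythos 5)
Your proposal follows the paper's proof essentially step for step: the same bookkeeping for properties (i)--(iii), the same tripling argument (group size at least $3^t$ after $t$ successful rounds) bounding the number of augmentation rounds by $\lfloor\log_3 k\rfloor$ and giving diameter at most $4r\lfloor\log_3 k\rfloor$ via paths through the seed $u$, and the same per-group charging $|N_s(u)|<2|C_{i,j}|$ summed over a layer to conclude that at most a $2/3$ fraction of the remaining nodes is deferred, hence at most $1+\lfloor\log_{3/2}k\rfloor$ layers. One wording slip to fix: the nodes assigned on a layer are at least one \emph{third} (not half) of that layer's candidates --- your own parenthetical $2{:}1$ charging ratio gives exactly this, and it is what produces the factor $\tfrac{3}{2}$ per layer.
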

\begin{proof}
Let $\{C_{1,1}, \ldots, C_{1,\ell_1}\}, \{C_{2,1}, \ldots, C_{2,\ell_2}\}, \ldots, \{C_{\ell,1},\ldots,C_{\ell,\ell_{\ell}}\}$ denote the output of Algorithm~\ref{alg:partition_generalmetric}. The algorithm ensures that every point from~$C$ is contained in exactly one group~$C_{i,j}$ because when nodes are deleted from~$U$ in Line~18 they have been added to~$C_{i,j}$ in Line~16. Furthermore~$U'$ is always a subset of~$U$ and so no node can be assigned to multiple clusters. Furthermore, Lines~14 and~21 ensure that nodes in different groups of the same layer are more than $2r$ apart. This shows that the properties (i), (ii), and (iii) in Definition~\ref{def:wellseparated} are satisfied.

Next we show property (iv) that the maximum diameter of every group is~$h$.
As long as the number of nearby nodes in $N_s(u)$ is at least twice the number of the previously grouped nodes in~$\cup_{t=1}^{s-1}N_t(u)$, we add these nearby nodes to the current group. As long as this is true we have
\[
   |N_s(u)| \ge 2\cdot \sum_{t=0}^{s-1}|N_t(u)|.
\]
Together with $|N_0(u)|=1$, this implies $|\cup_{t=1}^{s}N_t(u)|\ge 3^{s}$ for every~$s$ by a simple inductive argument. Since this set cannot contain more than~$k=|C|$ nodes, we have $C_{i,j}=\bigcup_{s=1}^{h} N_s(u)$ for some~$h\le \lfloor \log_3{k} \rfloor$. For any~$s\ge 1$, any node in~$N_s(u)$ has a distance of at most~$2r$ from some node in $N_{s-1}(u)$. Since $u$ is the only node in~$N_0(u)$, this implies that any node has a distance of at most~$2rh$ from~$u$. Hence, the diameter of every group is at most $4rh\le 4r\lfloor \log_3{k} \rfloor$. This shows property (iv) in Definition~\ref{def:wellseparated}.

Now it only remains to bound the number of layers of the partition. When a new layer is started, $U'$ is set to $U$, the set of yet unassigned nodes in Line~6. When a group is formed then its current neighbors~$N_s(u)$ get removed from~$U'$ in Line~21. These are exactly the nodes that do not get assigned to the current layer and have to be assigned to other layers afterwards. Since line~21 is only reached if $|N_s(u)|$ is smaller than twice $|C_{i,j}|$, at least one third of the initially unassigned nodes get assigned to groups on the current layer and at most two thirds are postponed to other layers afterwards. This implies that after $\ell$ layers, there are no more than $(\frac{2}{3})^{\ell} \cdot k$ nodes left to be assigned. Hence, the number of layers cannot be more than $1+\lfloor \log_{\frac32}(k)\rfloor$.
\end{proof}

We found out in hindsight that a problem related to computing well-separated partitions has been studied in the context of distributed computing many years ago. Linial and Saks study the problem of decomposing an unweighted graph into multiple blocks with small diameter where the diameter of a block is defined as the largest diameter of one of its connected components~\cite{LinialS93}. They show that every graph with $n$ nodes has a decomposition into $O(\log{n})$ blocks with diameter $O(\log{n})$. The problem of computing a well-separated partition can be reduced to this problem as follows. We define an unweighted graph $G'$ with node set $V$ and connect two nodes if and only if their distance in the metric is at most $2r$. A decomposition of this graph into blocks can then be translated into a well-separated partition. Each block corresponds to one layer of the partition and the different connected components within a block are the groups on that layer. Since they are not connected in $G'$ their distance is more than $2r$ as required. Hence, the algorithm of Linial and Saks, which is actually almost the same algorithm as Algorithm~\ref{alg:partition_generalmetric}, can be used to obtain a well-separated partition with $O(\log{n})$ layers and parameter $h=h_i=O(2r\log{n})$.
  
Based on Corollaries~\ref{cor:ApproximationByPartition} and~\ref{cor:ApproximationByPartitionDiameter}, it is now easy to prove the following theorem.
\begin{theorem}\label{thm:log-approx_disjoint}
   There exists an $O(\log^2{k})$-approximation algorithm for the connected $k$-center problem and for the connected $k$-diameter problem with disjoint clusters.
\end{theorem}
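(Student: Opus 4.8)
The plan is to simply feed the well-separated partition produced by Algorithm~\ref{alg:partition_generalmetric} into the two black-box reductions already established in Corollaries~\ref{cor:ApproximationByPartition} and~\ref{cor:ApproximationByPartitionDiameter}. The only genuine content that remains is a parameter substitution together with the observation that Algorithm~\ref{alg:partition_generalmetric} runs in polynomial time.

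First I would note that Algorithm~\ref{alg:partition_generalmetric} is a BFS-style greedy procedure whose running time is clearly polynomial in $|C|\le k\le n$, so it qualifies as the polynomial-time partition subroutine demanded by both corollaries. By Lemma~\ref{lem:partition_generalmetric}, applied to the center metric $(C,d)$ returned by Algorithm~\ref{alg:guess_general}, the partition it outputs is $r$-well-separated with at most $\ell=1+\lfloor\log_{\frac32}(k)\rfloor$ layers and uniform parameters $h_1=\cdots=h_\ell=h=4r\lfloor\log_3 k\rfloor$. In particular $\ell=O(\log k)$ and $h_i/r=O(\log k)$ for every layer~$i$.

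For the connected $k$-center problem I would plug these values into the approximation factor from Corollary~\ref{cor:ApproximationByPartition}:
\[
  4\ell-2+2\sum_{i=1}^{\ell}\frac{h_i}{r}
  =4\ell-2+2\ell\cdot\frac{h}{r}
  =O(\log k)+O(\log k)\cdot O(\log k)
  =O(\log^2 k).
\]
The term $2\ell\cdot(h/r)$ is the dominant one and is exactly the product of the two logarithmic quantities. The connected $k$-diameter case is identical: substituting the same parameters into Corollary~\ref{cor:ApproximationByPartitionDiameter} yields $4\ell-2+h_1/r+2\sum_{i=2}^{\ell}h_i/r=4\ell-2+(2\ell-1)\cdot(h/r)=O(\log^2 k)$.

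There is no real obstacle here, since all the heavy lifting has already been done: the correctness and the layer/diameter bounds of Algorithm~\ref{alg:partition_generalmetric} (Lemma~\ref{lem:partition_generalmetric}) and the layer-by-layer merging reductions (Corollaries~\ref{cor:ApproximationByPartition} and~\ref{cor:ApproximationByPartitionDiameter}). The only point to be careful about is that both logarithmic factors multiply---one coming from the number of layers~$\ell$ and one from the per-layer normalized diameter $h/r$---so that the resulting bound is genuinely $\log^2 k$ rather than $\log k$, matching the statement of the theorem.
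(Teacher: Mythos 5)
Your proposal is correct and follows essentially the same route as the paper's own proof: invoke Lemma~\ref{lem:partition_generalmetric} to get an $r$-well-separated partition with $\ell=O(\log k)$ layers and uniform parameter $h=O(r\log k)$, then substitute into Corollaries~\ref{cor:ApproximationByPartition} and~\ref{cor:ApproximationByPartitionDiameter}. The arithmetic, including the observation that the dominant term is the product $\ell\cdot(h/r)=O(\log^2 k)$, matches the paper exactly.
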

\begin{proof}
According to Lemma~\ref{lem:partition_generalmetric}, one can efficiently compute for any metric an $r$-well-separated partition with at most $\ell$ layers and parameters $(h,\ldots,h)$ for $\ell =1+\lfloor \log_{\frac32}(k)\rfloor=O(\log{k})$ and $h=4r\lfloor \log_3{k} \rfloor=O(r\cdot\log{k})$.

By Corollary~\ref{cor:ApproximationByPartition} this implies that we can efficiently find a solution for the connected $k$-center problem with disjoint clusters with approximation factor
\[
  4\ell-2+2\sum_{i=1}^{\ell}h/r = O(\ell+\ell h/r) = O(\log{k}+\log^2{k})
  =O(\log^2{k}).
\]

By Corollary~\ref{cor:ApproximationByPartitionDiameter}, it also implies that we can efficiently find a solution for the connected $k$-diameter problem with disjoint clusters with approximation factor
\[
  4\ell-2+h_1/r+2\sum_{i=2}^{\ell}h_i/r = O(\ell+\ell h/r) = O(\log{k}+\log^2{k})
  =O(\log^2{k}).\qedhere
\]
\end{proof}

\subsubsection{Well-separated Partitions in Euclidean Metrics} 

In this section, we study how to compute an $r$-well-separated partition if the metric is an $L_p$-metric in the $d$-dimensional space $\mathbb{R}^d$ for some $p\in \{1, 2, \ldots, \infty\}$. 

\begin{lemma}
   \label{lem:partition_Euclidean}
  For any $L_p$-metric in $\mathbb{R}^d$, an $r$-well-separated partition with $d+1$ layers and parameters $(h,\ldots,h)$ with $h=2r d^{1+1/p}$ can be computed in polynomial time.
\end{lemma}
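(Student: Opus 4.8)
The plan is to construct the partition using a grid-based approach, exploiting the bounded-dimensionality structure of $L_p$-metrics in $\mathbb{R}^d$. The key insight is that we want $d+1$ layers, and this number strongly suggests a coloring scheme based on coordinate parities or a simplex/grid-shifting argument. First I would impose an axis-aligned grid on $\mathbb{R}^d$ with side length $s$, where $s$ is chosen large enough that points in non-adjacent grid cells are automatically more than $2r$ apart (this forces $s$ to be on the order of $2r$, possibly scaled by a dimension-dependent factor to account for the $L_p$-norm). Each grid cell then has small diameter, bounded by $s \cdot d^{1/p}$ for the $L_p$-norm, which will feed into the parameter $h$.

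\medskip

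\textbf{Assigning cells to layers.} The heart of the argument is to color the grid cells with $d+1$ colors so that each color class forms one layer of the partition, and so that within a layer, any two cells assigned the same layer are separated by more than $2r$. The natural scheme is to use the \emph{sum of cell coordinates modulo $d+1$}, or alternatively a shifting argument where on each layer we take a sublattice of cells that are mutually non-adjacent. The number $d+1$ arises because in $\mathbb{R}^d$ one can tile space so that each cell has a controlled number of ``close'' neighbors, and $d+1$ colors suffice to ensure that no two cells of the same color touch. Within each layer, a group would be formed either by a single cell or by a small bounded cluster of cells that are within $2r$; property (iii) of Definition~\ref{def:wellseparated} demands that distinct groups on the same layer be more than $2r$ apart, so the coloring must guarantee this separation. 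The group diameter bound $h = 2r\, d^{1+1/p}$ suggests that each group spans on the order of $d$ grid cells in each coordinate direction, with the $d^{1/p}$ factor coming from converting the coordinate-wise extent into an $L_p$-distance and the extra factor of $d$ from the number of layers or cells merged.

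\medskip

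\textbf{Verifying the parameters.} Once the coloring is fixed, I would verify each property of Definition~\ref{def:wellseparated} in turn: coverage and disjointness are immediate from the fact that every point lies in exactly one grid cell and each cell is assigned exactly one color (layer). For property (iii), I would check that the chosen grid side length and coloring guarantee that two same-layer groups are more than $2r$ apart — this is where the precise value of $s$ relative to $2r$ matters, and I expect to need $s > 2r$ by a margin that accounts for the $L_p$-to-$L_\infty$ conversion factor $d^{1/p}$. For property (iv), I would bound the diameter of each group: a group confined to a bounded region of the grid has coordinate-wise extent $O(d \cdot s)$, and taking the $L_p$-norm introduces the factor $d^{1/p}$, yielding $h = O(r\, d^{1+1/p})$ and matching the claimed constant after choosing $s$ carefully.

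\medskip

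\textbf{The main obstacle} will be pinning down the precise interplay between the grid side length $s$, the separation requirement of more than $2r$ in property (iii), and the diameter bound of exactly $2r\, d^{1+1/p}$ in property (iv). These two requirements pull in opposite directions — a larger $s$ makes separation easier but enlarges group diameters — so the constants must be balanced exactly to hit both $d+1$ layers and $h = 2r\, d^{1+1/p}$ simultaneously. I expect the cleanest route is to work first in the $L_\infty$-norm (where grid reasoning is most transparent), establish the separation and diameter bounds there, and only at the end convert to the general $L_p$-norm using the standard inequality $\|x\|_p \le d^{1/p}\|x\|_\infty$, which naturally produces the $d^{1/p}$ factor. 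The $d+1$ layer count will require a careful argument that a shifted-lattice or modular coloring of grid cells in $\mathbb{R}^d$ achieves pairwise separation of same-colored cells using exactly $d+1$ colors.
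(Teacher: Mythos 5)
Your overall skeleton (axis-aligned cells, one color class per layer, checking the four properties of Definition~\ref{def:wellseparated}, and converting $L_\infty$ extents to $L_p$ via $\|x\|_p\le d^{1/p}\|x\|_\infty$) is the same as the paper's, but the step you leave open --- that a grid of congruent cells of side roughly $2r$ can be colored with $d+1$ colors so that no two same-colored cells touch, even diagonally --- is not merely unproven; it is false, and it is exactly where the content of the lemma lies. In a plain product grid, the $2^d$ cells whose index vectors form a $\{0,1\}^d$ block all meet at a common corner, so they are pairwise adjacent; hence any coloring in which same-colored cells are never adjacent needs at least $2^d$ colors, not $d+1$. Your concrete suggestion, coloring by the coordinate sum modulo $d+1$, fails already for $d=2$: cells whose indices differ by $(1,-1)$ have equal coordinate sums, yet they meet at a corner, so two same-layer groups would be at distance $0$, violating property (iii) (and since the center set $C$ is arbitrary, you cannot assume no points sit near that corner). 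Merging touching same-colored cells into one group does not repair this either: under the modular coloring, the cells with indices $(0,0),(1,-1),(2,-2),\ldots$ form an infinite chain of pairwise-touching same-colored cells, so the merged group would have unbounded diameter.

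The missing idea is that the tiling must be staggered rather than a plain grid. The paper builds it inductively, slab by slab in the last coordinate: each slab of thickness $2r$ carries a copy of the $(d-1)$-dimensional tiling, but with its colors cyclically exchanged against a $(d+1)$-st color as one moves from slab to slab (a brick-laying pattern), and same-colored bricks in consecutive slabs are glued into a single hyperrectangle. Because each color is replaced after every $d$ slabs, at most $d$ bricks are glued in any one direction, so the resulting hyperrectangles have side lengths up to $2rd$; separation on a layer then follows because non-glued same-colored hyperrectangles are non-neighbors and every side length is at least $2r$. This elongation is also what produces the diameter bound $\left(\sum_{i=1}^{d}(2ri)^p\right)^{1/p}\le 2rd^{1+1/p}$: the factor $d^{1+1/p}$ is not a constant-balancing artifact, as your ``main obstacle'' paragraph suggests, but a structural consequence of the staggering. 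Cells of side $O(r)$ would indeed give diameter $O(rd^{1/p})$, but with such cells $d+1$ layers are impossible by the corner-clique argument above, so without some such shifted construction your plan cannot reach $d+1$ layers at all.
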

\begin{proof}
First we partition the space $\mathbb{R}^d$ into $d$-dimensional hyperrectangles with side length at most~$2rd$. These hyperrectangles are chosen such that they are pairwise disjoint and that they cover the entire space. We color these hyperrectangles with $d+1$ colors such that no two neighboring hyperrectangles get the same color where also diagonal neighbors are taken into account. Based on this coloring we then create an $r$-well-separated partition as follows: each color corresponds to one layer of the partition and within a layer all nodes that belong to the same hypercube form a group.

For $d=1$, we partition the space $\mathbb{R}$ by pairwise disjoint intervals of length $2r$ that are alternately colored with colors $1$ and $2$. We assume that these intervals are left-closed and right-open. This results in a well-separated partition with $d+1=2$ layers where each group has a diameter of at most $2r$ with respect to any $L_p$-metric. Since the intervals are half-open, nodes from different intervals with the same color have a distance strictly larger than $2r$.

For $d\ge 2$ the construction is inductive. We take the $(d-1)$-dimensional partition of $\mathbb{R}^{d-1}$ and make it a $d$-dimensional object by extending all hyperrectangles in this partition in dimension $d$ by a length of $2r$. Formally, a $(d-1)$-dimensional hyperrectangle $H\subseteq\mathbb{R}^{d-1}$ becomes the $d$-dimensional hyperrectangle $H\times[0,2r)\subseteq\mathbb{R}^{d}$. These hyperrectangles together form a partition of $\mathbb{R}^{d-1}\times[0,2r)$ with $d$ colors. Let $P(1,\ldots,d)$ denote this partition, where the parameters $1,\ldots,d$ denote the different colors that occur in the partition. In the following, we will use the same partition with different colors, i.e., $P(j_1,\ldots,j_d)$ denotes the partition in which color $i$ gets replaced by color $j_i$ for every $i\in[d]$. We extend $P(1,\ldots,d)$ to a partition of the entire space $\mathbb{R}^d$ by stacking copies of it (with exchanged colors) on top of each other according to the following scheme:
\begin{itemize}
  \item We use $P(1,\ldots,d)$ to cover $\mathbb{R}^{d-1}\times[0,2r)$.
  \item We use $P(d+1,2,\ldots,d)$ to cover $\mathbb{R}^{d-1}\times[2r,4r)$.
  \item We use $P(d+1,1,3\ldots,d)$ to cover $\mathbb{R}^{d-1}\times[4r,6r)$.
  \item We use $P(d+1,1,2,4\ldots,d)$ to cover $\mathbb{R}^{d-1}\times[6r,8r)$.
  \item \ldots
\end{itemize}
Formally the pattern can be described as follows: In each step there is a pointer pointing to one of the color parameters. This color is exchanged by the unique color from $\{1,\ldots,d+1\}$ that is currently not present in the parameter vector. Then the pointer is moved one position to the right or to the first position if it is already at the rightmost position. Afterwards neighboring hyperrectangles with the same color are treated as a single hyperrectangle (observe that there cannot be any diagonal neighbors). This scheme can analogously be used to also cover $\mathbb{R}^{d-1}\times\mathbb{R}_{<0}$ or one could assume without loss of generality that all data points lie in the positive orthant in which case only a partition of $\mathbb{R}_{\ge0}^d$ is needed. Figure~\ref{fig:HypercubeColoring} shows an illustration of our construction.

\begin{figure}
\centering
\includegraphics[scale=0.5]{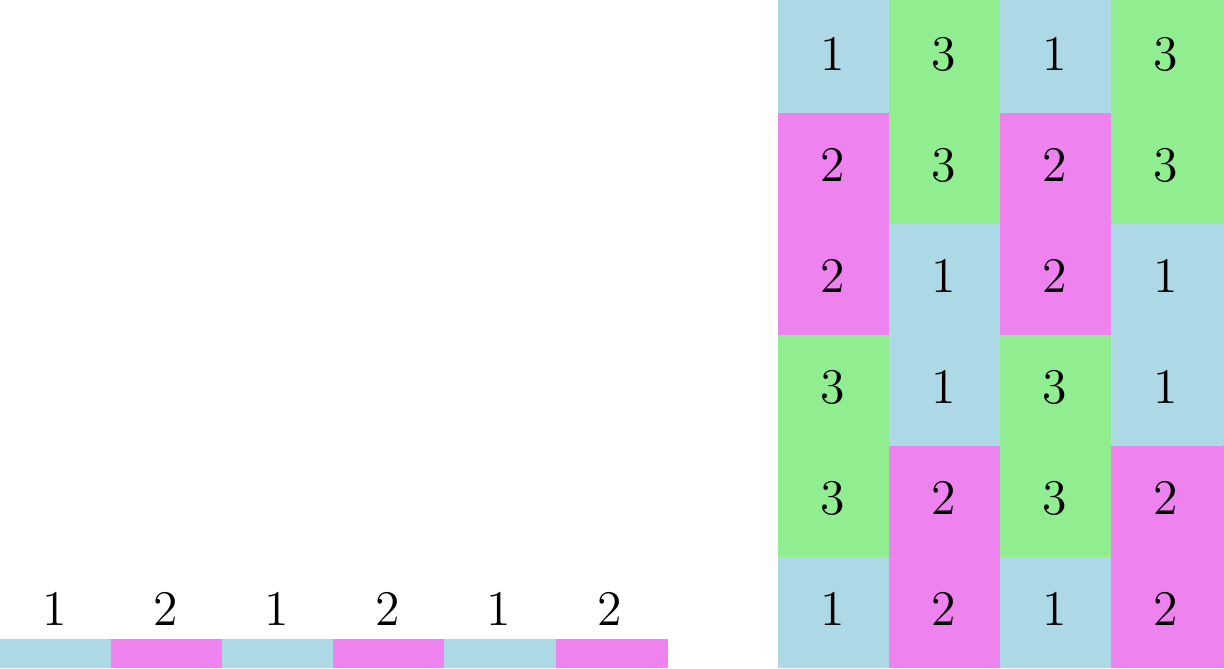}
\caption{On the left the construction for $d=1$ is shown while on the right the construction for $d=2$ is illustrated.}
\label{fig:HypercubeColoring}
\end{figure}

Let us now consider the properties of a well-separated partition. Properties~(i) and~(ii) are satisfied because the hyperrectangles partition the space~$\mathbb{R}^d$. Since neighboring hyperrectangles of the same color are treated as a single hyperrectangle and the intervals are half-open, the distance between two hyperrectangles of the same color is larger than the shortest possible side length of any of the hyperrectangles. Since any side length is at least $2r$, property~(iii) of Definition~\ref{def:wellseparated} is satisfied. It remains to bound the diameter of the hyperrectangles. The initial side length for $d=1$ is $2r$ and in the inductive construction for $d\ge 2$ the side length in dimension $d$ is $2r$ as well. However, the side length in dimension $d$ can grow in the construction because multiple hyperrectangles of the same color might be glued together. Observe that in our construction each color parameter is changed after $d$ steps because the pointer moves cyclically through the parameters. This means that only $d$ of the hyperrectangles can be glued together, resulting in a side length of $2rd$ in dimension $d$. Hence, the diameter of the hyperrectangles is
\[
  \left(\sum_{i=1}^d (2ri)^p\right)^{1/p}
  = 2r\left(\sum_{i=1}^d i^p\right)^{1/p}
  \le 2rd^{1+1/p},
\]
which proves property~(iv) for $h=2r d^{1+1/p}$.
\end{proof}

Based on Corollaries~\ref{cor:ApproximationByPartition} and~\ref{cor:ApproximationByPartitionDiameter}, it is now easy to prove the following theorem.
\begin{theorem}\label{thm:euclidean_approx}
For any $L_p$-metric in~$\mathbb{R}^d$, there exists an $O(d^{2+1/p})$-approximation algorithm for the connected $k$-center problem and for the connected $k$-diameter problem with disjoint clusters.
\end{theorem}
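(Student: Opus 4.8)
The plan is to treat this theorem as a direct consequence of the machinery already in place: Lemma~\ref{lem:partition_Euclidean} provides, in polynomial time, an $r$-well-separated partition for any $L_p$-metric in $\mathbb{R}^d$, and Corollaries~\ref{cor:ApproximationByPartition} and~\ref{cor:ApproximationByPartitionDiameter} convert any such partition into a disjoint connected clustering with a quantified approximation factor. So no new construction is needed; the whole argument reduces to substituting the Euclidean parameters into the two corollaries and simplifying. I would prove the $k$-center and $k$-diameter statements in parallel, since they differ only in which corollary is applied.

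Concretely, I would invoke Lemma~\ref{lem:partition_Euclidean} to obtain $\ell=d+1$ layers with uniform parameters $h_i=h=2rd^{1+1/p}$, so that $h_i/r=2d^{1+1/p}$ for every $i$. Feeding this into Corollary~\ref{cor:ApproximationByPartition} gives, for the connected $k$-center problem with disjoint clusters, an approximation factor of
\[
  4(d+1)-2+2\sum_{i=1}^{d+1}\frac{h_i}{r}=4d+2+4(d+1)\,d^{1+1/p}=O(d^{2+1/p}).
\]
For the connected $k$-diameter problem with disjoint clusters I would use Corollary~\ref{cor:ApproximationByPartitionDiameter} with the identical parameters, yielding
\[
  4(d+1)-2+\frac{h_1}{r}+2\sum_{i=2}^{d+1}\frac{h_i}{r}=4d+2+2d^{1+1/p}+4d^{2+1/p}=O(d^{2+1/p}),
\]
since the final sum has $d$ terms. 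In both cases the dominant contribution is the $d^{2+1/p}$ term coming from $\ell\cdot h/r\approx(d+1)\cdot 2d^{1+1/p}$, which is exactly where the exponent $2+1/p$ originates.

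I do not expect a genuine obstacle here, because all of the technical difficulty has been absorbed into the earlier results: the colored hyperrectangle construction and its diameter bound $h=2rd^{1+1/p}$ live entirely in Lemma~\ref{lem:partition_Euclidean}, and the delicate layer-by-layer merging that controls the radius and diameter lives in the two corollaries. The only point I would state explicitly for correctness is that the $d+1$ colors of the partition correspond to exactly $d+1$ layers, so that the factor $4\ell-2$ is linear in $d$; once this and the uniform parameter $h$ are taken from the lemma, the theorem follows from the arithmetic above.
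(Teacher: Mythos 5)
Your proposal is correct and follows exactly the same route as the paper's proof: plug the parameters $\ell=d+1$ and $h=2rd^{1+1/p}$ from Lemma~\ref{lem:partition_Euclidean} into Corollaries~\ref{cor:ApproximationByPartition} and~\ref{cor:ApproximationByPartitionDiameter} and simplify. Your arithmetic is accurate (indeed slightly more explicit than the paper's), so nothing is missing.
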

\begin{proof}
According to Lemma~\ref{lem:partition_Euclidean}, we can efficiently compute an $r$-well-separated partition with~$d+1$ layers and parameters $(h,\ldots,h)$ for $h=2r d^{1+1/p}$.

By Corollary~\ref{cor:ApproximationByPartition} this implies that we can efficiently find a solution for the connected $k$-center problem with disjoint clusters with approximation factor
\[
  4\ell-2+2\sum_{i=1}^{d+1}h/r = O(d^{2+1/p}).
\]

By Corollary~\ref{cor:ApproximationByPartitionDiameter}, it also implies that we can efficiently find a solution for the connected $k$-diameter problem with disjoint clusters with approximation factor
\[
  4d-2+h/r+2\sum_{i=2}^{d+1}h/r = O(d^{2+1/p}).\qedhere
\]
\end{proof}

\subsubsection{Well-separated Partitions in Metrics with Small Doubling Dimension} 

In this section, we study how to compute an $r$-well-separated partition if the metric has constant doubling dimension. This generalizes Lemma~\ref{lem:partition_Euclidean} for Euclidean spaces.

\begin{definition}[doubling dimension]
The \emph{doubling constant} of a metric space~$M=(X,d)$ is the smallest number~$k$ such that for all $x \in X$ and $r > 0$, the ball $B_r(x):=\{y\in X\mid d(x,y)\le r\}$ can be covered by at most $k$ balls of radius~$r/2$, i.e.,
\[
  \forall x\in X: \forall r > 0: \exists Y\subseteq X, |Y|\le k:
  B_r(x)\subseteq \bigcup_{y\in Y} B_{r/2}(y).
\]
The \emph{doubling dimension} of~$M$ is defined as $\mathrm{dim}(M)=\lceil\log_2{k}\rceil$.
\end{definition}

\begin{lemma}
\label{lem:partition_Doubling}
For any metric~$M=(X,d)$ with doubling dimension~$\mathrm{dim}(M)$, an $r$-well-separated partition with $2^{3\cdot\mathrm{dim}(M)}$ layers and parameters $(h,\ldots,h)$ with $h=2r$ can be computed in polynomial time.
\end{lemma}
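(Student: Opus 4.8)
The plan is to mimic the Euclidean construction of Lemma~\ref{lem:partition_Euclidean}: first partition the ground set into groups of diameter at most $2r$, and then color these groups with few colors so that groups of the same color are pairwise more than $2r$ apart. Each color then becomes one layer, its groups being the cells of that color. Properties (i) and (ii) of Definition~\ref{def:wellseparated} will be immediate from the fact that the cells form a partition, property (iv) will follow from the diameter bound on the cells, and property (iii) will follow because a proper coloring separates conflicting (i.e., close) cells into different layers.

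For the partition I would fix a maximal subset $Y\subseteq X$ whose points have pairwise distance greater than $r$ (a greedy $r$-net, computable in polynomial time). By maximality every point of $X$ lies within distance $r$ of some point of $Y$, so assigning each point to a closest net point yields a partition $\{V_y : y\in Y\}$ in which every point of $V_y$ has distance at most $r$ to $y$. By the triangle inequality through $y$, each cell $V_y$ then has diameter at most $2r=h$, which gives property (iv).

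Next I would form the conflict graph on the cells, joining $V_y$ and $V_{y'}$ whenever $d(x,x')\le 2r$ for some $x\in V_y$ and $x'\in V_{y'}$, and bound its maximum degree. If $V_y$ and $V_{y'}$ conflict, then $d(y,y')\le r+2r+r=4r$, so every net point whose cell conflicts with $V_y$ lies in the closed ball $B_{4r}(y)$. This is exactly where the doubling property is used: applying it three times covers $B_{4r}(y)$ by at most $k^3$ balls of radius $4r/2^3=r/2$, where $k$ denotes the doubling constant, and each such small ball has diameter at most $r$ and hence contains at most one point of the $r$-separated set $Y$. Thus at most $k^3$ net points lie in $B_{4r}(y)$, so the conflict graph has maximum degree at most $k^3-1$ and can be properly colored greedily with at most $k^3$ colors. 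Since $\mathrm{dim}(M)=\lceil\log_2 k\rceil$ gives $k\le 2^{\mathrm{dim}(M)}$, we get $k^3\le 2^{3\cdot\mathrm{dim}(M)}$, so the number of layers is at most $2^{3\cdot\mathrm{dim}(M)}$ as claimed; the remaining properties follow as described above, and every step (net, conflict graph, greedy coloring) runs in polynomial time.

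I expect the only genuine obstacle to be the packing bound underlying the degree estimate — in particular iterating the doubling property exactly three times so that the covering balls have radius $r/2$, which is what forces the exponent $3$ in $2^{3\cdot\mathrm{dim}(M)}$. Everything else is routine triangle-inequality bookkeeping together with the standard fact that greedy coloring uses at most $\Delta+1$ colors on a graph of maximum degree $\Delta$.
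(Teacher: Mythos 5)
Your proof is correct, and it reaches the same bound as the paper by a noticeably more direct route. The paper also starts from a greedy partition into cells of radius at most~$r$ and also finishes by greedily coloring a conflict graph whose degree is below $2^{3\cdot\mathrm{dim}(M)}$, but it obtains that degree bound \emph{dynamically}: whenever some group has at least $2^{3\cdot\mathrm{dim}(M)}$ neighboring groups (centers within $4r$), it observes that the group and its neighbors all lie in a ball $B_{5r}(x)$, applies the doubling property three times ($5r \to 5r/2 \to 5r/4 \to 5r/8 < r$) to re-cover all of them by at most $2^{3\cdot\mathrm{dim}(M)}$ balls of radius less than~$r$, and replaces the offending groups by these, strictly decreasing the number of groups; after linearly many such local improvements every group has small degree. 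You instead get the degree bound \emph{statically}, by exploiting a property the paper's argument never uses: your cell centers form an $r$-separated net, so a packing argument (cover $B_{4r}(y)$ by $k^3$ balls of radius $r/2$, each of which holds at most one net point) bounds the number of conflicting cells once and for all, with no iterative repair step. Both proofs apply the doubling constant exactly three times and both conclude with greedy $(\Delta+1)$-coloring, but your version is shorter, avoids the termination argument for the improvement loop, and avoids the (harmless but slightly awkward) fact that the paper's re-covered groups lose the separation structure of the original centers; the paper's local-improvement technique, on the other hand, works even if one starts from an arbitrary partition into radius-$r$ groups rather than a net-induced one. One small point worth making explicit if you write this up: your conflict relation (some pair of points within $2r$) is a subrelation of the paper's center-distance criterion, and it is precisely non-adjacency in \emph{your} graph that gives property~(iii) of Definition~\ref{def:wellseparated} with no slack, which is why your accounting closes cleanly.
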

\begin{proof}
First we partition~$X$ greedily into balls of radius~$r$: As long as not all points of~$X$ are covered, we choose arbitrarily an uncovered point~$x$ from~$X$ and put~$x$ into one group together with all uncovered points that have a distance of at most~$r$ from~$x$. This way, we get a partition of~$X$ into groups with radius at most~$r$. 

Next, we try to reduce the number of groups by local improvements. We say that two groups are neighboring if the distance of their centers is at most~$4r$. As long as there is a group that has at least $2^{3\cdot\mathrm{dim}(M)}$ neighbors, we replace this group and its neighbors by $2^{3\cdot\mathrm{dim}(M)}$ groups as follows: Let~$x$ be a center of a group that has at least $2^{3\cdot\mathrm{dim}(M)}$ neighbors, and let the centers of the neighbors be $Y\subseteq X$. Since $x$ has a distance of at most~$4r$ from all centers in~$Y$, we have
\[
   B_r(x) \cup_{y\in Y} B_r(y) \subseteq B_{5r}(x).
\]
By definition of the doubling dimension, the ball~$B_{5r}(x)$ can be covered by $2^{\mathrm{dim}(M)}$ balls of radius $5r/2$, each of these can be covered by $2^{\mathrm{dim}(M)}$ balls of radius $5r/4<2r$, and each of these can be covered by $2^{\mathrm{dim}(M)}$ balls of radius $5r/8<r$. Hence, the points in $B_r(x) \cup_{y\in Y} B_r(y)$ can be covered by $2^{3\cdot\mathrm{dim}(M)}$ balls of radius~$r$. In our partition, we replace the groups around~$x$ and around~$y\in Y$ by the groups induced by these balls. Since this reduces the number of groups by at least one, after a linear number of these local improvements, no local improvement is possible anymore, i.e., every group has less than $2^{3\cdot\mathrm{dim}(M)}$ neighbors.

We have obtained a partition of~$X$ into groups, where each group has a radius of at most~$r$. Furthermore, each group has a center and two groups are neighbors if their centers have a distance of at most~$4r$. Furthermore, every group has less than $2^{3\cdot\mathrm{dim}(M)}$ neighbors. The groups will form the groups in the $r$-well-separated partition. Since points from groups that are not neighbored have a distance of more than~$2r$, two groups that are not neighbored can be on the same layer of the partition without contradicting property (iii) from Definition~\ref{def:wellseparated}. The diameter of each group is at most~$h=2r$. It remains to distribute the groups to the different layers of the partition. For this we find a coloring of the groups such that neighboring groups get different colors. The neighborhood defines implicitly a graph with the groups as vertices with degree at most $2^{3\cdot\mathrm{dim}(M)}-1$. Any such graph can be colored with $2^{3\cdot\mathrm{dim}(M)}$ colors by a greedy algorithm. Now we assign the groups according to the colors to different layers, resulting in an $r$-well-separated partition with at most $2^{3\cdot\mathrm{dim}(M)}$ layers.
\end{proof}

Based on Corollaries~\ref{cor:ApproximationByPartition} and~\ref{cor:ApproximationByPartitionDiameter}, it is now easy to prove the following theorem.
\begin{theorem}\label{thm:doubling_dimension}
For any metric~$M=(X,d)$ with doubling dimension~$\mathrm{dim}(M)$, there exists an $O(2^{3\cdot\mathrm{dim}(M)})$-approximation algorithm for the connected $k$-center problem and for the connected $k$-diameter problem with disjoint clusters.
\end{theorem}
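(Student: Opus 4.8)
The plan is to combine Lemma~\ref{lem:partition_Doubling} with Corollaries~\ref{cor:ApproximationByPartition} and~\ref{cor:ApproximationByPartitionDiameter}, exactly mirroring the proofs of Theorems~\ref{thm:log-approx_disjoint} and~\ref{thm:euclidean_approx}. All of the substantive work has already been carried out in Lemma~\ref{lem:partition_Doubling}, which guarantees that for any metric with doubling dimension $\mathrm{dim}(M)$ one can compute in polynomial time an $r$-well-separated partition with $\ell=2^{3\cdot\mathrm{dim}(M)}$ layers and uniform parameters $h_1=\cdots=h_\ell=h=2r$. Hence the only remaining task is to substitute these quantities into the two generic approximation-factor formulas and observe that each evaluates to $O(2^{3\cdot\mathrm{dim}(M)})$.

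First I would invoke Corollary~\ref{cor:ApproximationByPartition} for the connected $k$-center problem. With $\ell=2^{3\cdot\mathrm{dim}(M)}$ and $h/r=2$, the approximation factor is
\[
  4\ell-2+2\sum_{i=1}^{\ell}\frac{h}{r}
  = 4\ell-2+4\ell
  = 8\cdot 2^{3\cdot\mathrm{dim}(M)}-2
  = O\!\left(2^{3\cdot\mathrm{dim}(M)}\right).
\]
Next I would apply Corollary~\ref{cor:ApproximationByPartitionDiameter} for the connected $k$-diameter problem, where the factor becomes
\[
  4\ell-2+\frac{h_1}{r}+2\sum_{i=2}^{\ell}\frac{h_i}{r}
  = 4\ell-2+2+4(\ell-1)
  = 8\cdot 2^{3\cdot\mathrm{dim}(M)}-4
  = O\!\left(2^{3\cdot\mathrm{dim}(M)}\right),
\]
which completes the argument.

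There is essentially no obstacle left at this stage: the theorem is a direct corollary of Lemma~\ref{lem:partition_Doubling}, and the real difficulty was entirely contained in constructing the well-separated partition whose number of layers is bounded by a function of the doubling dimension alone and whose diameter parameter is only $h=2r$. The single point worth verifying is that $h/r=2$ is a genuine constant, independent of both $k$ and the dimension, so that each of the two sums contributes only $O(\ell)$; since every one of the $\ell$ terms equals $2$, the sums are $2\ell$ and $2(\ell-1)$ respectively, and the overall factor stays $O(\ell)=O(2^{3\cdot\mathrm{dim}(M)})$ as claimed.
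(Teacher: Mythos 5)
Your proposal is correct and follows exactly the same route as the paper: invoke the doubling-dimension partition lemma to get $\ell=2^{3\cdot\mathrm{dim}(M)}$ layers with $h=2r$, then substitute into Corollaries~\ref{cor:ApproximationByPartition} and~\ref{cor:ApproximationByPartitionDiameter}. Your explicit evaluation of the two sums ($8\ell-2$ and $8\ell-4$) matches what the paper leaves implicit behind its $O(\cdot)$ notation; incidentally, the paper's own proof cites Lemma~\ref{lem:partition_Euclidean} where it clearly means Lemma~\ref{lem:partition_Doubling}, a slip your write-up avoids.
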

\begin{proof}
According to Lemma~\ref{lem:partition_Euclidean}, we can efficiently compute an $r$-well-separated partition with~$2^{3\cdot\mathrm{dim}(M)}$ layers and parameters $(h,\ldots,h)$ for $h=2r$.

By Corollary~\ref{cor:ApproximationByPartition} this implies that we can efficiently find a solution for the connected $k$-center problem with disjoint clusters with approximation factor
\[
  4\ell-2+2\sum_{i=1}^{\ell}h/r = O(2^{3\cdot\mathrm{dim}(M)}).
\]

By Corollary~\ref{cor:ApproximationByPartitionDiameter}, it also implies that we can efficiently find a solution for the connected $k$-diameter problem with disjoint clusters with approximation factor
\[
  4\ell-2+h_1/r+2\sum_{i=2}^{\ell}h_i/r = O(2^{3\cdot\mathrm{dim}(M)}).\qedhere
\]
\end{proof}

\subsubsection{Well-separated Partitions for Small Number of Clusters} 

In this section, we study how to compute well-separated partitions if the number of clusters is small, particularly when $k=2$. A first observation to improve the approximation factor is that for constant $k$, one can test in polynomial time all possible center sets $C\subseteq V$ with $|C|\le k$ in algorithm \texttt{GreedyClustering} (Algorithm~\ref{alg:guess_general}). Here, we assume that the algorithm has been modified as follows: instead of choosing arbitrary uncovered nodes as centers, it chooses exactly the nodes from~$C$ as centers and calls \texttt{ComputeCluster} once for every such node to form a cluster of radius~$r$. 

Let~$r^*$ denote the radius of an optimal connected $k$-center clustering with non-disjoint clusters. If~$C$ coincides with the center set of such an optimal clustering, then already for $r\ge r^*$ \texttt{GreedyClustering} computes a $k$-clustering covering all points (in Lemma~\ref{lemma:RadiusWithin2Opt} we had to assume $r\ge 2r^*$ because the centers were arbitrarily chosen). Similarly, if we consider the assignment problem for a given center set~$C$ then \texttt{GreedyClustering} computes a $k$-clustering covering all points if $r$ is at least the optimal radius for the given center set~$C$.

A straightforward application of this observation yields the following result, which improves Corollary~\ref{cor:ApproximationByPartition} by a factor of~2.
\begin{corollary}\label{cor:ApproximationByPartitionSmallk}
   If there exists a polynomial-time algorithm that computes for any metric~$(C,d)$ and any~$r$ an $r$-well-separated partition with $\ell$ layers and parameters $(h_1,\ldots,h_{\ell})$ then there exists an approximation algorithm for the connected $k$-center problem with disjoint clusters and given centers that achieves an approximation factor of $2\ell-1+\sum_{i=1}^{\ell}h_i/r$. If the centers are not given, the same factor can be achieved in polynomial time if $k$ is constant.
\end{corollary}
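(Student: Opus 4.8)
The plan is to mimic the proof of Corollary~\ref{cor:ApproximationByPartition} but to exploit the observation preceding the statement: fixing the centers removes the factor-of-$2$ loss that was incurred in Lemma~\ref{lemma:RadiusWithin2Opt}.

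First I would treat the case of given centers~$C$. Run the modified version of \texttt{GreedyClustering} that uses exactly the nodes of~$C$ as centers, calling \texttt{ComputeCluster}$(G,M,r,c)$ once for each $c\in C$, and determine by binary search over the $O(n^2)$ pairwise distances the smallest radius~$r$ for which the resulting clusters cover all of~$V$. Let $r^*_C$ denote the optimal non-disjoint radius achievable with center set~$C$ and let $r^*_{D,C}$ denote the optimal disjoint radius with center set~$C$; clearly $r^*_{D,C}\ge r^*_C$ since every disjoint clustering with centers~$C$ is also a valid non-disjoint one. The key point is that, unlike in Lemma~\ref{lemma:RadiusWithin2Opt}, every chosen center is now the \emph{true} center of its cluster, so the triangle-inequality blow-up disappears: for $r\ge r^*_C$ the set~$T_c$ produced by \texttt{ComputeCluster} is a superset of the optimal cluster of~$c$ (its nodes lie within $r^*_C\le r$ of~$c$ and form a connected subgraph containing~$c$), and hence all points are covered. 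Consequently the smallest covering radius satisfies $r\le r^*_C\le r^*_{D,C}$, i.e. $r^*_{D,C}\ge r$ rather than merely $r^*_{D,C}\ge r/2$.

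Next I would compute an $r$-well-separated partition of~$C$ (which has $|C|\le k$) using the assumed polynomial-time algorithm and feed it into Lemma~\ref{lem:approx_partition}, obtaining a disjoint connected clustering of radius at most $(2\ell-1)r+\sum_{i=1}^{\ell}h_i$. Dividing by the improved lower bound $r^*_{D,C}\ge r$ yields the approximation factor
\[
  \frac{(2\ell-1)r+\sum_{i=1}^{\ell}h_i}{r^*_{D,C}}
  \le \frac{(2\ell-1)r+\sum_{i=1}^{\ell}h_i}{r}
  = 2\ell-1+\sum_{i=1}^{\ell}\frac{h_i}{r},
\]
which is exactly Corollary~\ref{cor:ApproximationByPartition} with the factor~$2$ removed. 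Everything downstream reuses Lemma~\ref{lem:approx_partition} verbatim, so the main point that needs care is precisely the bound $r\le r^*_{D,C}$ above: it rests on re-examining where the factor~$2$ entered Lemma~\ref{lemma:RadiusWithin2Opt} and confirming it vanishes once the chosen center coincides with its own cluster center.

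Finally, for the case that the centers are not given and $k$ is constant, I would enumerate all $O(n^k)$ candidate center sets $C\subseteq V$ with $|C|\le k$, run the procedure above for each, and return the cheapest disjoint clustering found. Letting $C^*$ be the center set of a globally optimal disjoint clustering, with radius $r^*_D=r^*_{D,C^*}$, the iteration $C=C^*$ already produces a solution within the stated factor of $r^*_D$, so the overall algorithm does too; since $k$ is constant the enumeration (together with the $O(\log n)$ binary-search calls and the polynomial-time partition routine per set) is polynomial.
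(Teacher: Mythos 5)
Your proposal is correct and follows essentially the same route as the paper: the paper's proof is exactly the observation that running \texttt{GreedyClustering} with the given (or, for constant $k$, exhaustively enumerated) centers makes the covering radius~$r$ a lower bound on the optimal radius itself rather than on twice the optimum, after which Lemma~\ref{lem:approx_partition} and the division step from Corollary~\ref{cor:ApproximationByPartition} go through verbatim with the factor of~$2$ removed. Your write-up merely makes explicit the details (monotone binary search, $r\le r^*_C\le r^*_{D,C}$, and the $O(n^k)$ enumeration) that the paper leaves as a "straightforward application".
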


For $k=2$ one can easily compute an $r$-well-separated partition with one layer and parameter~$h_1=2r$: either the two centers have a distance of more than~$2r$, in which case they form two different groups on layer~1, or their distance is at most~$2r$ in which case they are assigned to the same group on layer~1. This implies that we can efficiently find a solution for the connected $k$-center problem with disjoint clusters with approximation factor
\[
   2\ell-1+\sum_{i=1}^{\ell}h_i/r = 3.
\]
For the connected $k$-diameter problem with disjoint clusters for $k=2$, we can directly use Corollary~\ref{cor:ApproximationByPartitionDiameter} to obtain an approximation factor of
\[
  4\ell-2+h_1/r+2\sum_{i=2}^{\ell}h_i/r = 4. 
\]

\begin{corollary}
   \label{cor:3and4approx-2givencenters}
For $k=2$, there exists a $3$-approximation algorithm for the connected $k$-center problem with disjoint clusters with and without given centers. For $k=2$, there also exists a $4$-approximation algorithm for the connected $k$-diameter problem with disjoint clusters.
\end{corollary}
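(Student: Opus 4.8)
The statement for $k=2$ follows almost immediately from the machinery already set up, so the plan is essentially to instantiate Corollaries~\ref{cor:ApproximationByPartitionSmallk} and~\ref{cor:ApproximationByPartitionDiameter} with a trivially constructible one-layer well-separated partition. The only real content is producing an $r$-well-separated partition of the two-element center set $C$ with $\ell=1$ layer and parameter $h_1=2r$, and verifying the four properties of Definition~\ref{def:wellseparated} for this tiny instance.

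First I would argue that Algorithm~\ref{alg:guess_general} (in its modified, center-set-aware form from the preceding paragraph) returns a center set $C=\{c_1,c_2\}$ with $|C|\le 2$ whenever $r$ is at least the optimal radius $r^*$ for the given centers, so that $r_D^*\ge r^*\ge r$ holds and the bound $r_D^*\ge r/2$ used in Corollary~\ref{cor:ApproximationByPartition} can in fact be sharpened. The key point, already noted in the text before Corollary~\ref{cor:ApproximationByPartitionSmallk}, is that with the correct centers one only needs $r\ge r^*$ rather than $r\ge 2r^*$; this is exactly what saves the factor of $2$ and yields the refined factor $2\ell-1+\sum_i h_i/r$ for $k$-center.

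Next I would construct the partition explicitly: treat the whole of $C=\{c_1,c_2\}$ as a single layer. If $d(c_1,c_2)>2r$, put $c_1$ and $c_2$ into two separate groups on layer~$1$; property~(iii) then holds by assumption, and each group is a singleton of diameter $0\le 2r$. If $d(c_1,c_2)\le 2r$, put both centers into a single group on layer~$1$; property~(iii) is vacuous (there is only one group), and the group has diameter $d(c_1,c_2)\le 2r$. In both cases properties~(i) and~(ii) are immediate since the groups partition $C$, so this is a valid $r$-well-separated partition with $\ell=1$ and $h_1=2r$. Plugging $\ell=1$, $h_1/r=2$ into the $k$-center bound $2\ell-1+\sum_{i=1}^{\ell}h_i/r$ from Corollary~\ref{cor:ApproximationByPartitionSmallk} gives $2-1+2=3$, and plugging the same values into the $k$-diameter bound $4\ell-2+h_1/r+2\sum_{i=2}^{\ell}h_i/r$ from Corollary~\ref{cor:ApproximationByPartitionDiameter} gives $4-2+2+0=4$.

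Finally I would assemble the two cases into the claimed conclusion, noting that the $k$-center statement covers both the given-centers and the unknown-centers regimes: with given centers we invoke the first half of Corollary~\ref{cor:ApproximationByPartitionSmallk} directly, and without given centers we use its second half, which applies because $k=2$ is constant and we may enumerate all $O(n^2)$ candidate center pairs in polynomial time. There is no real obstacle here; the only thing to be careful about is the case distinction in the partition construction (ensuring property~(iii) is either satisfied by the far-apart assumption or rendered vacuous by merging), and confirming that the diameter bound $h_1=2r$ is tight enough to make both arithmetic evaluations land exactly on $3$ and $4$.
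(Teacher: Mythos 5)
Your proposal is correct and follows essentially the same route as the paper: the identical case distinction ($d(c_1,c_2)>2r$ versus $\le 2r$) yielding a one-layer $r$-well-separated partition with $h_1=2r$, then Corollary~\ref{cor:ApproximationByPartitionSmallk} for the $k$-center bound $2\ell-1+\sum_i h_i/r=3$ and Corollary~\ref{cor:ApproximationByPartitionDiameter} for the $k$-diameter bound $4\ell-2+h_1/r=4$. Your additional remarks (verifying Definition~\ref{def:wellseparated} explicitly and noting the $O(n^2)$ enumeration of center pairs for the unknown-centers case) are just slightly more detailed versions of what the paper leaves implicit.
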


We can further improve the theorem for the $k$-center problem for $k=2$ when the centers are not given. We test all possible choices of the center set~$C$. We determine the smallest~$r$ for which there exists a choice of the center set for which \texttt{GreedyClustering} outputs at most two clusters. This gives the optimal solution for the $k$-center problem with non-disjoint clusters. If the two clusters are disjoint then this is also an optimal solution for the $k$-center problem with disjoint clusters. If not, we merge the two clusters into a single cluster and pick a point in the intersection as the new center. With respect to this point, the new cluster has a radius of at most~$2r$. Hence, we obtain a $2$-approximation of the optimal disjoint solution.

\begin{corollary}
   \label{cor:2approx-2freecenters}
For $k=2$, there exists a $2$-approximation algorithm for the connected $k$-center problem with disjoint clusters.
\end{corollary}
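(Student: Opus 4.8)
The plan is to exploit that for $k=2$ we can afford to enumerate all candidate center sets, and then combine this with a simple merge-and-recenter step. First I would fix the two reference quantities: let $r^*$ denote the radius of an optimal connected $2$-center clustering with \emph{non-disjoint} clusters and $r^*_D$ the radius of an optimal one with \emph{disjoint} clusters. Since every disjoint solution is in particular a non-disjoint one, $r^*\le r^*_D$, so it suffices to output a disjoint solution of radius at most $2r^*$ in order to get the claimed factor $2$ against $r^*_D$.

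Next I would run the modified \texttt{GreedyClustering} described before Corollary~\ref{cor:ApproximationByPartitionSmallk}: for every candidate center set $C\subseteq V$ with $|C|\le 2$ (only $O(n^2)$ of them) and every candidate radius $r$ (one of the $O(n^2)$ pairwise distances), call \texttt{ComputeCluster} (Algorithm~\ref{alg:compute_Cluster:intro}) once from each center of $C$ with radius $r$. Among all pairs $(C,r)$ whose resulting clusters cover all of $V$, pick the smallest $r$; call it $\hat r$ and let $T_{c_1},T_{c_2}$ be the corresponding clusters. I claim $\hat r=r^*$. For $\hat r\le r^*$, take $C$ to be the centers of an optimal non-disjoint clustering: each optimal cluster is connected and lies within distance $r^*$ of its center, so \texttt{ComputeCluster} started there with radius $r^*$ recovers that whole optimal cluster, and the two calls together cover $V$. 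For $\hat r\ge r^*$, observe that any clusters returned by \texttt{ComputeCluster} with radius $r$ are connected and of radius at most $r$, hence form a feasible non-disjoint connected clustering with at most two clusters and radius at most $r$, so $r^*\le r$. Because $k=2$ is constant, this whole search is polynomial (this is exactly the place where, unlike Lemma~\ref{lemma:RadiusWithin2Opt}, guessing the true centers lets us avoid the factor-$2$ blow-up in the radius).

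Finally I would split into cases on $T_{c_1},T_{c_2}$. If they are already disjoint, they form a feasible disjoint clustering of radius $\hat r=r^*\le r^*_D$, i.e.\ an optimal disjoint solution. If they intersect, I merge them into one cluster, which is connected since both parts are connected in $G$ and share a node. Choosing any $p\in T_{c_1}\cap T_{c_2}$ as the new center, the triangle inequality gives $d(x,p)\le d(x,c_i)+d(c_i,p)\le r^*+r^*=2r^*$ for every $x\in T_{c_i}$, so the merged cluster has radius at most $2r^*$ about $p$. To restore exactly $k=2$ clusters, I split off a single leaf $v\neq p$ of a spanning tree of the merged cluster as a (trivially connected) singleton; deleting a leaf keeps the remainder connected and still containing $p$, so its radius stays at most $2r^*$. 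In both cases the output is a pair of disjoint connected clusters of radius at most $2r^*\le 2r^*_D$.

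The only genuinely delicate points are (i) establishing $\hat r=r^*$ exactly: the lower-bound direction is immediate, while the upper-bound direction relies on the observation that \texttt{ComputeCluster} with the \emph{true} optimal centers and radius $r^*$ recovers each optimal cluster in full; and (ii) preserving connectivity through both the merge and the singleton split, which is exactly what the spanning-tree/leaf argument guarantees. Everything else reduces to the single triangle-inequality estimate above.
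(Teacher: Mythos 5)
Your proposal is correct and follows essentially the same route as the paper: enumerate all center pairs and radii to recover the optimal non-disjoint $2$-clustering exactly, then either output it (if disjoint) or merge the two overlapping clusters and recenter at a point of the intersection, giving radius at most $2r^*\le 2r^*_D$ by the triangle inequality. Your additional spanning-tree-leaf argument for restoring exactly two clusters is a detail the paper handles only implicitly (via its earlier remark on isolating singletons), but it is the same proof in substance.
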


\subsection{Lower bound for transforming a non-disjoint clustering into a disjoint clustering} \label{sec:WorstCaseInstance}

In the following, we will describe an instance for the connected $k$-center problem and the connected $k$-diameter problem with disjoint clusters with the following property: There exists a set~$C$ of $k$ centers that can be produced by the algorithm \texttt{GreedyClustering} (Algorithm~\ref{alg:guess_general}) such that even the optimal disjoint solution with centers~$C$ is worse than the optimal disjoint solution for arbitrary centers by a factor of~$\Omega(\log\log{k})$. Hence, to prove a constant factor approximation one cannot rely on the centers chosen by \texttt{GreedyClustering}.

We split the proof into the following two claims: First, we show that for the instance we construct there exists a set of $k$ centers that can be obtained by Algorithm~\ref{alg:guess_general} with radius $r$ such that the optimal disjoint clusters with respect to these centers have a radius of $\Omega(\log \log k) \cdot r$. Second, we show that there exists another set of $k$ centers that allows for a disjoint solution with radius $2r$.

In the following, we use the notation $[n]:=\{1,\ldots,n\}$. Before we introduce the instance, we first define a formula $S(t)$: $S(1) = 0$, $S(2) = 1$ and $S(t+1) = S(t) \cdot (S(t) + 1)$ for $t \ge 2$, and a metric space $(\mathcal{V}_m, d)$ with set $\mathcal{V}_m=([S(m) + 1] \cup \{\perp\}) \times ([S(m - 1) + 1] \cup \{\perp\}) \times \ldots \times ([S(1) + 1] \cup \{\perp\})$ and metric $d$: each point $a\in M$ is represented by a vector $(a_1, a_2, \ldots, a_{m})$; the distance is defined as $d(a, b)=\sum_{i\in [m]} d'(a_i, b_i)$ for any $a, b\in \mathcal{V}_m$ where 
\begin{equation*}
  d'(a_i,b_i) = \begin{cases}
        0 & \text{if $a_i = b_i$,}\\
        1 & \text{if $a_i \neq b_i$ and $(a_i = \perp \text{or}~b_i = \perp)$,}\\
        2 & \text{else.} 
      \end{cases}
  \end{equation*} 
The distance $d$ is indeed a metric on $\mathcal{V}_m$ because $d'$ is a metric for each coordinate~$i$ (the triangle inequality is true because different $a_i$ and $b_i$ have distance $1$ or $2$) and $d$ is the sum of~$d'$ for the different coordinates. 
Let $X(a_i)$ be an indicator variable: 
\begin{equation*}
  X(a_i) = \begin{cases}
        1 & \text{if $a_i =  \perp$,}\\
        0 & \text{else.}  
      \end{cases}
  \end{equation*}  
Let $I(m)$ denote the $m$-th instance with a connectivity graph $G=(V, E)$, metric $(V, d)$, center set $C$ as follows: 
\begin{enumerate}
\item The point set $V$ includes all points such that $\perp$ occurs at most once, i.e.,\\ $V=\{a\mid \forall a\in \mathcal{V}_m, \sum_{i\in [m]} X(a_i) \le 1\}$.\\ Note that $|V|=\prod_{i\in [m]} (S(i)+1) + \sum_{i\in [m]} \prod_{j\in [m]\setminus\{i\}} (S(j)+1)$. 
\item The center set $C$ consists of all points that do not contain $\perp$, i.e.,\\ $C=\{a\mid \forall a\in \mathcal{V}_m, \sum_{i\in [m]} X(a_i) =0\}$.\\ Note that $|C|=\prod_{i\in [m]} (S(i)+1)=S(m+1)$.  
\item For each point $a\in V$ with $a_i\neq \perp$ for all $i\in [m]$, there exists an edge $(a, b)\in E$ where $b=(\perp, a_2, a_3, \ldots, a_{m})$; for any pair of points $a, b\in V$, there is an edge $(a, b)\in E$ if there exists an index $j\in [m-1]$ with $a_j=b_{j+1}=\perp$ and $a_i=b_i$ for all $i\notin \{j, j+1\}$ (see Figure \ref{fig:worst_case_3}).
\end{enumerate}
It is easy to see that Algorithm~\ref{alg:guess_general} for $r=1$ could compute exactly the center set~$C$ for the above instance $I(m)$. The cluster with center $c\in C$ contains all points in $V$ such that one coordinate of $c$ is replaced by $\perp$, i.e., $\{a\in V\mid \exists i\in[m]: \forall j\in [m]\setminus \{i\}: a_i=\perp, a_j=c_j\}$. 

\begin{figure}
\centering
\begin{minipage}{.45\textwidth}
  \centering
  \includegraphics[width=.8\linewidth]{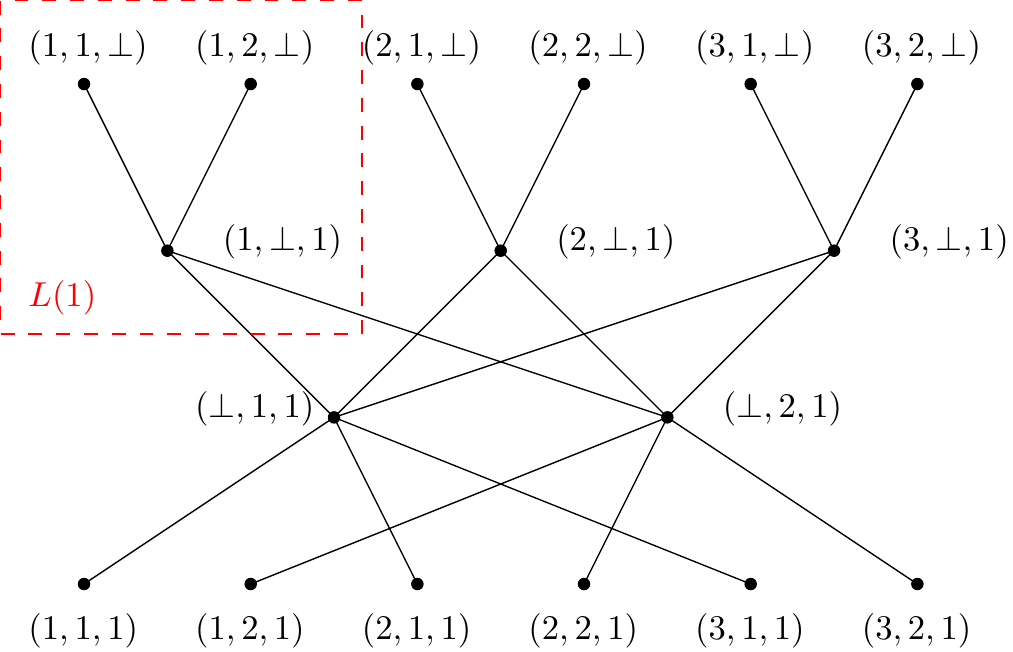}
  \captionof{figure}{The Instance $I(3)$}
  \label{fig:worst_case_3}
\end{minipage}%
\end{figure}
 
Now we are ready to prove the first claim about the gap between the radius of the non-disjoint clusters and the optimal disjoint clusters with the same centers.
\begin{lemma}
  \label{lem:general_worstcase_NtoDisjoint}
  The radius of an optimal solution with disjoint clusters for instance $I_m$ with given centers~$C$ is $2m-1$.    
\end{lemma}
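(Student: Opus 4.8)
The plan is to recast the instance in layered form and then treat the two bounds separately, proving the lower bound by induction on~$m$. For a point $x\in V$ call the index of its (at most one) $\perp$-coordinate its \emph{level}, with the centers in $C$ forming level~$0$ and the points having $\perp$ in position~$p$ forming level~$p$. From the definition of the edge set one checks three facts: edges only join points of consecutive levels; the centers (level~$0$) and the deepest points (level~$m$) both have degree exactly~$1$ in~$G$; and for a center~$c$ and a level-$p$ point~$x$ the distance is $d(c,x)=1+2\,|\{i\neq p: x_i\neq c_i\}|$. Since $|C|=S(m+1)=k$, every cluster is built around exactly one center, and a center can reach points below level~$1$ only through its unique neighbour, the level-$1$ point $(\perp,c_2,\dots,c_m)$ (its \emph{hub}).

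For the upper bound I would use that dropping the first coordinate identifies the part of $I(m)$ at levels $\geq 1$ with $I(m-1)$: the map $\pi(x)=(x_2,\dots,x_m)$ sends hubs bijectively to the centers of $I(m-1)$ and each level-$p$ point with $p\geq 2$ to a level-$(p-1)$ point of $I(m-1)$, the $S(m)+1$ choices of first coordinate acting as a \emph{tag}. Taking (by induction) an optimal disjoint clustering of $I(m-1)$ of radius $2m-3$, I would lift it: assign each hub to one owning center and let that owner absorb every tagged copy lying beneath it. All tagged copies hang off the common hub and the tag is frozen once one descends below level~$1$, so each lifted cluster stays connected and the lifts are disjoint; a mismatched tag costs at most $d'(\cdot,\cdot)\le 2$, so the radius is at most $(2m-3)+2=2m-1$.

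For the lower bound I would argue by induction on~$m$ that every disjoint clustering has radius at least $2m-1$, the base case $m=1$ being the single forced edge of length~$1$. Two ingredients drive the step. First, a \emph{forced-child} observation: the level-$m$ points are leaves, paired two-to-one under their level-$(m-1)$ parents, and both children must lie in the parent's cluster; among the two, the one disagreeing with the center on coordinate $m-1$ is exactly~$2$ farther from the center than the parent. Hence it suffices to exhibit a level-$(m-1)$ point at distance at least $2m-3$ from its center. Second, \emph{self-similarity}: the projection~$\pi$ turns the clustering restricted to levels $\geq 1$ into a covering of $I(m-1)$ rooted at the hubs, to which I want to apply the inductive hypothesis, and then re-introduce the first coordinate to recover the missing $+2$ from the $S(m)+1$ copies that all have to be covered.

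The hard part will be this lower bound, and specifically the overlap created by the projection: because $\pi$ is many-to-one (each deep point of $I(m-1)$ has $S(m)+1$ preimages, all sharing a single hub), the projected family is in general \emph{not} disjoint, so the inductive hypothesis cannot be invoked verbatim. The crux is to exploit the disjointness of the original clustering together with the self-similar sharing structure to certify that the radius really grows by~$2$ at each level—i.e.\ that some copy of the extremal $I(m-1)$-point is forced onto a center whose first coordinate disagrees with it—rather than collapsing back to $2m-3$. I expect the level-by-level bookkeeping of which tagged copies must be pushed onto far centers to be the main technical obstacle.
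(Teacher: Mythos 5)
Your structural analysis of $I(m)$ is accurate (the levels, the degree-one centers and leaves, the unique hub of each center, the tag-preserving edges below level~1), your upper-bound lift would go through, and your forced-child reduction of the lower bound to exhibiting a level-$(m-1)$ point at distance at least $2m-3$ from its center is correct. But the lower bound is the real content of the lemma (the upper bound is essentially trivial, since every point has at most one $\perp$-coordinate and is therefore within $1+2(m-1)=2m-1$ of \emph{any} center), and there your argument is not merely unfinished at a technical step --- the route you propose cannot be repaired as stated. You want to project the clustering restricted to levels $\geq 1$ down to $I(m-1)$ via $\pi$ and invoke the inductive hypothesis; as you yourself note, the projected family is a non-disjoint covering of $I(m-1)$. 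This is fatal rather than a bookkeeping nuisance: the whole point of the construction is that \emph{non-disjoint} coverings of $I(m-1)$ by its centers can have radius~$1$ (each center absorbs all points obtained from it by one $\perp$-substitution), so the inductive hypothesis, which speaks only about disjoint clusterings, yields no information whatsoever about the projected family. Disjointness is exactly the property that $\pi$ destroys, so induction on $m$ through $\pi$ has no mechanism to certify the $+2$ growth per level.

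The missing idea --- and the heart of the paper's proof --- is a pigeonhole counting argument combined with a stronger induction statement that never projects. The paper fixes a prefix of coordinate values $\overline{\alpha}=(\alpha_1,\dots,\alpha_{m'})$ and proves, by induction on $m'$ \emph{inside the fixed instance} $I(m)$, that the prefix can be chosen so that every node of $L(\overline{\alpha})=\{v\in V\setminus C:\ v_i=\alpha_i \text{ for all } i\le m'\}$ is assigned to a center disagreeing with $\overline{\alpha}$ in all of the first $m'$ coordinates. The inductive step uses that $L(\overline{\alpha})$ communicates with the rest of the graph only through its ``decision nodes'' (those with $\perp$ in coordinate $m'+1$), of which there are exactly $S(m-m')$, while coordinate $m'+1$ can take $S(m-m')+1$ values; by pigeonhole some value $\alpha_{m'+1}$ is used by no decision node's center, and connectivity plus disjointness force all of $L(\overline{\alpha},\alpha_{m'+1})$ to be served through those decision nodes, hence by centers disagreeing also in coordinate $m'+1$. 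Taking $m'=m-1$ produces a point with $\perp$ in the last coordinate at distance $2(m-1)+1=2m-1$ from its center. This ``one more available value than gateways'' property is precisely why the recursion $S(t+1)=S(t)\cdot(S(t)+1)$ is set up as it is, and it is the ingredient your plan never identifies; without it, the level-by-level bookkeeping you anticipate has no lever to force a starved tag at every level.
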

  
\begin{proof}
To prove the statement we will define the following sets of nodes: For an $m' < m$ and $(\alpha_1,\ldots,\alpha_{m'}) \in [S(m) + 1] \times [S(m - 1) + 1]\times \ldots \times [S(m - m' + 1) + 1]$ we define $L(\alpha_1,\ldots,\alpha_{m'}) : = \left\{v \in V \setminus C \mid \forall i \leq m': v_i = \alpha_i\right\}$. For $m'=0$ we write $L(\alpha_1,\ldots,\alpha_{m'})=L()=V\setminus C$.

Let us consider an arbitrary disjoint assignment of the nodes to the center set $C$. We will prove the following statement: For all $m' \in \{0,\ldots,m -1\}$ there exist $\alpha_1,\ldots,\alpha_{m'}$ such that every node in $L(\alpha_1,\ldots,\alpha_{m'})$ gets assigned to a center $c$ with $c_i \neq \alpha_i$ for all $ i \leq m'$. Note that this statement directly proves the lemma, because for $m' = m-1$ we get that there exist a node in $V$ with $\perp$ in the last coordinate that gets assigned to a center $c$ with $c_i \neq v_i$ for all $i \leq m -1$. Since $c$ does not contain a $\perp$ in any coordinate, we may conclude that the distance between $v$ and $c$ is exactly $2(m-1) + 1 = 2m -1$. Thus also the radius of the assignment is lower bounded by $2m -1$.

To prove the statement we will use induction. For $m' = 0$, we have that $L() = V \setminus C$. Furthermore it holds trivially for all $v \in L()$ that they get assigned to a center $c$ with $v_i \neq c_i$ for all $i \leq 0$ since there are no coordinates with an index smaller $1$.

Let us now assume that the statement holds for an $m' < m -1$. Let $\overline{\alpha} = \alpha_1,\ldots,\alpha_{m'}$ be the respective tuple. Note that all edges between nodes in $L(\overline{\alpha})$ and nodes in $V \setminus L(\overline{\alpha})$ are incident to nodes $v \in L(\overline{\alpha})$ with $v_{m' +1} = \perp$. We will call these nodes in the following \emph{decision nodes}. Obviously the nodes $L(\overline{\alpha})$ can only be assigned to the same centers as the decision nodes. There are exactly $\prod_{i= 1}^{m -m' -1} (S(i) +1) = S(m - m')$ decision nodes. Thus we may conclude by the pigeonhole principle that there exists an $\alpha_{m'+1} \in [S(m - m') +1]$ such that no decision node gets assigned to a center $c$ with $c_{m' +1} = \alpha_{m' +1}$. By combining this with the fact that $L(\overline{\alpha},\alpha_{m'+1}) \subseteq L(\overline{\alpha})$ we obtain that every node in $L(\alpha_1,\ldots,\alpha_{m'},\alpha_{m' +1})$ gets assigned to a center $c$ with $\alpha_i \neq c_i$ for $i \leq m' +1$. Thus the lemma is proven by induction.
\end{proof}

  To show the gap $\Omega(\log \log k)$ between the radius of the non-disjoint clusters and the optimal disjoint clusters with given centers~$C$, it remains to bound the number of centers $|C|=S(m+1)$ in instance $I(m)$. According to the definition of $S(m)$, we can show that the inequality $S(m) \le 2^{2^{m-1}}- 2^{2^{m-2}} -1$ for $m\ge 2$ holds by induction: For $m=2$, $S(2)=1=2^{2^1}-2^{2^0}-1$. Suppose the inequality holds for some $m\ge 2$ then by definition we have:
  \begin{align*}
    S(m+1) &= S(m)\cdot (S(m)+1)\\
    &\le ( 2^{2^{m-1}}- 2^{2^{m-2}} -1) \cdot ( 2^{2^{m-1}}- 2^{2^{m-2}})\\
    &= 2^{2^{m}} - 2 \cdot 2^{2^{m-1}} \cdot 2^{2^{m-2}} + 2^{2^{m-2}}   \\
    &= 2^{2^{m}} - (2 \cdot 2^{2^{m-1}} - 1) \cdot 2^{2^{m-2}}   \\
    &\le  2^{2^{m}}- 2^{2^{m-1}} \cdot 2^{2^{m-2}}\\
    &< 2^{2^{m}}.
    \end{align*}
  Combining Lemma~\ref{lem:general_worstcase_NtoDisjoint} and $|C|=S(m+1)< 2^{2^{m}}$, we know that the gap between the radius of the non-disjoint clusters and the optimal disjoint clusters for center set~$C$ in instance $I(m)$ is at least $2m-1 \in \Omega(\log \log |C|) $. 

  Next, we prove the second claim about the radius of an optimal solution with  disjoint clusters without given centers.
  \begin{lemma}
  \label{lem:general_worstcase_Disjoint}
  For the instance $I(m)$, there exists a solution with $|C|$ disjoint clusters with radius $2$ that cover all points in $V$.     
  \end{lemma}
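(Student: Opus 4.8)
The plan is to exhibit an explicit partition of $V$ into connected radius-$2$ clusters whose centers are chosen among the one-$\perp$ points rather than among $C$; this is exactly where the construction must differ from Lemma~\ref{lem:general_worstcase_NtoDisjoint}. First I would record the structure of $G$. For a center $c$ and an index $i$, write $c^{(i)}$ for the point obtained from $c$ by setting its $i$-th coordinate to $\perp$; these points (together with $c$) are exactly the points at distance $1$ from $c$. In $G$ every center $c$ has the single neighbour $c^{(1)}$, and a hub $h=(\perp,h_2,\ldots,h_m)$ is adjacent to all $S(m)+1$ centers that agree with it on coordinates $2,\ldots,m$. The remaining edges slide the single $\perp$ between consecutive coordinates: a point with $\perp$ in coordinate $i$ is adjacent, at distance exactly $2$, to the points obtained by moving the $\perp$ to coordinate $i\pm1$ while resetting the value in coordinate $i$ arbitrarily. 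The clean local fact I would prove first is that the graph-neighbours of a one-$\perp$ point $q$ (with $\perp$ in coordinate $i$) all lie within distance $2$ of $q$, and consist of the centers filling $q$'s $\perp$ (only when $i=1$) together with the one-step slides of the $\perp$.

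Using this, I would build every cluster as a star. For each hub $h$ take the star consisting of $h$, all centers adjacent to $h$ (distance $1$), and optionally one group of $\perp$-in-coordinate-$2$ points adjacent to $h$ (distance $2$, reached by sliding the $\perp$ from coordinate $1$ to $2$); these stars are connected, have radius $2$, and together cover all centers and all hubs. For the deeper points I would use ``slide'' stars: a one-$\perp$ point $u$ with $\perp$ in coordinate $i\ge2$ is made the centre of the star consisting of $u$ together with all of its one-step slides into coordinate $i+1$ (the $\perp$-in-coordinate-$(i+1)$ points that agree with $u$ outside coordinates $i,i+1$). Each such star is again connected and of radius $2$, and covers $u$ plus a full group of $\perp$-in-coordinate-$(i+1)$ points.

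I would then organize the assignment by the position of the $\perp$, processing coordinates from $m$ inward, deciding for each one-$\perp$ point whether it is promoted to a star centre (absorbing a group one coordinate deeper) or absorbed as a leaf of a star centred one coordinate shallower. Writing $n_i=|C|/(S(m-i+1)+1)$ for the number of $\perp$-in-coordinate-$i$ points, the number of centres needed on level $i$ is the number of still-uncovered groups one level deeper divided by the group size $S(m-i+1)+1$; together with the hub stars this gives that the total number of stars is at most $\frac{|C|}{S(m)+1}+\sum_{i}\frac{|C|}{(S(m-i+1)+1)(S(m-i)+1)}$. Because the coordinate-set sizes $S(\cdot)+1$ grow like Sylvester's sequence, this sum is a constant fraction of $|C|$ and in particular strictly below $|C|$. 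Since $|V|>|C|$, any shortfall can be made up by repeatedly splitting a leaf off some star, which keeps each cluster connected and cannot increase the radius, until exactly $|C|$ clusters remain.

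The main obstacle is making the stars pairwise disjoint while still covering every point. A slide star centred at $u$ absorbs an entire group of deeper points (those obtained by varying one coordinate), so the rule deciding whether a one-$\perp$ point becomes a centre or a leaf must be constant along each such group; forcing these decisions to be consistent on two consecutive levels — so that the leaves absorbed from level $i$ are exactly the level-$(i+1)$ points not promoted to centres — is the crux, since it constrains the promotion rule to ignore the neighbouring coordinates while still realising the exact per-level counts dictated by the recursion (which also requires checking the relevant divisibilities, e.g.\ that the level-$(i-1)$ group size divides the number of level-$i$ centres). I would resolve this by an explicit coordinate-indexed choice of the promoted points, processed from coordinate $m$ inward so that each level's absorbed leaf-groups tile exactly the points the next level leaves uncovered; should the explicit rule become unwieldy, a Hall-type feasibility argument based on the exact counts would instead guarantee that a system of pairwise-disjoint leaf groups exists. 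Once the designation is fixed, disjointness follows directly and the stars form the desired radius-$2$ disjoint cover of $V$.
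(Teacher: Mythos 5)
Your structural analysis of $G$ is accurate and the instinct to build radius-$2$ stars centred at one-$\perp$ points is exactly the right one, but the proof has a genuine gap, and you have named it yourself: everything rests on the promotion/absorption scheme being simultaneously consistent across levels, disjoint, and exactly covering, and this is never established --- it is deferred to an unspecified ``explicit coordinate-indexed choice'' or a ``Hall-type feasibility argument''. The difficulty is real, not cosmetic. Already at the deepest level your scheme is forced to degenerate: each point with $\perp$ in coordinate $m$ has a \emph{unique} neighbour with $\perp$ in coordinate $m-1$ (coordinate $m$ admits only $S(1)+1=1$ non-$\perp$ value), so \emph{every} point with $\perp$ in coordinate $m-1$ must be promoted; consequently no level-$(m-1)$ point is available to be absorbed, the level-$(m-2)$ points have nothing to absorb and must all be pushed one level higher, and the per-level counts in your bound $\frac{|C|}{S(m)+1}+\sum_i \frac{|C|}{(S(m-i+1)+1)(S(m-i)+1)}$ no longer describe the actual cascade. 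On top of that come the divisibility and ``constant along each group'' conditions you flag as the crux. All of this can plausibly be repaired with explicit coordinate rules, but that repair is the bulk of the proof and it is missing.

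The paper avoids the whole issue with a simpler choice you did not consider: take as centres \emph{all} points having $\perp$ in one of the coordinates $1,\ldots,m-1$ (call this set $C'$), not a sparse subset. Then no promotion rule is needed: each point of $C$ attaches to its unique hub at distance $1$; each point with $\perp$ in coordinate $m$ attaches to its unique neighbour with $\perp$ in coordinate $m-1$ at distance $2$; every other point is itself a centre. The stars are connected, have radius at most $2$, and are trivially disjoint because every non-centre point has a unique adjacent centre. The only thing left is the count $|C'|\le|C|$, which is a clean telescoping computation: the number of points with $\perp$ in coordinate $i$ is $S(m+1)/(S(m+1-i)+1)$, so $|C'|=S(m+1)\sum_{i=2}^{m}\frac{1}{S(i)+1}$, and since $S(i+1)=S(i)(S(i)+1)$ gives $\frac{1}{S(i)+1}=\frac{1}{S(i)}-\frac{1}{S(i+1)}$, the sum equals $1-\frac{1}{S(m+1)}$, i.e.\ $|C'|=S(m+1)-1=|C|-1<|C|$. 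Your plan aims to get the number of stars down to a constant fraction of $|C|$, but the lemma only needs at most $|C|$ clusters, so the sparse scheme --- and with it the unresolved crux --- is unnecessary.
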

  \begin{proof}
  Obviously for $m = 1$ the statement is fulfilled (by the original center set $C$) and we only need to look at $m \geq 2$. For $I(m)$, consider the center set $C'=\{a\in V\mid \exists i\in [m-1], a_i=\perp\}$, i.e., $C'$ contains all points that have a $\perp$ in some coordinate except the last one. Recall the definition of the connectivity graph: We know that each point without a $\perp$ coordinate is directly connected to a point with $\perp$ sign in the first coordinate, and the distance between them is $1$ because all coordinates are the same except for the first coordinate. If $m \geq 2$ each point with $\perp$ in the $m$-th coordinate is directly connected to a point with $\perp$ in the $(m-1)$-th coordinate, and the distance between them is $2$ because all other coordinates are the same. Thus, when we choose centers $C'$, all remaining nodes are directly connected to a center either with distance $1$ or $2$.  

To bound the number of nodes in $C'$, we may first note that by the definition of $\mathcal{V}_m$ there are $S(m + 1 - i) + 1$ different choices for the $i$-th coordinate that are unequal $\perp$. Thus we can bound the numbers of points with $\perp$ in the $i$-th coordinate by
\begin{align*}
\prod_{j \in [m]\setminus \{i\}} (S(m+1 - j) + 1) &= \prod_{j \in [m]\setminus \{m + 1 - i\}} (S(j) + 1)\\
&= \frac{S(m+1)}{S(m +1 -i) +1}.
\end{align*}
Using this formula we can bound the size of $C'$ as follows:
\begin{align*}
|C'| &= \sum_{i = 1}^{m-1} \frac{S(m+1)}{S(m +1 -i) +1}\\
&= S(m +1) \sum_{i = 2}^m \frac{1}{S(i) + 1}.
\end{align*}

Thus, to complete the proof, we only need to prove $\sum_{i=2}^{m} \frac{1}{S(i)+1} \le 1$ for $m\ge 2$. To do this, we show that $\sum_{i=2}^{m} \frac{1}{S(i)+1}=1- \frac{1}{S(m + 1)}$ holds by induction. For $m=2$, $\sum_{i=2}^{m} \frac{1}{S(i)+1} = \frac{1}{S(2)+1} =1/2 =1- \frac{1}{S(3)}$. Suppose the inequality holds for some $m\ge 2$, then by definition we have:
  \begin{align*}
    \sum_{i=2}^{m+1} \frac{1}{S(i)+1} &=  \sum_{i=2}^{m} \frac{1}{S(i)+1}   +  \frac{1}{S(m+1)+1} \\
    &= 1- \frac{1}{S(m+1)} +   \frac{1}{S(m+1)+1}  \\
    &=  1- \frac{1}{S(m+2)}.
    \end{align*} 
  Here, the last equation holds because $\frac{1}{S(m+1)} - \frac{1}{S(m+1)+1} = \frac{1}{S(m+1)(S(m+1)+1)} = \frac{1}{S(m+2)}$.  Thus the equation holds by induction on $m$ and the lemma is proven.  
  \end{proof}
  
\subsection{The Gap between the optimum non-disjoint and disjoint clustering} \label{sec:WorstCaseInstancePartII}

In the last section we described an instance in which the best centers for the non-disjoint clustering turned out to be poor centers for the disjoint clustering which means that any disjoint clustering algorithm aiming for a constant approximation ratio cannot simply rely on the centers produced by a non-disjoint clustering routine. In this instance there exists a different solution with a radius differing from the non-disjoint radius only by a constant (which was necessary to show the lower bound of $\log\log(k)$ for the performance). However, in general this is not the case and one can show that there can be an $\Omega(\log\log(k))$ gap between the radius of the optimum solutions for disjoint and non-disjoint clustering. To show this we will modify the instance $I(m)$ to obtain an instance $I'(m)$ with the respective property. This time instead of the special vector entry $\perp$ we will introduce $k + 1 = S(m +1) + 1$ different special entries $\perp_1,\ldots,\perp_{k+1}$. The entire set of these entries will be denoted as $T_m$. Now we redefine our metric space $(\mathcal{V}_m, d)$ with set $\mathcal{V}_m=([S(m) + 1] \cup T_m) \times ([S(m - 1) + 1] \cup T_m) \times \ldots \times ([S(1) + 1] \cup T_m)$ and metric $d$: each point $a\in M$ is represented by a vector $(a_1, a_2, \ldots , a_{m})$; the distance is defined as $d(a, b)=\sum_{i\in [m]} d'(a_i, b_i)$ for any $a, b\in \mathcal{V}_m$ where 
\begin{equation*}
  d'(a_i,b_i) = \begin{cases}
        0 & \text{if $a_i = b_i$,}\\
        1 & \text{if $a_i \neq b_i$ and $((a_i \in T_m \land b_i \not\in T_m)~\text{or}~(a_i \not\in T_m \land b_i \in T_m))$,}\\
        2 & \text{else.} 
      \end{cases}
\end{equation*} 

Again we may define the indicator variable $X(a_j)$ which is $1$ iff $a_j \in T_m$ and $0$ otherwise. Using this we may define the graph $G = (V,E)$ of the instance $I'(m)$:
\begin{itemize}
\item The point set $V$ includes all points such that there exist at most one coordinate containing an entry from $T_m$, i.e.,\\ $V=\{a\mid \forall a\in \mathcal{V}_m, \sum_{i\in [m]} X(a_i) \le 1\}$.
\item For each point $a\in V$ with $a_i \notin T_m$ for all $i\in [m]$ and every $\perp_j \in T_m$, there exists an edge $(a, b)\in E$ where $b=(\perp_j, a_2, a_3, \ldots, a_{m})$; for any pair of points $a, b\in V$, there is an edge $(a, b)\in E$ if there exists an $\perp_j \in T_m$ and an index $h\in [m-1]$ with $a_h=b_{h+1}=\perp_j$ and $a_i=b_i$ for all $i\notin \{h, h+1\}$.
\end{itemize}

To get a better understanding of the graph $G$, we subdivide the node set $V$ into multiple disjoint subsets. For each $\perp_j \in T_m$ we define $V_j = \{v \in V\mid \exists j \leq m: v_j = \perp_j\}$. The $V_i$ are pairwise disjoint because the points in $V$ contain at most one coordinate with an entry in $T_m$. Additionally we define a point set $A=\{a\mid \forall a\in \mathcal{V}_m, \sum_{i\in [m]} X(a_i) =0\}$. Note that $A$ is exactly the same as the center set $C$ of $I(m)$.

Generally the only difference between $I(m)$ and $I'(m)$ is that in $I'(m)$ every point outside of $C$ has been copied $k + 1$ times. If we only consider the node set $V_j \cup A$ for an arbitrary $\perp_j \in T_m$ and the edges between nodes in this set, we end up with exactly the same graph as in $I(m)$. Additionally for every $V_h$, $V_j$ with $h \neq j$ there exist no direct edges between the two sets and the only connections between the node sets pass through the nodes in $A$. The main idea is now that for the non-disjoint case we can simply use $A$ as a center set and get the same radius as before. But for the disjoint case it is not beneficial anymore to move the centers towards the later layers to obtain a constant radius because there exist more pairwise disjoint sets $V_i$ than centers. This means at least one of these sets will not contain a center and will have to be assigned via the points in $A$ which leads to a non-constant radius. This can be formalized by the following two lemmas.

\begin{lemma}
There exists a non-disjoint $k$-center solution in $I'(m)$ with radius $1$.
\end{lemma}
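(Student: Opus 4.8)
The plan is to reuse the center set $A$ (the points of $V$ with no coordinate in $T_m$) exactly as in the radius-$1$ clustering of the original instance $I(m)$, exploiting the observation already made in the text that the subgraph of $I'(m)$ induced on $V_j\cup A$ is a copy of $I(m)$. First I would record the cardinality of $A$: a point of $A$ chooses for each coordinate $i$ one of the $S(m+1-i)+1$ non-special values, so $|A|=\prod_{i=1}^m(S(m+1-i)+1)=\prod_{j=1}^m(S(j)+1)=S(m+1)=k$, reusing the product identity already established for $|C|$ in $I(m)$. Hence taking every point of $A$ as a center uses exactly $k$ clusters, as required.

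Next I would define, for each center $a\in A$, each coordinate $i\in[m]$, and each special symbol $\perp_j\in T_m$, the point $a^{(i,j)}$ obtained from $a$ by overwriting coordinate $i$ with $\perp_j$, and set the cluster of $a$ to be $T_a=\{a\}\cup\{a^{(i,j)}\mid i\in[m],\ \perp_j\in T_m\}$. Since $a^{(i,j)}$ agrees with $a$ on every coordinate except $i$, where $a_i\notin T_m$ and $\perp_j\in T_m$, we get $d(a,a^{(i,j)})=d'(a_i,\perp_j)=1$. Thus every such cluster has radius $1$ with respect to its center $a$.

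The coverage and connectivity arguments come next; note that overlaps are permitted in the non-disjoint setting, so I only need the clusters to cover $V$. For coverage, any $v\in V$ either lies in $A$ (and is then the center of its own cluster) or has a single special coordinate $v_i=\perp_j$; replacing that entry by any non-special value yields a point $a\in A$ with $v=a^{(i,j)}\in T_a$, so $\bigcup_{a\in A}T_a=V$. For connectivity I would show that each $T_a$ is a \emph{spider}: by the first edge type, $a$ is adjacent to $a^{(1,j)}$ for every $\perp_j\in T_m$; by the second edge type with index $h=i$ (which joins two points carrying the same $\perp_j$ in coordinates $i$ and $i+1$ and agreeing on all other coordinates), $a^{(i,j)}$ is adjacent to $a^{(i+1,j)}$ for every $i\in[m-1]$. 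Hence for each $j$ the points $a^{(1,j)},\dots,a^{(m,j)}$ form a path attached to $a$, and the induced subgraph on $T_a$ is connected.

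The main obstacle is the connectivity verification: one must check that the two edge types of $I'(m)$ align so that $a^{(i,j)}$ and $a^{(i+1,j)}$ are actually joined — in particular that the requirement "$p_\ell=q_\ell$ for all $\ell\notin\{h,h+1\}$" holds, which it does because $a^{(i,j)}$ and $a^{(i+1,j)}$ differ from $a$ (and hence from each other) only in coordinates $i$ and $i+1$. Everything else (the radius bound, the exact count $|A|=k$, and coverage) is routine once the spider structure is in place. Assembling these facts yields a non-disjoint clustering with $k$ connected clusters of radius $1$ covering $V$, proving the lemma.
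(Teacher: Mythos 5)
Your proposal is correct and matches the paper's proof: the paper also chooses the center set $A$ and assigns to each $c\in A$ exactly the points obtained by replacing one coordinate of $c$ with some $\perp_j$, leaving the radius, coverage, and connectivity checks as "easy to verify." You have simply made those verifications explicit (the spider/path structure via the two edge types, the count $|A|=S(m+1)=k$, and the coverage argument), all of which are accurate.
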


\begin{proof}
Let us simply choose the center set $A$ (which has size $k$). To each $c \in A$ we assign all nodes where one of the coordinates has been replaced by an $\perp_j$, i.e., the set $\{a\in V\mid \exists i\in[m], j \in [k+1]: \forall h\in [m]\setminus \{i\}: a_i=\perp_j, a_h=c_h\}$. It is easy to verify that the resulting clusters are connected and that every point in $V$ is covered with radius~$1$.
\end{proof}

\begin{lemma}
The minimum radius of a disjoint connected $k$-center solution in $I'(m)$ is at least $2m -2$.
\end{lemma}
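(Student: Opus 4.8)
The plan is to combine a counting argument over the disjoint copies $V_1,\ldots,V_{k+1}$ with the inductive argument from the proof of Lemma~\ref{lem:general_worstcase_NtoDisjoint}, carried out inside a copy that happens to contain no center. First I would note that the sets $V_1,\ldots,V_{k+1}$ are pairwise disjoint while a $k$-center solution uses only $k$ centers, so by the pigeonhole principle there is an index $j^*$ such that $V_{j^*}$ contains no center. Fix this $j^*$. The two structural facts I would rely on are that the subgraph induced by $V_{j^*}\cup A$ is isomorphic to the connectivity graph of $I(m)$ (with $V_{j^*}$ in the role of $V\setminus C$ and $A$ in the role of $C$), and that $V_{j^*}$ has no edges to any other copy $V_h$: every edge incident to $V_{j^*}$ either stays inside $V_{j^*}$ (a type-2 edge uses a single symbol $\perp_{j^*}$) or leads into $A$ (a type-1 edge). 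Consequently the frontier separating a region of $V_{j^*}$ from the rest of $I'(m)$ is exactly the same as it is inside $V_{j^*}\cup A$.

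Next I would replay the induction of Lemma~\ref{lem:general_worstcase_NtoDisjoint} restricted to $V_{j^*}$. Writing $L_{j^*}(\alpha_1,\ldots,\alpha_{m'}):=\{v\in V_{j^*}\mid v_i=\alpha_i\ \text{for all}\ i\le m'\}$, I would show by induction on $m'\in\{0,\ldots,m-1\}$ that there exist ordinary values $\alpha_1,\ldots,\alpha_{m'}$ (values not in $T_m$) such that every node of $L_{j^*}(\alpha_1,\ldots,\alpha_{m'})$ is assigned to a center $c$ with $c_i\neq\alpha_i$ for all $i\le m'$. Since $V_{j^*}$ holds no center, each node of $L_{j^*}(\overline\alpha)$ is served from outside, and connectedness forces its cluster to reach it through the \emph{decision nodes} of $L_{j^*}(\overline\alpha)$, namely the nodes carrying $\perp_{j^*}$ in coordinate $m'+1$; hence every node of $L_{j^*}(\overline\alpha)$ shares its center with some decision node. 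There are exactly $S(m-m')$ decision nodes, so applying the pigeonhole principle to the $(m'+1)$-st coordinates of their centers against the $S(m-m')+1$ candidate ordinary values produces an unused value, which I take as $\alpha_{m'+1}$.

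The one place where $I'(m)$ genuinely differs from $I(m)$ — and the step I expect to be the main obstacle — is that a center is no longer forced to lie in $A$: it may sit in some copy $V_h$ and therefore carry a special symbol $\perp_h$ in one coordinate. I would dispose of this as follows. For the pigeonhole step, a coordinate equal to $\perp_h$ is not an ordinary value, so such a center blocks none of the $S(m-m')+1$ candidates and the count only becomes more favourable; the frontier argument is unaffected because $V_{j^*}$ still has no edges to $V_h$. With the induction in hand, instantiating $m'=m-1$ yields a node $v=(\alpha_1,\ldots,\alpha_{m-1},\perp_{j^*})\in V_{j^*}$ served by a center $c$ with $c_i\neq\alpha_i$ for all $i\le m-1$. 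A short case analysis on the location of the at most one special symbol of $c$ then bounds $d(v,c)$: each of the $m-1$ disagreeing coordinates contributes $2$, except that a single coordinate carrying $c$'s special symbol would contribute only $1$, while coordinate $m$ contributes $1$ if $c_m\notin T_m$ and $2$ otherwise. In every case the total is at least $2(m-2)+1+1=2m-2$, which shows that any disjoint connected $k$-center solution in $I'(m)$ has radius at least $2m-2$.
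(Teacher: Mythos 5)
Your proposal is correct and takes essentially the same route as the paper's proof: pigeonhole over the pairwise disjoint copies $V_1,\ldots,V_{k+1}$ to find a copy containing no center, then the decision-node induction of Lemma~\ref{lem:general_worstcase_NtoDisjoint} replayed inside that copy (using that its frontier consists only of nodes with the special symbol in coordinate $m'+1$), finishing with the lower bound $2m-2$ at $m'=m-1$. Your explicit handling of centers that carry a special symbol $\perp_h$ in the pigeonhole step, and your final case analysis of where $c$'s special coordinate sits, merely spell out details the paper's proof leaves implicit.
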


\begin{proof}
Let us consider an arbitrary center set $C$. Since there are $k+1$ different sets $V_1,\ldots,V_{k+1}$ and the sets are pairwise disjoint, we may apply the pigeonhole principle to conclude that there exists an $i \in [k+1]$ such that $C \cap V_i = \emptyset$. Note that the only edges from nodes in $V \setminus V_i$ to nodes in $V_i$ are incident to $A$. Thus the nodes in $V_i$ can only be assigned to their respective centers via the nodes in $A$ which were exactly the centers in the instance $I(m)$. Similarly as in the proof of Lemma \ref{lem:general_worstcase_NtoDisjoint} we will define for any $m' < m$ and $(\alpha_1,\ldots,\alpha_{m'}) \in [S(m) + 1] \times [S(m - 1) + 1]\times \ldots \times [S(m - m' + 1) + 1]$ the set $L(\alpha_1,\ldots,\alpha_{m'}) : = \left\{v \in V _i \mid \forall i \leq m': v_i = \alpha_i\right\}$ and for $m' = 0$ we will define $L(\alpha_1,\ldots,\alpha_{m'})=L()= V_i$. Let us now consider an arbitrary assignment. Again we will prove that for all $m' \in \{0,\ldots,m -1\}$ there exist $\alpha_1,\ldots,\alpha_{m'}$ such that every node in $L(\alpha_1,\ldots,\alpha_{m'})$ gets assigned to a center $c$ with $c_j \neq \alpha_j$ for all $ j \leq m'$. By combining this statement for $m' = m - 1$ with the fact that $c$ contains at most one coordinate in $T_m$ we get that the radius of the assignment is at least $2(m-2)+2=2m -2$.

To prove the statement we will use induction. For $m' = 0$, we have that $L() = V_i$. Furthermore it holds trivially for all $v \in L()$ that they get assigned to a center $c$ with $v_i \neq c_i$ for all $i \leq 0$ since there are no coordinates with an index smaller $1$.

Let us now assume that the statement holds for an $m' < m -1$. Let $\overline{\alpha} = \alpha_1,\ldots,\alpha_{m'}$ be the respective tuple. Note that all edges between nodes in $L(\overline{\alpha})$ and nodes in $V \setminus L(\overline{\alpha})$ are incident to nodes $v \in L(\overline{\alpha})$ with $v_{m' +1} = \perp$. We will call these nodes in the following \emph{decision nodes}. Since $V_i$ and thus also $L(\overline{\alpha})$ contain no centers, the nodes in $L(\overline{\alpha})$ can only be assigned to the same centers as the decision nodes. There are exactly $\prod_{i= 1}^{m -m' -1} (S(i) +1) = S(m - m')$ decision nodes. Thus we may conclude by the pigeonhole principle that there exists an $\alpha_{m'+1} \in [S(m - m') +1]$ such that no decision node gets assigned to a center $c$ with $c_{m' +1} = \alpha_{m' +1}$. By combining this with the fact that $L(\overline{\alpha},\alpha_{m'+1}) \subseteq L(\overline{\alpha})$ we obtain that every node in $L(\alpha_1,\ldots,\alpha_{m'},\alpha_{m' +1})$ gets assigned to a center $c$ with $\alpha_i \neq c_i$ for $i \leq m' +1$. Thus the lemma is proven by induction.
\end{proof}

It was already shown in the previous section that $k = S(m+1) \leq 2^{2^m}$. Thus $2m -2$ lies in $\Omega\left(\log\log(k)\right)$ and we obtain the following result.

\begin{theorem}
There exist instances for the connected $k$-center problem such that the ratio between the radius of the optimum disjoint connected $k$-center solution and the optimum non-disjoint connected $k$-center solution lies in $\Omega\left(\log\log(k)\right)$. 
\end{theorem}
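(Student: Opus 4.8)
The plan is to read off the theorem directly from the two lemmas proved immediately above, together with the growth estimate on $S$. First I would observe that the first lemma exhibits a non-disjoint $k$-center solution for $I'(m)$ of radius~$1$, so the optimal non-disjoint radius is at most~$1$; this is the quantity I want in the denominator of the ratio, and having it be as small as~$1$ is exactly what makes the bound strong. The second lemma gives that every disjoint connected $k$-center solution for $I'(m)$ has radius at least $2m-2$, which I put in the numerator. Dividing, the ratio of the optimal disjoint radius to the optimal non-disjoint radius for the instance $I'(m)$ is at least $2m-2$.

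Next I would convert this into a bound in terms of~$k$. Since the center set~$A$ of $I'(m)$ has size $k = S(m+1)$, and since the previous section established $S(m+1) \le 2^{2^m}$, taking the logarithm twice gives $m \ge \log_2\log_2 k$. Hence $2m-2 = \Omega(\log\log k)$. Because $m \ge 2$ is a free parameter, letting $m$ grow produces a family of instances $I'(m)$ with arbitrarily large~$k$ and disjoint-to-non-disjoint ratio $\Omega(\log\log k)$, which is the assertion of the theorem.

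At the level of the theorem there is essentially no obstacle: it is the juxtaposition of the two lemmas with the doubly-exponential relationship between~$m$ and~$k$. The genuine difficulty lies in the second lemma, which I would expect to be the crux. There one first uses the pigeonhole principle --- there are $k+1$ pairwise disjoint blocks $V_1,\ldots,V_{k+1}$ but only $k$ centers --- to locate a block $V_i$ devoid of centers, and then runs an induction over coordinates $m' = 0,1,\ldots,m-1$ showing that the nodes of $V_i$ are forced, one coordinate at a time, onto centers disagreeing with them in every coordinate, each disagreement contributing distance~$2$. The role of the modification from $I(m)$ to $I'(m)$ (duplicating every non-center point into $k+1$ copies) is precisely to guarantee more disjoint blocks than centers, so that the non-disjoint optimum can stay at radius~$1$ while forcing a large disjoint optimum on the \emph{same} value of~$k$; making sure the numerator and denominator refer to one common~$k$ is the conceptual point to get right.
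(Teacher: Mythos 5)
Your proposal is correct and follows exactly the paper's own argument: the theorem is obtained by dividing the lower bound of $2m-2$ on the disjoint radius (the second lemma) by the non-disjoint radius of $1$ (the first lemma), and then translating $m$ into $k$ via the bound $k = S(m+1) \le 2^{2^m}$ from the previous section, so that $2m-2 = \Omega(\log\log k)$. You also correctly identify that the crux is the second lemma, where the pigeonhole over the $k+1$ pairwise disjoint blocks $V_1,\ldots,V_{k+1}$ and the coordinate-by-coordinate induction force the large disjoint radius.
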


\section{Approximation Hardness of Connected \texorpdfstring{$k$}{k}-clustering Problems}
\label{sec:LowerBounds}

In this section we will prove several lower bounds for the approximability of connected $k$-center and $k$-diameter clustering. At first we will focus on the assignment version of the connected $k$-center problem with disjoint clusters. In this version, the centers are already given beforehand and all that is left is to assign the nodes to the respective clusters. Intuitively one would think that this is easier than the problem in which we also need to find suitable centers and indeed for the regular $k$-center problem it is trivially possible to optimally solve this problem. However, we will show via a reduction from the 3-SAT problem that it is NP-hard to approximate the connected $k$-center problem with disjoint clusters with any approximation factor smaller than $3$, even for $k=2$ and given centers. The same reduction also shows that it is NP-hard to approximate the $2$-center problem with disjoint clusters (without given centers) with an approximation factor better than~$2$. This matches the approximation factor from the algorithm in Corollary~\ref{cor:2approx-2freecenters}, showing altogether that for $k=2$ the assignment problem is harder than the problem without given centers. After this, we will modify the reduction to show that it is also NP-hard to approximate the connected $k$-center problem with disjoint clusters (without given centers) with an approximation factor smaller than $3$ if $k \geq 4$. 

Then we drop the condition that the clusters need to be disjoint and show that in this case both the connected $k$-center and $k$-diameter problem are NP-hard to approximate with an approximation factor less than $2$ even if the respective connectivity graph is a star. Interestingly for the disjoint case, we have proven in Section \ref{chap:treeopt} that the connected $k$-center problem can be solved optimally if the connectivity graph is a tree and is thus easier than the non-disjoint case. For the connected $k$-diameter problem, however, we will prove afterwards that also the disjoint version is NP-hard to approximate with an approximation factor better than $2$ if the connectivity graph is a star. Thus the $2$-approximation for the connected $k$-diameter problem on trees is indeed tight.

\subsection{Hardness of the Assignment Problem for Connected \texorpdfstring{$k$}{k}-center}\label{lowerboundskcenter}

\label{lower-bound-assignment-problem}

\begin{theorem}\label{lem:worst_assign}
It is NP-hard to approximate the assignment problem for connected $k$-center clustering with disjoint clusters with an approximation factor smaller than $3$ even if $k = 2$.
\end{theorem}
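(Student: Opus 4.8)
The plan is to reduce from 3-SAT, constructing for a given formula $\phi$ an instance of the connected $2$-center \emph{assignment} problem with two fixed centers $c_0$ and $c_1$ (where we must partition $V$ into connected clusters $C_0\ni c_0$ and $C_1\ni c_1$) such that $\phi$ satisfiable yields a clustering of radius $1$, while $\phi$ unsatisfiable forces every feasible clustering to have radius at least $3$. Because the subroutine of Problem~\ref{p:subroutine} with any $\alpha<3$ would distinguish these two regimes, such an approximation would decide 3-SAT, giving the claimed NP-hardness. I would realize all distances as the shortest-path metric of an edge-weighted graph whose edges \emph{contain} the connectivity edges $E$ together with extra ``metric-only'' edges. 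This makes the triangle inequality automatic and lets me set the two relevant thresholds to $1$ and $3$; concretely I would fix $d(c_0,c_1)=2$, place each variable point at distance $1$ from both centers, and place each clause point at distance $1$ from $c_1$ but distance $3$ from $c_0$, which is metrically consistent.

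For the \textbf{variable gadgets}, each variable $x_i$ gets a point $p_i$ adjacent in $G$ to both $c_0$ and $c_1$ and at metric distance $1$ from each. Thus in any radius-$1$ solution $p_i$ may freely join $C_1$ (read as $x_i=\mathrm{true}$) or $C_0$ (read as $x_i=\mathrm{false}$), and these choices are locally independent of one another. The first design constraint is to keep the $n$ choices genuinely independent while still letting clauses read off the joint configuration; I would do this by equipping each variable with its own private ``true-side'' and ``false-side'' connector nodes, used only by the clauses in which $x_i$ occurs positively or negatively, respectively.

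For the \textbf{clause gadgets}, each clause $\mathcal{C}_j$ gets a point $q_j$ at distance $1$ from $c_1$ and distance $3$ from $c_0$. For each literal of $\mathcal{C}_j$ I would route a connecting path in $G$ from $q_j$ toward $c_1$ that passes through the corresponding variable's true-side connector (positive literal) or false-side connector (negative literal), with all internal nodes inside the unit ball of $c_1$. The gadget is engineered so that such a path can be entirely placed in $C_1$ at radius $1$ precisely when the literal is satisfied: its only route to $c_1$ is through $p_i$, which lies in $C_1$ exactly when the literal is true. Hence if some literal of $\mathcal{C}_j$ is satisfied, $q_j$ reaches $c_1$ along an open path and joins $C_1$ at cost $1$; if all three literals are falsified, all paths are blocked and $q_j$'s only feasible home is $C_0$, costing distance $3$.

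\textbf{Correctness.} For completeness I would, given a satisfying assignment, place each $p_i$ per its truth value, route every $q_j$ through one satisfied literal, and check that $C_0,C_1$ are disjoint, connected, and of radius $1$. For soundness I would argue the contrapositive: an unsatisfied clause leaves $q_j$ with no radius-$1$ connection to $c_1$, so its distance-$3$ link to $c_0$ forces radius $\ge 3$; therefore any radius-$<3$ clustering induces a satisfying assignment via the forced $p_i$-placements. The main obstacle I anticipate is exactly this gadget engineering: with only two clusters I must encode $n$ independent Boolean choices purely through connectivity, simultaneously guaranteeing that each variable stays freely settable at radius $1$ in isolation and that a blocked clause has \emph{no} radius-$1$ escape, ruling out any unintended path through other gadgets that might rescue an unsatisfied clause. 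Pinning down the weighted graph so the only achievable radii for the critical points are $1$ and $3$, while the triangle inequality holds globally, is the delicate part; the factor-$3$ gap then follows at once. The same construction should also yield the stated $(2-\varepsilon)$-hardness for $2$-center with \emph{free} centers (matching Corollary~\ref{cor:2approx-2freecenters}), since in the unsatisfiable case the best a center-choosing algorithm can do is re-center and achieve radius $2$ but not less.
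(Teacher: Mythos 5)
Your overall frame matches the paper's proof: a reduction from 3-SAT with two fixed centers, variable points at distance $1$ from both centers, clause points at distance $1$ from one center and $3$ from the other, distances realized as a graph metric, and the conclusion that any approximation factor below $3$ would decide satisfiability. However, there is a genuine gap exactly at the point you flag as "gadget engineering": your design has a \emph{single} point $p_i$ per variable, with positive-literal routes passing through $p_i$, and it never specifies a working mechanism for \emph{negative} literals. Connectivity works against you here: a node can lie in $C_1$ only if it has a path to $c_1$ inside $C_1$, so if your "false-side connector" reaches $c_1$ only through $p_i$, it is usable exactly when $p_i\in C_1$, i.e., when the variable is \emph{true} --- the opposite polarity of what soundness needs. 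If instead the false-side connector has any route to $c_1$ avoiding $p_i$ (all inside the unit ball of $c_1$), that route is always open, so every clause containing a negative literal becomes satisfiable at radius $1$ regardless of the assignment, and an unsatisfiable formula can still admit a radius-$1$ clustering. As written, the reduction therefore fails on soundness; this is not a detail but the crux of the construction.

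The paper closes this gap with a consistency gadget you are missing. It uses \emph{two} literal points $x_i$ and $\bar{x}_i$ per variable, each adjacent in $G$ to both centers $T$ and $F$ and at distance $1$ from both, so each can independently join either cluster at radius $1$; clauses attach (in $G$) to their literal points and, like your $q_j$, are at distance $1$ from $T$ and $3$ from $F$, so they must join $T$'s cluster through a literal point lying in that cluster. The new ingredient is an enforcer point $a_i$ per variable, adjacent in $G$ \emph{only} to $x_i$ and $\bar{x}_i$, at distance $1$ from $F$ and $3$ from $T$. In any clustering of radius below $3$, the point $a_i$ must be in $F$'s cluster, and by connectivity at least one of $x_i,\bar{x}_i$ must then also be in $F$'s cluster; hence $x_i$ and $\bar{x}_i$ can never both serve as gateways to $T$, and reading off "$x_i^*$ true iff $x_i$ is assigned to $T$" yields a consistent satisfying assignment whenever the radius is below $3$. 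If you replace your single $p_i$ plus true/false connectors with this pair-plus-enforcer gadget, your argument goes through essentially as the paper's does, including the $(2-\epsilon)$-hardness for free centers that you note at the end.
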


\begin{proof}
  We reduce the 3-satisfiability problem (3-SAT) to the assignment problem for the connected $2$-center problem with disjoint clusters. For this consider an instance of 3-SAT, i.e., a formula in conjunctive normal form in which every clause consists of at most three literals. Let $X = \{x_1,...,x_n\}$ denote the set of variables and let $C=\{c_1, c_2, ..., c_m\}$ denote the set of clauses, where each clause consists of at most three literals from $\{x_i, \bar{x}_i\}_{i\in [n]}$. The question is if one can assign values to the variables in $X$ such that all clauses in $C$ are satisfied. This problem is NP-complete~\cite{karp1972reducibility}.

  For the given instance of 3-SAT, we create an instance of the assignment problem for the connected $2$-center problem with disjoint clusters as follows. The instance is defined on the point set
  $$V=\{T, F\}\cup \{x_i, \bar{x}_i, a_i\mid {i\in [n]}\}\cup \{b_i\mid {i\in [m]}\}.$$
  The connectivity graph is defined as $G=(V,E)$ with
  $$E=\bigcup_{i\in [n]}\{\{x_i, T\}, \{x_i, F\},\{\bar{x}_i, T\}, \{\bar{x}_i, F\}, \{x_i, a_i\}, \{\bar{x}_i, a_i\}  \} \cup \bigcup_{i\in [m]}\{\{x , b_i\}\mid x \in c_i \}.$$
  We define the metric~$d$ as a graph metric for the graph $G' = (V, E')$ with $$E'=\bigcup_{i\in [n]}\{\{x_i, T\}, \{\bar{x}_i, T\}, \{x_i, F\}, \{\bar{x}_i, F\}, \{a_i, F\}\} \cup \bigcup_{i\in [m]}\{\{b_i, T\}\}.$$
  That is, the distance $d(x,y)$ of two points $x,y\in V$ is defined as the length of the shortest $x$-$y$-path in the unweighted graph~$G'$.
  Finally we specify the points $T$ and $F$ as the given centers in the assignment problem. Then the problem is to find two disjoint clusters with respect to the centers $T$ and $F$ that cover all points from $V$ with minimum radius. This construction is illustrated in Figure~\ref{fig:2centershardness}.

\begin{figure}
\centering
\begin{minipage}{.45\textwidth}
 \centering
 \includegraphics[width=.8\linewidth]{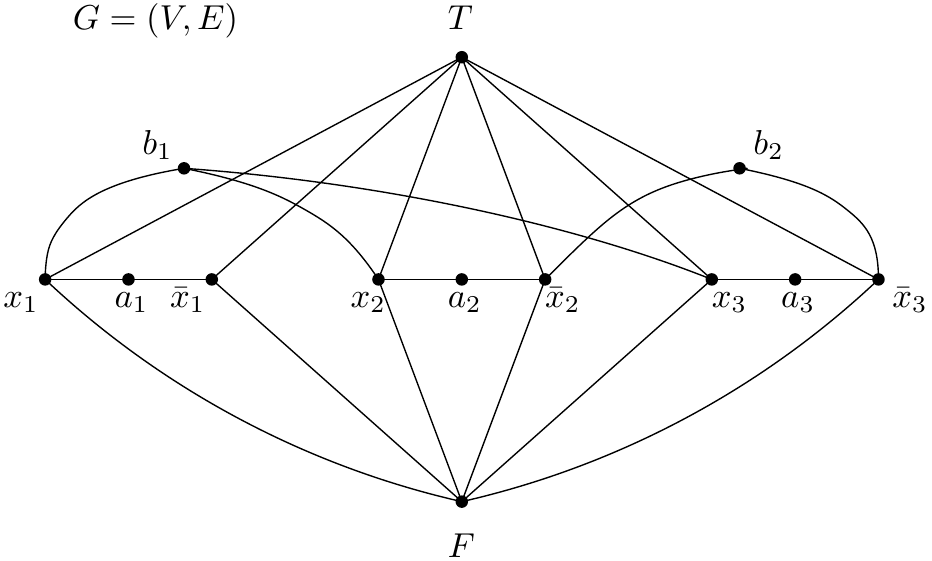}
\end{minipage}%
\hspace{0.45cm}
\begin{minipage}{.45\textwidth}
 \centering
 \includegraphics[width=.8\linewidth]{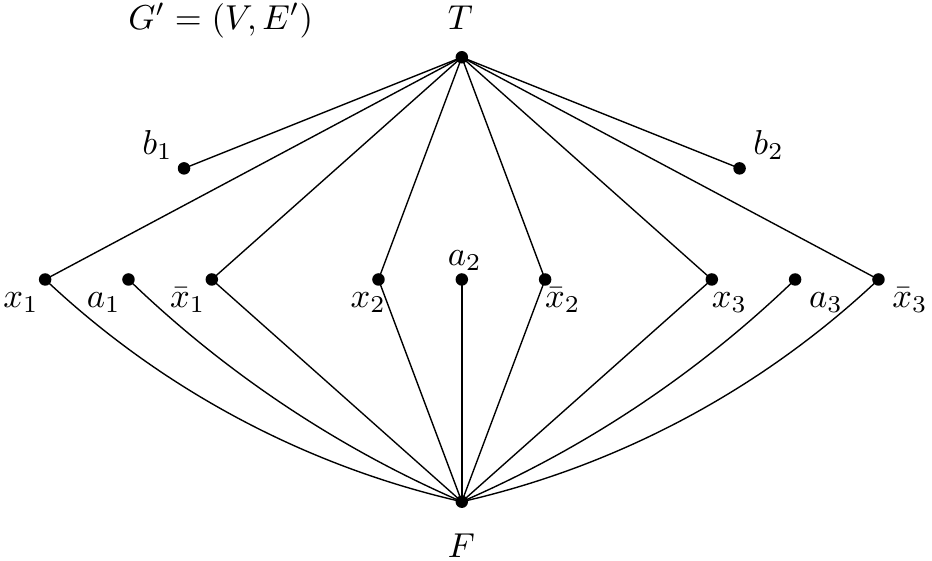}
\end{minipage}
\captionof{figure}{The connectivity graph~$G=(V,E)$ (left) and the graph metric~$G'=(V,E')$ (right) corresponding to the 3-SAT instance $(x_1 \lor x_2 \lor x_3) \land (\bar{x}_2 \lor \bar{x}_3)$.}
\label{fig:2centershardness}
\end{figure}

Let $x^*_1,\ldots, x^*_n$ be a satisfying assignment for the given 3-SAT instance. Then the following assignment provides a solution with radius $1$ for the connected $k$-center problem: if $x^*_i$ is true, then assign $x_i$ to center $T$ and assign $\bar{x}_i$ to center $F$, and otherwise, assign point $\bar{x}_i$ to center $T$ and assign point $x_i$ to center $F$; assign all points $a_i$ with $i\in [n]$ to center $F$ and all points $b_j$ with $j\in [m]$ to center $T$. It is easy to verify that the radius is $1$ according to the definition of $G'$ because all points connected to~$T$ or~$F$ are in $G'$ adjacent to $T$ or $F$, respectively. Since every point from~$V$ is assigned to either $T$ or $F$ the two clusters are disjoint and they cover the entire point set.

It remains to show that both clusters are connected. Since~$E$ contains the edges $\{x_i, T\}, \{x_i, F\},\{\bar{x}_i, T\}, \{\bar{x}_i, F\}$ for every $i\in [n]$, the points $x_i$ and $\bar{x}_i$ are directly connected to both point $T$ and $F$. Now consider a point~$b_j$ with $j\in [m]$. This point belongs to the cluster with center~$T$ but it is not directly connected to~$T$ by an edge. However, in the assignment $x^*_1,\ldots, x^*_n$ at least one literal of the corresponding clause~$C_j$ is satisfied. Let $x\in C_j$ be one such literal. Then according to the construction above $x$ is assigned to~$T$. Furthermore since $x\in C_j$, the connectivity graph contains the edge $\{x,b_j\}$, which implies that~$b_j$ is connected to $T$ in~$G$ via point~$x$. This shows that all points assigned to $T$ induce a connected subgraph of $G$. Now  consider point $a_i$ with $i\in [n]$. This point is assigned to~$F$. Since
$\{x_i, a_i\}, \{\bar{x}_i, a_i\} \in E$ and either $x_i$ or $\bar{x}_i$ is assigned to~$F$, we know that $a_i$ connects to a point $x\in \{x_i, \bar{x}_i\}$ which is assigned to $F$. This implies that all points assigned to $F$ induce a connected subgraph of $G$.

For the other direction, assume we have a feasible assignment solution of radius smaller than $3$. Observe that all points $a_i$ with $i\in [n]$ must be assigned to center $F$ and all points $b_j$ with $j\in [m]$ must be assigned to center $T$. Otherwise, the radius is at least $3$. Based on the assignment solution we construct a satisfying assignment $x^*_1,\ldots, x^*_n$ for the corresponding 3-SAT instance: if $x_i$ is assigned to $T$ then $x^*_i$ is set to true and to false otherwise. We claim that all clauses $c_i$ with $i\in [m]$ are satisfied by this assignment. 
Suppose that there exists a clause $c_j$ that is not satisfied. Since $\{b_j, T\} \notin E$ and point $b_j$ is assigned to center $T$, we may distinguish the following two cases:
 \begin{itemize} 
 \item There exists a neighbor point $x_i$ of $b_j$ in $G$ which is also assigned to $T$. By the definition of $E$, we know that $x_i$ is a literal of clause $c_j$. Because we set $x_i$ to be true, clause $c_j$ is satisfied.

 \item There exists a neighbor point $\bar{x}_i$ of $b_j$ in $G$ which is also assigned to $T$. Similarly by the definition of $E$, we know that $\bar{x}_i $ is a literal of clause $c_j$. Since point $a_i$ is assigned to center $F$ and $a_i$ is only neighbored to points $x_i$ and $\bar{x}_i$ in $G$, it follows that $x_i$ must be assigned to $F$. Thus $x_i$ is set to be false which in turn means that clause $c_j$ is satisfied.  
 \end{itemize}
In both cases, clause $c_j$ is actually satisfied contradicting the assumption that it is not satisfied. Hence the constructed solution satisfies the entire 3-SAT instance. 
 
By combing both directions we have proven that there exist a solution with radius $1$ for the assignment problem for the connected $2$-center problem with disjoint clusters if and only if the 3-SAT formula can be satisfied. Furthermore, with respect to the given centers~$T$ and~$F$ any clustering has either radius~$1$ or~$3$ because the points $x_i$ and $\bar{x}_i$ for $i\in[n]$ have a distance of~$1$ from both~$T$ and~$F$ while $a_i$ has distance~$1$ from~$F$ and distance~$3$ from $T$ and $b_j$ for $j\in[m]$ has distance~$1$ from~$T$ and distance~$3$ from~$F$. Altogether this implies that any approximation algorithm with an approximation factor smaller than~$3$ computes an assignment with radius~$1$ if the corresponding 3-SAT formula is satisfiable. We have seen that any such assignment can be transformed into a satisfying assignment of the 3-SAT formula. This proves the theorem.
\end{proof}

Observe that the previous theorem shows that the approximation algorithm for the assignment problem for connected $2$-center clustering with disjoint clusters from Corollary~\ref{cor:3and4approx-2givencenters} is optimal.

\subsection{Hardness of Connected \texorpdfstring{$k$}{k}-center with Freely Chosen Centers}
\label{lower:bound:three}

Now we consider the connected $k$-center problem with disjoint clusters where the centers are not given in advance. Our first observation is that the reduction in the proof of Theorem~\ref{lem:worst_assign} yields also for this case a hardness result for $k=2$. In that proof, the centers $T$ and $F$ are fixed. Assume that this is not the case and the algorithm can choose arbitrary centers. If the algorithm does not choose $T$ as a center then the distance of the points $b_j$ to the closest center is at least~$2$ and if the algorithm does not choose $F$ as a center then the distance of the points $a_i$ to the closest center is at least~$2$. Hence, if the 3-SAT formula is satisfiable then any approximation algorithm with an approximation factor smaller than~2 will compute a clustering with centers~$T$ and~$F$ and radius~1. This clustering corresponds to a satisfying assignment of the 3-SAT formula. This leads the following corollary, which has already been proven by Ge et al.~\cite{GeEGHBB08}.

\begin{corollary}
  \label{thm:lower_bound_2-freecenter}
 It is NP-hard to approximate the connected $k$-center problem with disjoint clusters with an approximation factor smaller than $2$ even if $k = 2$.
 \end{corollary}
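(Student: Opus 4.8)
The plan is to reuse, without modification, the instance built in the proof of Theorem~\ref{lem:worst_assign} from a 3-SAT formula, but now let the algorithm choose the two centers freely instead of fixing them to $T$ and $F$. The key new ingredient is an \emph{isolation} property of the gadget points: in the metric graph $G'$ the only edge incident to $b_j$ is $\{b_j,T\}$ and the only edge incident to $a_i$ is $\{a_i,F\}$. Hence $d(b_j,v)\ge 2$ for every $v\neq b_j$ with $v\neq T$, and $d(a_i,v)\ge 2$ for every $v\neq a_i$ with $v\neq F$; all distances are integral since $d$ is an unweighted graph metric. I would assume, as we may for $3$-SAT hardness, that the formula has at least two clauses and two variables, so that $m,n\ge 2$.

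First I would show that every disjoint connected $2$-clustering of radius strictly less than $2$ must use exactly $T$ and $F$ as its centers. Fix such a clustering; radius $<2$ and integrality force every point to lie within distance $1$ of its center. If $T$ were not a center, then by the isolation property each $b_j$ would have to be its own center; with $m\ge 2$ this already consumes both centers, leaving the points $a_i$ uncovered within distance $1$ (they sit at distance $\ge 2$ from any $b_j$), a contradiction. Thus $T$ is a center, and symmetrically $F$ is a center; as $k=2$, the two centers are precisely $T$ and $F$, and the instance has collapsed back to the assignment problem of Theorem~\ref{lem:worst_assign}.

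Finally I would close the gap using Theorem~\ref{lem:worst_assign}. If the formula is satisfiable, its forward direction yields a radius-$1$ clustering with centers $T$ and $F$, so the optimum is $1$. Conversely, any radius-$<2$ solution must (by the previous paragraph) use centers $T$ and $F$, where the radius is known to be $1$ or $3$; being below $2$, it equals $1$, and the backward direction of Theorem~\ref{lem:worst_assign} turns it into a satisfying assignment. Consequently a polynomial-time approximation algorithm with factor $2-\epsilon$ would, on a satisfiable instance, return radius $(2-\epsilon)\cdot 1<2$ and thereby expose a satisfying assignment, while on an unsatisfiable instance no radius-$<2$ solution exists, so its output has radius at least $2$; comparing the returned radius against the threshold $2$ decides $3$-SAT, contradicting $P\neq NP$. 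I expect the only delicate point to be the forcing argument of the middle paragraph — one must rule out that some unconventional pair of centers (for instance centers placed on the $a_i$ or $b_j$ themselves) secretly attains radius below $2$, which is exactly what the isolation of $a_i$ and $b_j$ in $G'$, together with $m,n\ge 2$, prevents.
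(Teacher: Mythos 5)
Your proof is correct and takes essentially the same route as the paper: reuse the instance from Theorem~\ref{lem:worst_assign} and argue that any solution of radius strictly less than~$2$ is forced to use $T$ and $F$ as centers, which collapses the problem back to the assignment version and hence to 3-SAT. Your middle paragraph is in fact slightly more careful than the paper's one-line forcing argument, which glosses over the possibility of centers placed at the $a_i$ or $b_j$ points themselves --- a case you correctly rule out via the isolation of these points in $G'$ together with the assumption $m,n\ge 2$.
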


Observe that the previous corollary shows that the approximation algorithm for the connected $2$-center problem with disjoint clusters from Corollary~\ref{cor:2approx-2freecenters} is optimal. Next we show a better lower bound for $k\ge 4$.

\begin{theorem}
 \label{thm:lower_bound_4-center}
It is NP-hard to approximate the connected $k$-center problem with disjoint clusters with an approximation factor smaller $3$ even if $k = 4$.
\end{theorem}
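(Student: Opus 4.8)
The plan is to reduce again from 3-SAT, reusing the SAT gadget from the proof of Theorem~\ref{lem:worst_assign}, whose \emph{assignment} version already exhibits a clean $1$-versus-$3$ gap (radius $1$ iff the formula is satisfiable, radius $3$ otherwise) \emph{when the two centers are pinned to $T$ and $F$}. The whole difficulty, as witnessed by Corollary~\ref{thm:lower_bound_2-freecenter}, is that once the algorithm may place centers freely, this gap collapses to a factor of $2$: a single well-chosen vertex (e.g.\ some $x_1$) sits at distance~$2$ from every $a_i$ (via $F$) and every $b_j$ (via $T$), so it covers the whole gadget at radius~$2$ regardless of satisfiability. Since the graph metric is integral, ruling out factor $3$ amounts to ensuring that the unsatisfiable case admits \emph{no} solution of radius $2$, i.e.\ that a radius-$2$, $4$-clustering with disjoint connected clusters can only exist when the formula is satisfiable. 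So the goal is to rigidify the instance at radius~$2$.

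To achieve this I would augment the SAT core with \emph{forcing gadgets} that both pin two centers to the roles of $T$ and $F$ and ``use up'' the remaining two centers of the budget $k=4$. Concretely, I would attach to $T$ and to $F$ (and/or introduce two dummy blocks that are far from everything in the metric $d$ and connected in $G$ only through $T$, respectively $F$) structures designed so that any solution of radius $<3$ must dedicate one center to each gadget, while that center is too far in $d$ to help cover any SAT-core point within radius~$2$. The point of choosing $k=4$ is exactly this accounting: two of the four centers are consumed by the two forcing gadgets, and the gadgets are arranged so that the only way to cover the SAT core at radius $\le 2$ with the two remaining centers is to place them so that they behave like $T$ and $F$ in the assignment problem. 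This reinstates the rigid $1$-versus-$3$ behaviour of Theorem~\ref{lem:worst_assign} in a free-center setting.

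For the completeness direction, a satisfying assignment yields the radius-$1$ assignment of the SAT core described in Theorem~\ref{lem:worst_assign} (each $x_i,\bar x_i$ split between $T$ and $F$ according to its truth value, all $a_i$ to $F$, all $b_j$ to $T$), together with one radius-$1$ cluster per forcing gadget; checking connectivity of each cluster in $G$ and disjointness gives four feasible clusters of radius $1$. For the soundness direction I would argue contrapositively: take any feasible solution of radius $<3$, hence of radius $\le 2$. First use the forcing gadgets to conclude that two centers are spent on them and that the SAT core must be covered by the two remaining centers acting as $T$ and $F$; then replay the argument of Theorem~\ref{lem:worst_assign}, namely that at radius $\le 2$ every $a_i$ must still be served through $F$ and every $b_j$ through $T$ (their only short connections), which forces the truth assignment $x_i^*$ and satisfies every clause $c_j$ via the literal connecting $b_j$ to the $T$-side.

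The main obstacle is precisely the soundness step: I must exclude the ``cheating'' radius-$2$ solutions that exploit free center placement, which is the exact phenomenon that capped Corollary~\ref{thm:lower_bound_2-freecenter} at factor~$2$. This forces a careful design of the forcing gadgets (so that no center serving a gadget can double as a central covering vertex for the SAT core, and so that a gadget genuinely \emph{requires} its own center rather than being coverable within radius~$2$ by a center placed inside the core) together with an exhaustive case analysis over how four centers can be distributed between the gadgets and the core. Getting the metric distances in the graph-metric construction to make ``core-covering'' centers and ``gadget-covering'' centers mutually incompatible at radius~$2$ is the delicate part; the remaining bookkeeping (connectivity of each cluster in $G$, disjointness, and the clause-satisfaction argument) is then a routine adaptation of Theorem~\ref{lem:worst_assign}.
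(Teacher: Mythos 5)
Your completeness direction and your diagnosis of the difficulty (one must kill the radius-$2$ ``cheating'' solutions that exploit free center placement) are both correct, but the proposed architecture cannot deliver the soundness step, and the gap is not just a matter of delicate gadget tuning. The obstruction is structural: in the gadget of Theorem~\ref{lem:worst_assign}, \emph{every} literal vertex is within metric distance $2$ of every point of the SAT core (e.g.\ $d(x_1,a_i)\le d(x_1,F)+d(F,a_i)=2$ and $d(x_1,b_j)\le d(x_1,T)+d(T,b_j)=2$), and the core minus $\{T,F\}$ stays connected in $G$ (variable paths $x_i\mhyphen a_i\mhyphen \bar{x}_i$ plus the clause edges; 3-SAT restricted to connected incidence graphs is still NP-hard). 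So with forcing gadgets hanging only off $T$ and $F$, the following is always feasible at radius $2$, satisfiable formula or not: one cluster per gadget (absorbing $T$, respectively $F$, through which the gadget is attached), one cluster equal to the core minus $\{T,F\}$ centered at $x_1$, and a split-off singleton to reach exactly $k=4$. Moreover, you cannot repair this by making the gadget centers ``incompatible'' with the core: if a gadget point sits at distance $1$ from $T$, the triangle inequality forces it within distance $2$ of all literals and all $b_j$, and if gadget points are far from $T$ they need internal centers, which frees $T$ and $F$ to be swallowed by the cheating core cluster. Either way soundness fails, so any fix must restructure the core itself rather than decorate $T$ and $F$.

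This is exactly what the paper does, and it is the idea missing from your proposal. The paper takes \emph{five} copies of the gadget of Theorem~\ref{lem:worst_assign} and identifies their $T/F$ vertices into four shared hubs ($T_1=F_4=T_5$, $F_1=T_2$, $F_2=T_3=F_5$, $F_3=T_4$). Completeness: the four hubs serve as centers, giving radius $1$ from a satisfying assignment. Soundness: with only $4$ centers and $5$ copies, the pigeonhole principle yields a copy $h$ none of whose private points is a center; its private points can only lie in clusters containing $T_h$ or $F_h$ (connectivity), and the construction has the polarization property that every vertex outside $V_h\setminus\{T_h,F_h\}$ is at distance at least $3$ from \emph{all} of $A_h=\{a_{hj}\}$ or from \emph{all} of $B_h=\{b_{hj}\}$. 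Hence the two clusters reaching into copy $h$ behave exactly like the pinned centers $T$ and $F$ of the assignment problem, and the $1$-versus-$3$ gap of Theorem~\ref{lem:worst_assign} is reinstated. In short, the role you assign to ``forcing gadgets'' is played in the paper by replication plus pigeonhole, which is the step your proposal would need and does not supply.
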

\begin{proof}
  Again we use 3-SAT as a starting point and reduce from 3-SAT to the connected $k$-center problem with $k=4$. Assume that a 3-SAT instance with variables $X = \{x_1,...,x_n\}$ and clauses $C=\{c_1, c_2, ..., c_m\}$ is given, where each clause consists of at most three literals from $\{x_i, \bar{x}_i\}_{i\in [n]}$. 
  We create an instance of the connected $k$-center problem that consists of $5$ sub-instances where each sub-instance $i\in [5]$ is a copy of the instance constructed in the proof of Theorem~\ref{lem:worst_assign}. We denote the points $T$ and $F$ from sub-instance~$i$ by $T_i$ and~$F_i$ and we link the sub-instances by these points. Figure~\ref{fig:4centershardness} shows this construction.

  Formally, we define the connectivity graph $G=(V, E)$ with $V=\bigcup_{i\in [5]} V_i$ where  $$V_i=\{T_i, F_i\}\cup \{x_{ij}, \bar{x}_{ij}, a_{ij}\mid {j\in [n]}\}\cup \{b_{ij}\mid {j\in [m]}\}$$
  $$T_1=F_4=T_5, F_1=T_2, F_2=T_3=F_5, F_3=T_4,$$
  and $E=\bigcup_{i\in [5]} E_i$ where each $E_i$ denotes the edge set from the instance constructed in Theorem~\ref{lem:worst_assign}. That is
  $$E_i=\bigcup_{j\in [n]}\{\{x_{ij}, T_i\}, \{x_{ij}, F_i\},\{\bar{x}_{ij}, T_i\}, \{\bar{x}_{ij}, F_i\}, \{x_{ij}, a_{ij}\}, \{\bar{x}_{ij}, a_{ij}\}  \} \cup B_i$$ where $B_i$ contains an edge $\{x_{ij},b_{i\ell}\}$ when $x_j\in c_{\ell}$ and an edge $\{\bar{x}_{ij},b_{i\ell}\}$ when $\bar{x}_j\in c_{\ell}$.
  The metric $d$ is again a graph metric defined by the graph $G' = (V, E')$ with $E'=\bigcup_{i\in [5]} E'_i$  where $$E'_i=\bigcup_{j\in [n]}\{\{x_{ij}, T_i\}, \{\bar{x}_{ij}, T_i\}, \{x_{ij}, F_i\}, \{\bar{x}_{ij}, F_i\}, \{a_{ij}, F_i\}\} \cup \bigcup_{j\in [m]}\{\{b_{ij}, T_i\}\}.$$
  That is, the distance $d(x,y)$ of two points $x,y\in V$ is defined as the length of the shortest $x$-$y$-path in $G'$. 
 The problem is to find four disjoint clusters that cover all points from $V$ with minimum radius. 

   \begin{figure}
     \centering
     \begin{minipage}{.45\textwidth}
       \centering
       \includegraphics[width=.8\linewidth]{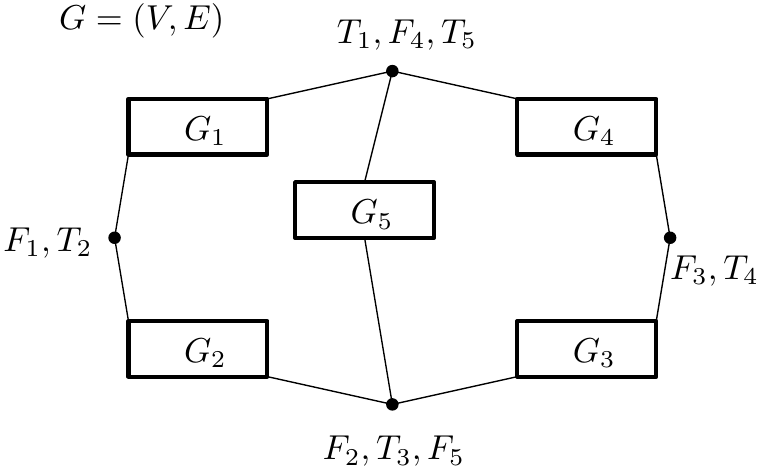}
     \end{minipage}%
     \hspace{0.45cm}
     \begin{minipage}{.45\textwidth}
       \centering
       \includegraphics[width=.8\linewidth]{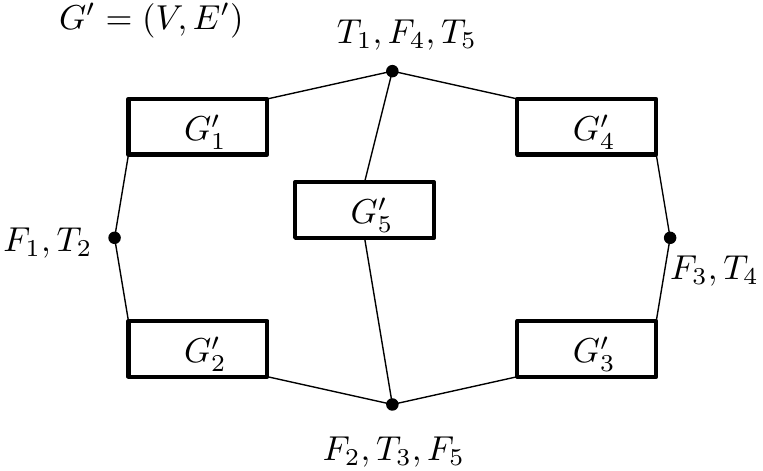}
     \end{minipage}
     \captionof{figure}{The connected $k$-center instance corresponding to a 3-SAT instance (left: connectivity graph~$G$; right: graph~$G'$ that induces the graph metric~$d$). For~$i\in[5]$, the connectivity graph $G_i$ and the graph $G'_i$ are copies of the corresponding graphs constructed in Theorem~\ref{lem:worst_assign}.}
     \label{fig:4centershardness}
     \end{figure} 

 If $x^*_1,\ldots,x^*_n$ is a satisfying assignment for the 3-SAT instance then the following is a solution with radius $1$ for the connected $4$-center problem with disjoint clusters: Choose all points $T_i$ and $F_i$ as centers (observe that due to the overlap these are exactly four centers). If $x^*_j$ is true then assign the points as follow: 
 \begin{itemize}
   \item
    Assign $x_{1j}$,  $\bar{x}_{4j}$, and $x_{5j}$ to center $T_1=F_4=T_5$. 
   \item 
   Assign $\bar{x}_{1j}$ and $x_{2j}$ to center $F_1=T_2$.
   \item 
   Assign $\bar{x}_{2j}$, $x_{3j}$ and $\bar{x}_{5j}$ to center $F_2=T_3=F_5$. 
   \item 
  Assign $x_{4j}$ and $\bar{x}_{3j}$ to center $F_3=T_4$.
 \end{itemize}
Otherwise if $x^*_j$ is false then assign the points as follow: 
 \begin{itemize}
   \item 
   Assign $\bar{x}_{1j}$,  $x_{4j}$, and $\bar{x}_{5j}$ to center $T_1=F_4=T_5$. 
   \item 
   Assign ${x}_{1j}$ and $\bar{x}_{2j}$ to center $F_1=T_2$.
   \item 
   Assign $ {x}_{2j}$, $\bar{x}_{3j}$ and $ {x}_{5j}$ to center $F_2=T_3=F_5$. 
   \item 
   Assign $\bar{x}_{4j}$ and $ {x}_{3j}$ to center $F_3=T_4$.
 \end{itemize}
 For the remaining points, apply the following assignment:
 \begin{itemize}
   \item Assign points $ \bigcup_{j\in [m]} \{b_{1j},a_{4j},b_{5j}\}$ to center $T_1=F_4=T_5$. 
   \item Assign points $ \bigcup_{j\in [m]} \{a_{1j},b_{2j}\}$ to center $F_1=T_2$.
   \item Assign points $ \bigcup_{j\in [m]} \{a_{2j},b_{3j},a_{5j}\}$ to center $F_2=T_3=F_5$. 
   \item Assign points $ \bigcup_{j\in [m]} \{b_{4j},a_{3j}\}$ to center $F_3=T_4$.
 \end{itemize}
 As in the proof of Theorem~\ref{lem:worst_assign}, one can easily verify that the radius of this clustering is~$1$ because each point from $V$ is connected by an edge from $E'$ to its corresponding center. Also similar to the reasoning in Theorem~\ref{lem:worst_assign}, we can show that the points that are assigned to the same center induce a connected subgraph of $G$, which is pairwise disjoint from the other subgraphs. Thus, we have a feasible 4-clustering with radius $1$ with disjoint clusters. 

 For the other direction, assume we have a feasible clustering of radius smaller than $3$. Since there are at most four centers, we know that there exists an index $h\in [5]$ such that all points in $V_h\setminus \{T_h, F_h\}$ are not centers.  
 According to the graph $G'$, for any point $v$ in $V \setminus (V_h \setminus\{T_h,F_h\})$, one of the following two statements holds:
 \begin{itemize}
 \item For all $i \in [n]$, the distance  $d(a_{hi},v) \geq 3$.
 \item For all $i \in [m]$, the distance   $d(b_{hi},v) \geq 3$. 
 \end{itemize}

 Suppose $T_h$ is assigned to center $v_t$ (it could be $T_h$ itself) and $F_h$ is assigned to center $v_f$ (it could be $F_h$ itself). Note that every point in $V_h$ needs to get assigned to one of these two centers because in the connectivity graph $G$ 
the points from $V_h \setminus \{T_h, F_h\}$ are connected to the other nodes only via $T_h$ and $F_h$.
 By the observation above, the center $v_t$ is at  distance at least $3$ either from each point in $A_h := \bigcup_{j\in [n]}\{a_{hj}\}$ or from each point in $B_h := \bigcup_{j \in [m]}\{b_{hj}\}$. The same is true for the center $v_f$. 
 If both centers are at distance at least $3$ from the same set of points $A_h$ or $B_h$, then the radius of the cluster containing points $A_h$ or $B_h$ is obviously at least $3$. Thus, we may assume without loss of generality that $v_t$ is at  distance at least $3$ from every node in $A_h$ and $v_f$ is at distance at least $3$ from every node in $B_h$. (Note that $T_h$ and $F_h$ have the same neighbors in the connectivity graph $G_h=(V_h, E_h)$.) Any clustering with radius smaller than $3$ needs to assign all nodes $A_h$ to $v_f$ and assign all nodes $B_h$ to $v_t$.

We know that $T_h$ is assigned to point $v_t$ and $F_h$ is assigned to point $v_f$. Furthermore all points $A_h$ (and $B_h$) are connected to $v_t$ (respectively $v_f$) via points $\bigcup_{j\in [n]}\{x_{hj}, \bar{x}_{hj}\}$ and $T_h$ (respectively $F_h$). Hence, the problem becomes an assignment problem for connected $k$-center with disjoint clusters and two fixed centers $T_h, F_h$. This is exactly the same problem that we analyzed in the proof of Theorem~\ref{lem:worst_assign}. There we argued already that a feasible assignment solution with radius smaller than $3$ leads to a satisfying assignment of the formula $C=\{c_1,\ldots, c_m\}$. 
 
Altogether this implies that any approximation algorithm with an approximation factor smaller than~$3$ computes an assignment with radius~$1$ if the corresponding 3-SAT formula is satisfiable. We have seen that any such assignment can be transformed into a satisfying assignment of the 3-SAT formula. This proves the theorem.
\end{proof}

\subsection{Hardness Results for Non-disjoint Connected Clustering}
\label{sec:non-disjoint}
In this section, we provide the proofs that lead to Corollary~\ref{thm:nondisjoint:both}.

\thmNonDisjointBoth*

The corollary follows immediately from Lemma~\ref{thm:tree_lowerbound_diameter_non-disjoint} and Lemma~\ref{thm:lower_bound_non-disjoint_k-center}.

\begin{lemma}
    \label{thm:tree_lowerbound_diameter_non-disjoint}
    Let $\epsilon>0$.
    Assuming P~$\neq$~NP, there is no $(2-\epsilon)$-approximation algorithm for the connected k-diameter problem with non-disjoint clusters, even if $G$ is a star (a tree of height 1).
\end{lemma}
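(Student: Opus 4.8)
The plan is to reduce from the \textbf{clique cover} (clique partition) problem, which asks whether the vertices of a graph $H=(U,F)$ can be partitioned into $k$ cliques and is NP-complete (equivalently, whether the complement of $H$ is $k$-colorable). The structural observation driving the reduction is that in a star any connected cluster with at least two vertices must contain the hub, so in the non-disjoint setting a clustering of the star amounts to covering the leaves by $k$ sets, each of which, together with the shared hub, forms one cluster. The non-disjointness is essential here: it is precisely what allows every cluster to reuse the hub.

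Given $H=(U,F)$ and $k$, I would build a star whose hub is a fresh point $h$ and whose leaves are points $v_u$, one per vertex $u\in U$. The metric is chosen with all distances in $\{1,2\}$: set $d(h,v_u)=1$ for every $u$, set $d(v_u,v_{u'})=1$ whenever $\{u,u'\}\in F$, and $d(v_u,v_{u'})=2$ otherwise. Since every nonzero distance lies in $\{1,2\}$ and the smallest sum of two nonzero distances is $2$, the triangle inequality holds automatically, so this is a valid metric (it need not be realized as a graph metric). Under this metric a cluster $\{h\}\cup S$, with $S$ a set of leaves, has diameter at most $1$ if and only if the corresponding vertex set is a clique in $H$, because the only way to introduce a distance of $2$ is to include two non-adjacent leaves.

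The core of the argument is then an exact dichotomy of the optimal diameter. If $H$ admits a clique cover with $k$ cliques, assigning each clique together with $h$ to one cluster yields at most $k$ connected clusters of diameter $1$ that cover all leaves (and the hub, which lies in every non-singleton cluster); padding with singleton clusters if fewer than $k$ are used does not increase the diameter. Conversely, if the clique cover number of $H$ exceeds $k$, then no $k$ clusters of diameter at most $1$ can cover all leaves, and since all pairwise distances lie in $\{1,2\}$ the diameter of some cluster must then equal $2$; hence the optimal diameter is exactly $2$. This produces the clean gap: the optimum is $1$ on yes-instances of clique cover and $2$ on no-instances.

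Finally I would invoke the gap. A polynomial-time $(2-\epsilon)$-approximation would, on a yes-instance, output a clustering of diameter strictly below $2\cdot 1$ and hence of diameter exactly $1$, while on a no-instance it would output a clustering of diameter at least $2$; the reported value would therefore decide clique cover in polynomial time, contradicting $\mathrm{P}\neq\mathrm{NP}$. The one point requiring care — and the main obstacle — is the converse direction, namely that overlapping clusters cannot beat the clique partition number: because any clique remains a clique after deleting shared vertices, allowing clusters to overlap never reduces the number of diameter-$1$ clusters needed below the clique cover number, so the reduction stays tight despite the non-disjointness.
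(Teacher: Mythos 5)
Your proposal is correct and follows essentially the same route as the paper: a reduction from clique cover to a star instance with distances in $\{1,2\}$ (hub at distance $1$ from all leaves, leaf--leaf distance $1$ iff adjacent), yielding the gap between diameter $1$ and $2$. Your explicit remark that overlapping clusters cannot beat the clique partition number (since cliques stay cliques under vertex deletion) is a nice touch that the paper's proof treats implicitly, but it is the same argument.
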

    \begin{proof}
    We reduce the clique cover problem to the connected $k$-diameter problem with non-disjoint clusters on a star. Given a graph $G=(V,E)$ and a number $k<n:=|V|$, the clique cover problem is to decide if it is possible to find a partition of the vertices of the graph into at most $k$ cliques, where a clique is a subset of vertices within which every two vertices are adjacent. This problem is NP-complete~\cite{karp1972reducibility}.

    For a given instance of the clique cover problem, create an instance of the connected $k$-diameter problem on a star.
    By \emph{star} we refer to a tree of height $1$ with a root $r$ and leaves $v_1,\ldots,v_n$.
    We use the given nodes in $G$ as $v_1,\ldots,v_n$ and define the metric on these nodes by $d(r,v_i)=1$ for all $i\in[n]$, $d(v_i, v_j)=1$ for all $i, j\in [n]$ for which $\{v_i, v_j\}$ is an edge in graph $G$ and $d(v_i, v_j)=2$ for all other $i, j\in [n]$.

    If the instance for clique cover is a yes-instance, then there exist $k$ cliques $C_1, \ldots, C_k$ that together cover all nodes in $V$. If we add $r$ to every clique, then we get $k$ clusters in the star with diameter $1$.
    Now assume that there is a $k$-clustering of the star with diameter $1$. If we remove $r$ from all clusters in which it is contained, we obtain $k$ cliques that cover~$V$.

    Thus, solving the clique cover problem is equivalent to deciding whether the transformed instance allows for a $k$-clustering of diameter $1$. Furthermore, the only possible diameters are $1$ and $2$. Thus it is NP-hard to approximate the connected $k$-diameter problem with non-disjoint clusters on star connectivity graphs better than by a factor of $2$.
\end{proof}

\begin{lemma}
   \label{thm:lower_bound_non-disjoint_k-center}
   Let $\epsilon>0$.
   Assuming P~$\neq$~NP, there is no factor $(2-\epsilon)$-approximation algorithm for the connected k-center problem with non-disjoint clusters, even if $G$ is a star (a tree of height 1).
\end{lemma}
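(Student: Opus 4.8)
The plan is to reduce from the set cover problem, mirroring the clique-cover reduction used for Lemma~\ref{thm:tree_lowerbound_diameter_non-disjoint} but adapted to the center objective. Given a set cover instance with universe $\{e_1,\dots,e_n\}$, sets $S_1,\dots,S_m$, and parameter $k$, I would build a star with root $r$ whose leaves are one node $s_j$ per set and one node $e_i$ per element. The metric (again a $\{1,2\}$-valued, hence automatically metric, distance) is chosen so that $d(r,s_j)=1$ and $d(s_j,s_{j'})=1$ for all set nodes, $d(s_j,e_i)=1$ exactly when $e_i\in S_j$ and $2$ otherwise, while $d(r,e_i)=2$ and $d(e_i,e_{i'})=2$. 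The intuition is that a set node can reach (at radius~$1$) precisely its own elements, whereas the root and the element nodes are useless as centers for covering elements.

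Step one (completeness): if $S_{j_1},\dots,S_{j_k}$ cover the universe, open centers at $s_{j_1},\dots,s_{j_k}$. Each cluster around $s_{j_l}$ takes $r$ together with $\{e_i : e_i\in S_{j_l}\}$, which lies within radius~$1$ and is connected because it contains the root; one of these clusters additionally absorbs all remaining set nodes, each of which is at distance~$1$ from its center. This yields $k$ connected clusters of radius~$1$ covering $V$. Step two (soundness): suppose a connected $k$-clustering of radius~$1$ exists. In a star any cluster with two or more leaves must contain $r$, so a cluster centered at an element $e_i$ is forced to be the singleton $\{e_i\}$ (reaching $r$ costs~$2$), a cluster centered at $r$ contains no element (again distance~$2$), and only a center $s_j$ with $e_i\in S_j$ can cover $e_i$ at radius~$1$. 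Hence every element is covered either by a singleton center or by a set-node center containing it; collecting the chosen sets and, for each singleton element, one arbitrary set containing it, produces a set cover of size at most the number of centers, i.e.\ at most~$k$.

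The step I expect to be delicate is the soundness bookkeeping, specifically making sure the set-node leaves do not themselves consume extra centers: this is exactly why I place all set nodes at mutual distance~$1$ (and at distance~$1$ from $r$), so that a single set-node center sweeps up every other set node for free, while $d(r,e_i)=2$ keeps the root from trivially covering the elements. Finally, since every distance is at most~$2$, the single cluster centered at $r$ always achieves radius~$2$, so the optimal radius is exactly~$1$ in yes-instances and exactly~$2$ in no-instances; a polynomial $(2-\epsilon)$-approximation would distinguish the two cases and hence decide set cover, which is NP-complete~\cite{karp1972reducibility}.
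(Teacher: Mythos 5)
Your proposal is correct and matches the paper's own proof essentially verbatim: the same reduction from set cover to a star instance with the same $\{1,2\}$-valued metric (root at distance $1$ from set nodes and $2$ from element nodes, set--element distance $1$ exactly for containment), the same completeness construction exploiting non-disjointness to place the root in every cluster, and the same soundness case analysis showing element-centered clusters are singletons and root-centered clusters cover no elements. The gap between radii $1$ and $2$ then rules out any $(2-\epsilon)$-approximation, exactly as in the paper.
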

\begin{proof}
  We reduce the unweighted set cover problem to the connected $k$-center problem.
 Given a set of elements $E=\{e_1,\ldots,e_n\}$, a set of subsets $S_1,\ldots,S_m \subseteq E$ and a number $k\in\mathbb{N}$, the unweighted set cover problem asks whether there exists a set $I\subseteq\{1,\ldots,m\}$ of cardinality $k$ such that $\cup_{j \in I} S_j = E$. We assume that for any $e_i\in E$, there is at least one set that contains it. We define the connectivity graph $G=(V,E)$ by $V=\{z,v_1,\ldots,v_n,w_1,\ldots,w_m\}$ and $E=\{ \{ z,v_i\} \mid i = 1,\ldots n\} \cup \{ \{ z,w_i\} \mid i = 1,\ldots m\}$. The $v_i$ are supposed to represent the elements and the $w_i$ represent the sets. Notice that there are only edges to $z$ which means that any cluster satisfying the connectivity requirement can either only contain a single node or it has to contain $z$. Since we are in the non-disjoint case, it is allowed to put $z$ in multiple clusters.

We define a distance function which assigns either $1$ or $2$ to every point pair, which automatically constitutes a metric. We set $d(v_i,z)=2$ for all $i \in [n]$ and $d(w_i,z)=1$ for all $i\in[m]$. Furthermore, we set $d(v_i,w_j)=1$ if and only if $e_i \in S_j$. Finally, $d(v_i,v_j)=2$ for all $i,j\in[n]$ and $d(w_i,w_j)=1$ for all $i,j\in[m]$.  We keep $k$. This describes our connected $k$-center instance.

Let $I$ be a valid set cover. We choose $C=\{w_i \mid i \in I\}$ as our centers. Notice that $|C|=k$. We place $z$ in all $k$ clusters, it has a distance of $1$ to all centers. With respect to the connectivity constraint, we can now place the remaining points arbitrarily because they will be connected to the center via $z$. We can construct a clustering of cost $1$ because every vertex has a center within distance $1$: The remaining $w_j$ that are not centers can be placed arbitrarily because they are at distance $1$ of all centers. For every $v_i$, there is a set $S_j$ with $j \in I$ which covers $e_i$, which means that $w_j$ is a center and $d(v_i,w_j)=1$, so we can place $v_i$ in the cluster with center $w_j$ (ties are broken arbitrarily). Thus, we get a solution of radius $1$.

For the other direction, assume we have a solution of radius $1$. If there is a center at $z$, then the respective cluster can only contain points from $\{w_1,\ldots,w_m\}$ because the other nodes have distance $2$ to $z$. We can ignore such a cluster. If there is a center at a node $v_i$, then the cluster cannot contain $z$, again because $d(z,v_i)=2$ for all $i\in [n]$. A cluster not containing $z$ can only contain one point. Thus, all clusters of this type constitute singleton clusters. We can now construct our set cover. For any $w_j$ in the center set, we pick $S_j$. For any $v_i$ in the center set, we pick an arbitrary set that covers $e_i$. Together, these are at most $k$ sets that cover $E$.

We conclude that there is a set cover of size $k$ if and only if there is a connected $k$-center clustering of radius $1$. Furthermore, the only possible radii are $1$ and $2$, implying that it is NP-hard to approximate the connected $k$-center problem with star connectivity graphs better than by a factor of $2$.
\end{proof}

\subsection{Hardness for Connected \texorpdfstring{$k$}{k}-diameter on Star Connectivity Graphs}\label{sec:hardness:star:diameter}
\newcommand{\xy}{UMMS}

\lemLowerBound*
\begin{proof}
By \emph{star} we refer to a tree of height 1 with a root $r$ and leaves $v_1,\ldots,v_n$.
We reduce the uniform minimum multicut problem on stars (\xy) to the connected $k$-diameter problem.
Given a star, a list of $m$ vertex pairs $\{v_i,v_j\}$ and a number $k < n$, the \xy\ problem is to decide if it is possible to separate all pairs of vertices in the list by deleting at most $k$ edges. This problem is NP-hard (Theorem 3.1 in~\cite{GVY97}; we can assume without loss of generality that $r$ is not in any pair).

For a given instance of \xy, create an instance of connected $k$-diameter clustering by using the given star as $G$ and defining the metric on $V$ by $d(r,v_i)=1$ for all $i\in[n]$, $d(v_i, v_j)=2$ for all $i,j\in [n]$ for which $\{v_i,v_j\}$ is a pair and $d(v_i,v_j)=1$ for all other $i,j \in [n]$.

If the original instance is a yes-instance, then there exist $k$ edges $\{r,v_{i_1}\},\ldots,\{r,v_{i_k}\}$ such that all pairs are disconnected. This means that if we place every $v_{i_j}$ into its own cluster, then we get $k+1$ clusters and the maximum diameter of any cluster is $1$.
Now assume that there is a solution with $k+1$ clusters and a maximum diameter of $1$. Only one cluster can contain $r$ and the other $k$ clusters have to be singleton clusters because of the connectivity constraint. The edges to these singletons are sufficient to disconnect all pairs.
Thus, solving \xy\ is equivalent to deciding whether the transformed instance allows for a $(k+1)$-clustering of diameter $1$. Furthermore, the only possible diameters are $1$ and $2$. Thus it is NP-hard to approximate the connected $k$-diameter problem with star connectivity graphs better than by a factor of $2$.
\end{proof}

\section{Connected \texorpdfstring{$k$}{k}-diameter with Line Connectivity Graphs}\label{sec:lineproofs}

In this section we show Corollary~\ref{thm:line:all}.

\thmLineAll*
\begin{proof}
We proved the non-disjoint case for $k$-center in Section~\ref{sec:resultssummary}. Notice that that proof did not need the triangle inequality and holds for arbitrary distances. The disjoint case for $k$-center follows from Theorem~\ref{thm:opt_kcenter_tree}. For disjoint and non-disjoint $k$-diameter clustering, consider 
 Lemma~\ref{thm:opt_diam_line}.
\end{proof}

\begin{lemma}
    \label{thm:opt_diam_line}
    When the connectivity graph $G$ is a line graph, then the connected $k$-diameter problem with disjoint or non-disjoint clusters can be solved optimally in time $O(n^2\log n)$. This is true even if the distances are not a metric.
\end{lemma}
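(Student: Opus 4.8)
The plan is to follow verbatim the scheme set up in Section~\ref{sec:resultssummary}: sort the $O(n^2)$ candidate objective values, binary search for the smallest feasible one, and design an $O(n^2)$ subroutine solving Problem~\ref{p:subroutine} for $\alpha=1$, i.e. deciding, for a given threshold $r$, whether $V$ can be covered by at most $k$ clusters of diameter at most $r$. Since $G$ is a line, every connected cluster is a contiguous subpath $[p,q]:=\{v_p,\ldots,v_q\}$ with diameter $\max_{p\le s,t\le q} d(v_s,v_t)$. The first thing I would record is that, unlike the $k$-center situation in Corollary~\ref{thm:line:all}, \emph{overlaps never help} for $k$-diameter: the diameter of a subinterval is a maximum over a subset of the point pairs and is therefore no larger than the diameter of any interval containing it. Consequently the optimal disjoint and non-disjoint values coincide, and one greedy subroutine settles both cases at once.

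For the subroutine I would first precompute the diameter $D[i][j]$ of every interval $[i,j]$ by the recurrence $D[i][j]=\max\bigl(D[i][j-1],\,D[i+1][j],\,d(v_i,v_j)\bigr)$ in $O(n^2)$ time (this is independent of $r$, so it is done once), and then, for the current threshold $r$, read off $b_i:=\max\{\ell\ge i:\ D[i][\ell]\le r\}$ for every $i$; since $D[i][\cdot]$ is nondecreasing, all $b_i$ are obtained in $O(n^2)$. The greedy covering then repeatedly takes the leftmost not-yet-covered point $v_p$, forms the cluster $[p,b_p]$, advances to $p'=b_p+1$, and reports feasibility iff at most $k$ clusters are produced. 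This output is already a family of pairwise disjoint intervals, so it doubles as the disjoint solution; because $b_i\ge i$ each step advances, so the loop terminates in $O(n)$ steps, giving $O(n^2)$ per threshold and $O(n^2\log n)$ overall, as claimed.

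The crux is to show that this greedy uses the fewest possible clusters even against an \emph{arbitrary, possibly overlapping} cover, which simultaneously certifies optimality in both models. I would argue ``greedy stays ahead''. First, the furthest right any admissible interval covering $v_p$ can reach is exactly $b_p$: any such interval $[a,c]$ with $a\le p\le c$ contains the subinterval $[p,c]$ of diameter $\le r$, forcing $c\le b_p$, while $[p,b_p]$ is itself admissible. Now let greedy's intervals have strictly increasing right endpoints $g_1<\cdots<g_s=n$ and let an arbitrary feasible cover have intervals sorted by right endpoint $o_1\le\cdots\le o_t$. By induction on $i$ one shows $g_i\ge o_i$: for $i=1$, the cover interval containing $v_1$ has right endpoint at most $b_1=g_1$; for the step, the point $v_{g_i+1}$ that triggers greedy's $(i{+}1)$-st interval has index exceeding $o_1,\ldots,o_i$ (using $g_i\ge o_i$), so it is covered only by some $O_j$ with $j\ge i+1$, whence $g_{i+1}=b_{g_i+1}\ge o_j\ge o_{i+1}$. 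Since any feasible cover must reach $v_n$, so $o_t=n$, taking $i=t$ gives $g_t\ge n$, meaning greedy had already finished with at most $t$ intervals; hence $s\le t$. Thus greedy is optimal, and as it produces disjoint clusters, the disjoint and non-disjoint optima coincide with greedy's count.

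I expect the ``greedy stays ahead'' induction to be the main obstacle, together with the conceptual point that the diameter problem on a line reduces cleanly to classical minimum interval point cover; the only genuine input is the monotonicity/subinterval property of the diameter, which uses no triangle inequality and hence keeps the result valid for non-metric distances. The sole remaining bookkeeping is the routine passage from ``at most $k$ clusters'' to a partition into exactly $k$ clusters when $k\le n$, obtained by splitting any interval, which cannot increase the diameter.
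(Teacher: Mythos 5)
Your proposal is correct, and the algorithm itself is the same one the paper uses: inside the Hochbaum--Shmoys binary-search framework, repeatedly take the leftmost uncovered vertex and extend the maximal admissible interval from it, reporting failure if more than $k$ intervals are produced. Where you genuinely differ is in the justification of optimality. The paper argues in two steps: first a structural transformation showing that any (possibly overlapping) solution can be made disjoint without increasing the diameter, by deleting an overlapping end-segment of one of two overlapping subpaths (with the remark that this fails for $k$-center, since a center may lie in the overlap); second, it merely \emph{asserts} that the greedy cuts the line into the minimum number of subpaths of diameter at most $r$. You instead prove a single ``greedy stays ahead'' dominance statement, the induction $g_i \ge o_i$ on sorted right endpoints, directly against an arbitrary, possibly overlapping, feasible cover. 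This one lemma subsumes both of the paper's steps: it supplies the formal proof of greedy's minimality that the paper leaves informal, and it yields the coincidence of the disjoint and non-disjoint optima as a corollary, since greedy's output is already disjoint. What the paper's route buys is a reusable structural fact (overlap removal on a line) and a shorter write-up, together with an on-the-fly $O(\ell^2)$-per-path computation in place of your one-time $O(n^2)$ table $D[i][j]$ of interval diameters; both implementations give the same $O(n^2)$ per threshold and $O(n^2\log n)$ overall, and both, as you note, use no triangle inequality.
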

\begin{proof}
 We are given an instance where the connectivity graph is a line $L=(V, E)$ with vertices $V=\{v_1, v_2, ..., v_n\}$ and edges $E=\{\{v_i, v_{i+1}\}\mid i\in [n-1]\}$. Assume that also a diameter $r$ is given.
We first notice that every connected subset of $V$ is also a path. Then we observe that -- also for the non-disjoint case -- the line graph always allows for an optimal solution in which all clusters are disjoint: Assume there are two clusters that are not disjoint. Since the clusters are paths, this means that either one of the paths is a sub-path of the other, or the ends of the clusters overlap. In the first case, remove the cluster that is smaller from the solution. In the second case, delete the overlapping sub-path from one of the clusters. Since it is at the end of the path, the cluster stays connected, and its diameter will only get smaller.
(Remark: This does not work for $k$-center because a center may be in the overlapping region.)

For a vertex $v_i$, let $\Path(v_i, r)=\{v_i, ..., v_h\}$ be the longest path starting at $v_i$ which satisfies that $d(v_{\ell_1},v_{\ell_2}) \le r$ for all $\ell_1,\ell_2\in\{i,\ldots,h\}$. We say that $h-i+1$ is the \emph{length} of that path.
We start at $v_1$ and compute $\Path(v_1,r)$. It ends at $v_h$, so to find the next cluster, we compute $\Path(v_{h+1},r)$. We iterate this procedure until we reach $v_n$.
The procedure finds the minimum number of subpaths into which $L$ can be cut under the condition that the diameter does not exceed $r$, so if there is a solution with $k$ clusters of diameter at most $r$, it will find it.
Notice that if the path computed at $v_i$ has length $\ell$, then computing it takes time $O(\ell^2)$. Since the lengths of all paths sum up to $n$, and since the square of a sum can only be larger than the sum of the squares of its summands, the total running time to solve Problem~\ref{p:subroutine} is $O(n^2)$.
Notice that we use no triangle inequality in the proof, so the statement is even true for non-metric connected $k$-diameter with a line connectivity graph.
\end{proof}

\section{Conclusions and Open Problems}
We studied the connected k-center and k-diameter problem and proved several new results on the approximability of different variants of these problems. In particular, we developed a general framework to obtain approximation algorithms for the disjoint versions of these problems that relies on the existence of well-separated partitions. While we obtain constant-factor approximations for $L_p$-metrics in constant dimension and metrics with constant doubling dimension, our general upper bound is $O(\log^2{k})$. Since all our lower bounds are constant, an obvious open question is to close the gaps between the upper and lower bounds. One possibility to approach this would be to derive better well-separated partitions. However, we also show that with our approach no bound better than $O(\log\log{k})$ can be shown.

\paragraph{Acknowledgements}

We thank J\"urgen Kusche and Christian Sohler for raising the problem and for fruitful discussion on the modeling. We also thank Xiangyu Guo for the discussion on the algorithm design and analysis.  
 
\bibliography{literature}
\bibliographystyle{plainurl}
\appendix
\section{A Counterexample to the Greedy Algorithm}
\label{sec:Counterexample}

Ge et al.\ \cite{GeEGHBB08} use CkC to denote the connected $k$-center problem with disjoint clusters and CkC' to denote the connected $k$-center problem with non-disjoint clusters. They present an algorithm (Algorithm 2 in Figure 4 in \cite{GeEGHBB08}) to transform a feasible solution for the CkC' problem into a feasible solution for the CkC problem. 

Let $V'_1, \ldots, V'_k$ be a solution for CkC' with corresponding centers $c_1, \ldots, c_k$. This solution is iteratively transformed into a feasible solution $V_1,\ldots,V_k$ of CkC with the same centers $c_1,\ldots,c_k$. In the first iteration, the nodes $V'_1$ connected to $c_1$ are assigned to the cluster $V_1$. Then, in each iteration $i\in\{2,\ldots, k\}$, the induced subgraph $G[V_i'\setminus\cup_{j=1}^{i-1}V_j]$ is considered and all nodes that are reachable from $c_i$ in this subgraph are added to $V_i$. Then the algorithm iterates over all nodes~$v$ from $(\cup_{j=1}^{i-1}V_j)\cap V_i'$ and adds all yet unassigned nodes connected to $v$ in $G[V_i']$ to the cluster $V_j$ that contains~$v$.

In Lemma~4.5 of \cite{GeEGHBB08} it is argued that this algorithm is guaranteed to find a feasible solution for the CkC problem with maximum radius at most $3r$ when given a feasible solution for the CkC' problem with radius~$r$. We believe that this claim is false and present a counterexample in the following. In our opinion the incorrect reasoning in the proof of Lemma~4.5 is the following sentence: \emph{Besides, we observe that the distance between $v$ and its corresponding center node $c_j$ is at most~$r$.} This sentence implicitly assumes that $v$ belongs to the cluster~$V_j'$ with center~$c_j$. However, $v$ could also be a node from a different cluster~$V_{\ell}'$ for some $j<\ell<i$ that was added to the cluster~$V_j$ when $V_{\ell}$ was computed. In that case the distance between $v$ and $c_j$ can be larger than~$r$. This can happen repeatedly, which leads to an approximation factor of $\Omega(k)$.

The counterexample is shown in Figure~\ref{fig:GeInstance}: The left graph depicts the metric (i.e., a graph metric where each edge has length~1). The graph in the right depicts the connectivity graph. Let $k=6$. It is easy to construct a feasible optimal clustering for CkC' with radius~1 with centers $u_1, u_2, u_3, u_4, u_5, u_6$. In this clustering the cluster with center~$u_i$ contains the nodes $v_{i-1}, v_i, v_{i+1}, s_{i1}, s_{i2}, s_{i3}$ (for $i=1$, $v_{i-1}$ is interpreted as $v_6$, and for $i=6$, $v_{i+1}$ is interpreted as $v_1$).  Figure~\ref{fig:onecluster} shows as an example the cluster with center $u_1$.

\begin{figure}[htp]
\centering
\includegraphics[width=.4\textwidth]{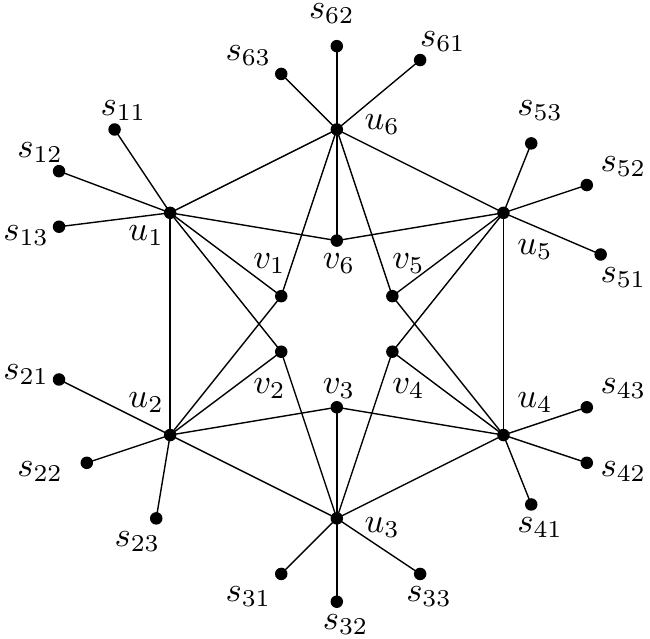}
  \hfill
\includegraphics[width=.4\textwidth]{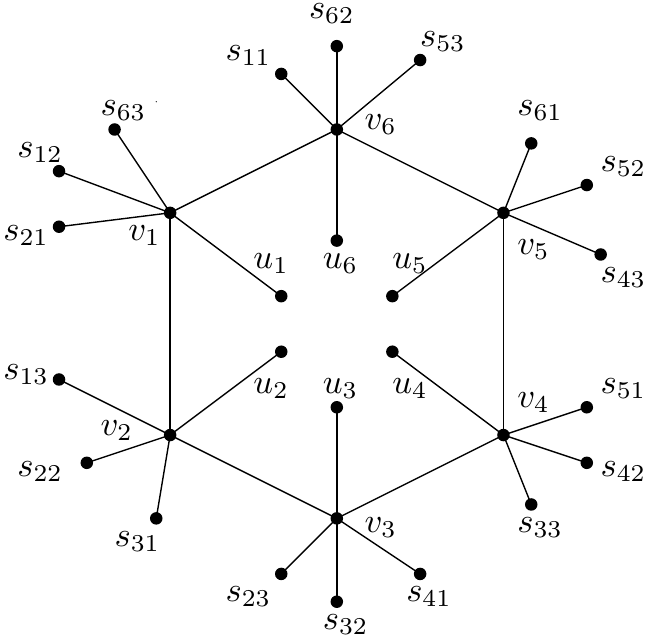}
\caption{Illustration of the metric (left) and the connectivity graph (right).
(Observe that the node positions of the nodes are different in the left and the right figure, i.e., the labels are not the same.)}
\label{fig:GeInstance}
\end{figure}

\begin{figure}[htp]
\centering
\includegraphics[width=.4\textwidth]{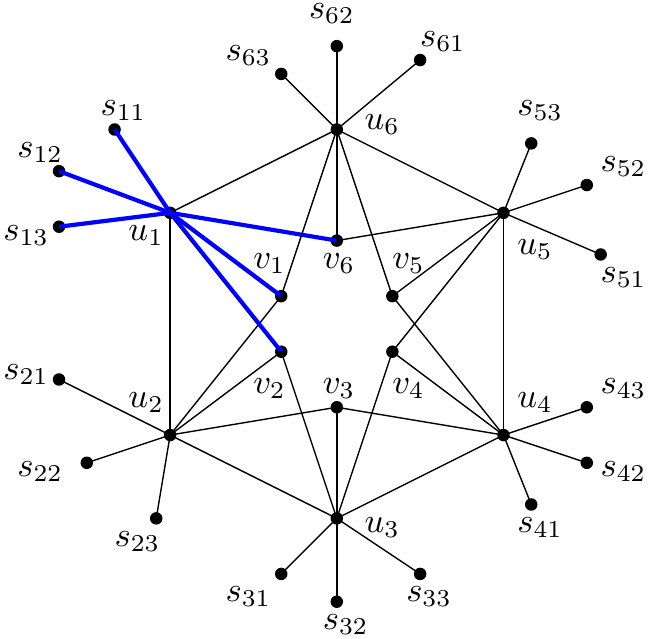} 
 \hfill
\includegraphics[width=.4\textwidth]{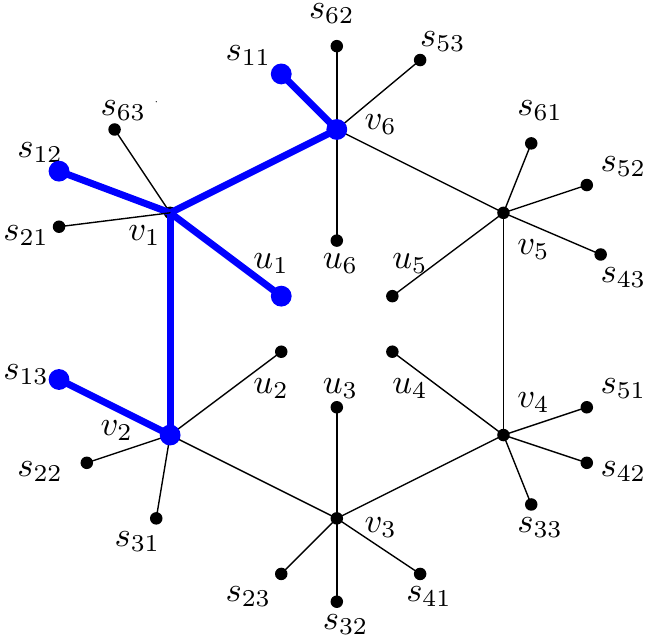} 
\caption{Illustration of one cluster in the metric (left), and the cluster in the connectivity graph (right).} 
\label{fig:onecluster}
\end{figure}

Now let's examine the solution computed by the algorithm due to Ge et al.:
\begin{itemize}
    \item For $i=1$, all nodes $\{v_1, v_2, v_6, s_{11}, s_{12}, s_{13}\}$ are assigned to $V_1$.
    \item For $i=2$, the center $u_2$ is isolated in $G[V_2'\setminus V_1]$. Hence $V_2$ consists only of $u_2$. The nodes $s_{21}, s_{22}, s_{23}, v_{3}$ from $V_2'$ get assigned to $V_1$ with center $u_1$ because these nodes are connected to $v_2$ which is already in~$V_1$.
    \item For $i=3$, the center $u_3$ is isolated in $G[V_3'\setminus (V_1\cup V_2)]$. Hence $V_3$ consists only of $u_3$. The nodes $s_{31}, s_{32}, s_{33}, v_{4}$ from $V_3'$ get assigned to $V_1$ with center $u_1$ because these nodes are connected to $v_3$ which is already in~$V_1$.
    \item \ldots
    \item At the end, all nodes except the centers are assigned to $u_1$ and the cluster radius is then $4$, which is more than $3$ times the radius of the solution for CkC'.  
\end{itemize}

The example shows that the disjoint clustering that is computed by the algorithm can be worse by more than a factor of~$3$ than the given non-disjoint clustering. By extending this construction appropriately, a bound of $\Omega(k)$ can be shown. Even more, the optimal disjoint clustering has radius~$2$ independent of $k$: Use each $u_i$ as a center and let the corresponding cluster be $\{u_i,v_i,s_{i2},s_{(i+1)1},s_{(i-1)3}\}$ (for $i=1$, the corresponding cluster is $\{u_1, v_1, s_{12}, s_{21},  s_{k3}\}$, and for $i=k$, the corresponding cluster is $\{u_k, v_k, s_{k2}, s_{11}, s_{(k-1)3}\}$). Hence, our construction also shows that the approximation factor of the algorithm is $\Omega(k)$ in the worst case.

\section{Connected Clustering with Tree Connectivity Graphs}
\label{chap:treeopt}

In this section, we solve Problem~\ref{p:subroutine} optimally for the connected $k$-center problem with disjoint clusters when the connectivity graph is a tree:
Given an unweighted tree $T=(V,E)$, a metric space $M=(V,d)$ with $d:V\times V\to \mathbb{R}$, a number $k\in \mathbb{N}$ and a radius $r$, find $k$ disjoint subtrees of $T$ (clusters) that cover all vertices and have maximum radius of $r$, if possible. We present an algorithm that uses  dynamic programming and finds a solution if a solution exists in polynomial time. This immediately implies an optimum algorithm for the connected $k$-center problem with disjoint clusters and a $2$-approximation algorithm for the connected $k$-diameter problem with disjoint clusters when the connectivity graph is a tree.

We obtained the results in this section before we were aware of the previous work on connected clustering. In fact, Ge et al.~\cite{GeEGHBB08} already proved that the $k$-center problem with disjoint clusters can be solved optimally by dynamic programming. Their algorithm is essentially the same as the one presented here. We keep our writeup here for completeness but would like to stress that the algorithm and the results in Sections~\ref{sec:TreeAlg} and~\ref{sec:TreeAlgCorrectness} are not new. In contrast, the assignment problem, which we discuss in Section~\ref{sec:TreeAlgAssignment}, has not been studied before.

\subsection{The Algorithm}\label{sec:TreeAlg}
We give an optimum algorithm with running time $O(n^2 \log n)$. The first observation is that since we know $r$, we can formulate the problem as a minimization problem that minimizes the number of sub-trees of radius $r$ that are necessary to cover $T$. Thus, \lq optimum\rq\ in the following refers to the smallest number of necessary centers (or sub-trees).
Before presenting the algorithm, we need some further notation.
\begin{itemize}
	\item We choose an arbitrary node as the root of tree $T$, let this be $o \in V$.
	\item For each node $a\in V$, let $\ch(a)$ be the set of all children of node $a$ in $T$.
	\item For each node $a\in V$, let $T_a=(V_a, E_a)$ denote the (maximum) subtree rooted at node $a$ not including $a$'s parent. Note that $T_o=T$.
	\item For two different nodes $u, v\in V$, let $d'(u, v)$ denote the maximum distance of $u$ to a node on the path connecting $u$ and $v$. 
\end{itemize}

The core of any dynamic programming is the definition of suitable sub-problems. We want to solve sub-problems for sub-trees and then combine them. However, if we just look at a vertex $a$ and the sub-tree $T_a$, then it is not clear how to decompose a solution into the part within the sub-tree and the part in the rest of the tree because vertices of $T_a$ may be assigned to a center in $T \backslash T_a$ and vice versa. We resolve this by considering pairs of vertices $a$ and $b$. $T_a$ is the sub-tree we are interested in, and $b$ is a center -- to which either $a$ or its parent is assigned to (details below). By this, we fix the interaction between $T_a$ and $T\backslash T_a$ in order to compute solutions for sub-trees independently and then combine them. Recall that our objective is to minimize the number of centers, so what we optimize over when solving a sub-problem is the number of sub-trees necessary to cover $T_a$. Now we define the sub-problems precisely.

\begin{itemize}
	\item Let $I(a, b)$ be the optimum value of the sub-problem defined on subtree $T_a$ under the condition that $b$ is a center and $a$ is assigned to it. The node $b$ has to be \emph{inside} of $T_a$. 
	This function will be computed for every node $a\in V$ and every center choice $b\in V_a$. There are $O(n^2)$ such relevant values. We further define $I(a)=\min_{b\in V_a} I(a, b)$ to be the optimum value of the subproblem defined on the subtree $T_a$ while root $a$ is assigned to any node inside of $T_a$. 
	\item Let $F(a, b)$ be the optimum value of the sub-problem defined on the subtree $T_a$ under the condition that $b$ is a center and the \emph{parent} of $a$ is assigned to $b$. The node $b$ has to be \emph{outside} of $T_a$.
	This function will be computed for every node $a\in V$ except the root $o$, and every center choice $b\in V\setminus V_a$.
	There are $O(n^2)$ such relevant values. 
\end{itemize} 

Why are these sub-problems sufficient? First notice that the connectivity constraint helps us with decomposing the tree. Assume a vertex $s$ in $T_a$ is connected to a center $b$ outside of $T_a$. Then
all nodes on the path connecting $s$ and $b$ have to be in the same cluster. In particular, $a$ and the parent of $a$ are in the same cluster as $s$. It thus suffices to look at all possible ways to assign the parent of $a$ to a center $b$ outside of $T_a$.

When no vertex in $T_a$ is connected to a center outside of $T_a$, then we need the reverse consideration: What information do we need about centers $T_a$ to later combine its solution with  $T\backslash T_a$? We observe that from the perspective of $T\backslash T_a$, only one center in $T_a$ is interesting: The center to which $a$ is connected. Any vertex $t$ from $T \backslash T_a$ that wants to connect to a center inside of $T_a$ needs to bed assigned to the center $a$ is connected to. Thus it suffices to look at all possibilities to assign $a$ to a center within $T_a$.

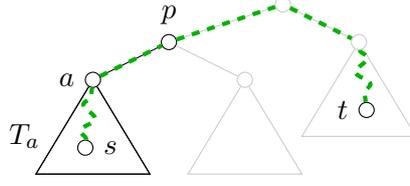
\begin{figure}
 \centering

\begin{tikzpicture}
 \node at (0,0) [circle, inner sep = 0cm, minimum width=0.2cm, draw, label=left:{$a$}] (a) {};
 \node at (1,0.5) [circle, inner sep = 0cm, minimum width=0.2cm, draw, label=above:{$p$}] (p) {};
 \node at (2.5,1) [circle, inner sep = 0cm, minimum width=0.2cm, draw=black!20] (gr) {};
 \draw (a) to (p);
 \draw [black!20] (p) to (gr);
 \draw (a) -- ($(a)+(-0.75,-1.25)$)  -- ($(a)+(0.75,-1.25)$) -- (a);
 \draw (a) -- (-0.75,-1.25) -- (0.75,-1.25) -- (a);
 \node at (-0.9,-0.75) {$T_a$};
 \node at (2,0) [circle, inner sep = 0cm, minimum width=0.2cm, draw=black!20] (gr2) {};
 \draw [black!20] (p) -- (gr2);
 \draw [black!20] (gr2) -- ($(gr2)+(-0.75,-1.25)$)  -- ($(gr2)+(0.75,-1.25)$) -- (gr2);
 \node at (3.5,0.5) [circle, inner sep = 0cm, minimum width=0.2cm, draw=black!20] (gr3) {};
 \draw [black!20] (gr3) -- ($(gr3)+(-0.75,-1.25)$)  -- ($(gr3)+(0.75,-1.25)$) -- (gr3);
 \draw [black!20] (gr) -- (gr3);

 \node at (-0.1,-0.9) [circle, inner sep = 0cm, minimum width=0.2cm, draw, label=right:{$s$}] (s) {};
 \draw [decorate, decoration=snake, dashed,green!70!black,ultra thick] (s) to (a);
 \draw [green!70!black,ultra thick, dashed] (a) to (p);
 \draw [green!70!black,ultra thick, dashed] (p) to (gr);
 \draw [green!70!black,ultra thick, dashed] (gr) to (gr3);
 \node at (3.6,-0.4) [circle, inner sep = 0cm, minimum width=0.2cm, draw, label=left:{$t$}] (t) {};
 \draw [decorate, decoration=snake, dashed,green!70!black,ultra thick] (gr3) to (t);
\end{tikzpicture}
\caption{An illustration of how the connectivity constraint makes the tree decomposable: If $s$ wants to connect to $t$, $a$ and $p$ also have to connect to $t$, and if $t$ wants to connect to $s$, then $p$ and $a$ also have to connect to $s$.}
\end{figure}

Now we provide the detailed recurrence relations between $F$ and $I$.
According to the definition of function $F(a, b)$ on nodes $a\in V, b\in V\backslash V_a$, we
compute it by the following case analysis.
\begin{enumerate}[label=(\roman*)]
	\item If $a$ is a leaf, only check if $a$ itself can be covered by $b$. If it can be assigned to $b$, no extra center is necessary to cover $T_a$. If not, we need $a$ to be a center. We get
	\begin{equation*}
		F(a,b) = \begin{cases}
					0  & \text{if } d'(b,a) \leq r \\[10pt]
					1  & \text{otherwise}.
				\end{cases}
	\end{equation*}

	\item If $a$ is not a leaf, then we want to recurse on the sub-trees defined by the vertices in $\ch(a)$. There are two cases depending on where $a$ is assigned to. If $a$ is assigned to $b$, then for all vertices in $\ch(a)$, the parent is assigned to $b$. Thus in this case, the minimum number of sub-trees to cover $T_a$ is the sum of $F(a',b)$ over all $a' \in \ch(a)$. If $a$ is not assigned to $b$, then no vertex in $T_a$ is assigned to a vertex outside of $T_a$. Thus the minimum number of sub-trees to cover $T_a$ is $I(a)$. It may be that option one is not available if $a$ cannot be assigned to $b$. If both options are available, we take the minimum. We get
	\begin{equation*}
		F(a,b) = \begin{cases}
					\min \left\{\sum_{a'\in \ch(a)} F(a', b), I(a) \right\}  & \text{if } d'(b,a) \leq r \\[10pt]
					I(a)  &  \text{otherwise}.
				\end{cases}
	\end{equation*}
\end{enumerate}

We can compute $I(a, b)$ on nodes $a\in V, b\in V_a$ by the following case analysis.
\begin{enumerate}[label=(\roman*)]
	\item If $a=b$ then $a$ is chosen as a center and for all sub-trees defined by the vertices in $\ch(a)$, the parent is assigned outside (assigning $a$ to itself is outside of the sub-trees) and we have
	\[
	 I(a,a)= 1 + \sum_{a'\in \ch(a)} F(a', a).
	\]
	\item When $a\neq b$, then $a$ is assigned to a node $b$ from $V_a\setminus \{a\}$.
	Notice that not all $b$ are actually feasible because $a$ might not be able to connect to $b$ and then the definition of $I(a,b)$ makes no sense. We set $I(a,b)$ to $\infty$ in this case because $T_a$ cannot be covered by a finite number of centers under the condition that $a$ connects to a center which it cannot connect to. Otherwise, since $a$ is connected to $b$, we know for all sub-trees where the parent is connected to. We only need to consider one detail:
	Let $\ch_b(a)$ be the uniquely defined child of $a$ which lies on the path connecting $a$ and $b$ and distinguish between the sub-tree of $\ch_b(a)$ and the other sub-trees of vertices in $\ch(a)$. For $T_{\ch_b(a)}$, the parent of $\ch_b(a)$ is connected to a vertex in $T_{\ch_b(a)}$, so we need $I(\ch_b(a), b)$ centers to cover this sub-tree. For all other $a' \in \ch(a)$, the parent is connected to a vertex outside of the sub-tree, so we need $F(a',b)$ centers for these sub-trees. Overall, we get

	\begin{equation*}
		I(a,b) = \begin{cases}
			I(\ch_b(a), b) + \sum_{a'\in \ch(a)\setminus\{\ch_b(a)\}} F(a', b) &  \text{if }  d'(b,a) \leq r \\[10pt]
			\infty  & \text{otherwise}.
				\end{cases}
		\end{equation*} 	
\end{enumerate}

Due to the above recurrence relations, the function values $F(a, b)$, $I(a, b)$ and $I(a)$ can be computed easily if the function values for the children of $a$ are known. For this one only needs to compute the function values for all leaves (with height $1$) and then the nodes with larger height. Let $\height(a)$ be the height of vertex $a$ in a rooted tree, which is the length of the longest downward path to a leaf from vertex $a$. Using that $I(a)=1$ and $I(a, a)=1$ for all leaves $a$ in $T$, and $F(a, b)=0$ if $d'(b, a)\le r$, otherwise $F(a, b)=1$ for all $b\neq a$, we obtain the following algorithm to compute the smallest number of necessary centers.

\begin{algorithm}
	\caption{\textsc{DynamicProgramming}$(T,d',r)$}\label{alg:dp_tree}
	\KwIn{tree $T=(V, E)$, modified distances $d'(\cdot, \cdot)$, radius $r$}   
	Set $F(a, \cdot)$, $I(a)$ and $I(a, a)$ for all leaves $a$ in $T$\;

	\For{$i=2,...,\height(o)$}{
   		\ForAll{nodes $a$ with  $\height(a)=i$}{
   			\ForAll{$b \in T_a$}{
   				compute $I(a,b)$ by the recurrence relation;
   			}
   			compute $I(a)$ by taking the minimum over all $I(a,b)$;
   			
			\ForAll{$b \notin T_a$}{
				compute $F(a,b)$  by the recurrence relation;
			}
   		}
   }
	\KwOut{number of centers (or clusters) $I(r)$}
 \end{algorithm}

\subsection{Correctness and Running Time}\label{sec:TreeAlgCorrectness}

\begin{theorem}
	\label{thm:dp_opt_tree}
	Algorithm~\ref{alg:dp_tree} computes the smallest number of necessary centers. 	
\end{theorem}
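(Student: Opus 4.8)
The plan is to prove by induction on the height of a node $a$ that the three quantities produced by the recurrences coincide with their semantic definitions: that $I(a,b)$ equals the minimum number of connected radius-$r$ clusters needed to cover $T_a$ subject to $b\in V_a$ being a center and $a$ lying in $b$'s cluster; that $F(a,b)$ equals the analogous minimum subject to $b\in V\setminus V_a$ being a center and the parent of $a$ lying in $b$'s cluster; and that $I(a)=\min_{b\in V_a} I(a,b)$. Once this is established, the theorem follows immediately: since $T_o=T$ and in any feasible clustering the root $o$ is assigned to some center, which necessarily lies inside $T_o$, the value $I(o)$ returned by Algorithm~\ref{alg:dp_tree} is exactly the minimum number of necessary centers.

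The structural fact that drives the whole argument is that $T_a$ is attached to the rest of $T$ through the single edge $\{a,p\}$, where $p$ denotes the parent of $a$. I would first record this as a lemma: because every cluster is connected in $T$, any cluster containing both a node of $T_a$ and a node outside $T_a$ must use this edge and hence contain both $a$ and $p$. This justifies the case split appearing in both recurrences. In a feasible clustering either (a) some node of $T_a$ shares a cluster with a node outside $T_a$, in which case $a$ and $p$ lie in that cluster, $a$ is assigned to the center $b$ to which $p$ belongs (which lies outside $T_a$), and, since connectivity forces every node on the $a$--$b$ path into the same radius-$r$ cluster, we must have $d'(b,a)\le r$; or (b) no node of $T_a$ leaves $T_a$, in which case $a$ is assigned to a center inside $T_a$ and $T_a$ is covered on its own at cost $I(a)$. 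This dichotomy is precisely the minimum taken in the non-leaf case of $F$, and the same reasoning identifies, in the definition of $I(a,b)$ with $a\neq b$, the unique child $\ch_b(a)$ whose subtree contains $b$.

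For the inductive step I would verify each recurrence by two inequalities. For the lower-bound direction (no feasible covering is smaller than the recurrence value) I take an arbitrary feasible constrained covering of $T_a$ and restrict it to the child subtrees $T_{a'}$; by the boundary lemma each restriction is itself a feasible constrained covering of $T_{a'}$ of the appropriate type --- an $F$-type covering with center $b$ for every child whose subtree does not contain $b$, and the $I$-type quantity $I(\ch_b(a),b)$ for the unique child whose subtree does contain $b$ --- so by the induction hypothesis each restriction has size at least the corresponding computed sub-value, and summing (adding one for $b$ exactly when $b$ is introduced) gives the recurrence value. For the achievability direction I combine optimal sub-coverings: since distinct child subtrees are vertex-disjoint and communicate only through $a$, gluing their clusters, reusing the single shared center $b$ counted exactly once, produces a feasible connected and pairwise-disjoint clustering of $T_a$ of the size prescribed by the recurrence. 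The base case $a$ a leaf is immediate, since $T_a=\{a\}$ forces $I(a)=I(a,a)=1$, while $F(a,b)=0$ or $1$ according to whether $d'(b,a)\le r$, matching the initialization in Algorithm~\ref{alg:dp_tree}.

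I expect the main obstacle to be the faithful translation of the connectivity and disjointness requirements across the decomposition. Concretely, two points need care: first, arguing that the condition $d'(b,a)\le r$ exactly characterizes when $a$, together with every intermediate node that connectivity forces into the same cluster, can be assigned to $b$ within radius $r$; and second, checking in the achievability direction that recombining sub-solutions never creates a cluster of radius exceeding $r$, never merges two clusters that should remain separate on sibling subtrees, and never double-counts the shared center $b$. The accounting of $b$ across siblings and the preservation of connectedness and disjointness under gluing are the steps that must be carried out most carefully; everything else reduces to the case analysis already dictated by the boundary lemma.
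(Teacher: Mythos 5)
Your proposal is correct and takes essentially the same route as the paper's proof: induction on the height of $a$, with the invariant that the computed values $I(a,b)$, $F(a,b)$, and $I(a)$ agree with their semantic definitions, driven by the structural fact that $T_a$ meets the rest of $T$ only through the edge from $a$ to its parent, which forces the same case analysis on where $a$ (or its parent) is assigned. Your explicit two-inequality framing (lower bound by restricting an arbitrary feasible covering to the child subtrees, achievability by gluing optimal sub-coverings with the shared center $b$ counted once) is a more careful rendering of the recurrence-correctness argument the paper carries out informally, but it is the same decomposition and the same induction.
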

\begin{proof}
    The correctness of the algorithm relies on the correctness of the recurrence relation. We show via induction over the height of the nodes the following invariant: for every node $a\in V$ and $b\neq a$, the function values
	$I(a)$, $I(a, b)$ and $F(a, b)$ obtained in Algorithm~\ref{alg:dp_tree} are consistent with the respective function definitions.

	\textbf{Base Case:} Consider the leaves, nodes with height $1$. The function values for a leaf $a$, $I(a)=1$ and $I(a, a)=1$ are trivially correct since one center is necessary and also sufficient to cover a singleton vertex set with a non-negative radius. According to the definition of $F(a, b)$ in which the parent of $a$ is assigned to center $b$ and the connectivity constraint, leaf node $a$ could either be assigned to $b$ or be chosen as a center. Thus the optimum value of the subproblem defined on the subtree $T_a$ and a node $b$ \emph{outside} of $T_a$ will be $0$ if $d'(b, a)\le r$, otherwise it will be $1$.

	\textbf{Inductive Case:} Assume that for every node with height at most $h$ the invariant holds. Now we show that for every node with height $h+1$ the invariant also holds.  Consider a node $a$ with $\height(a)=h+1$. 
   
	To calculate $I(a, b)$ for any $b\in V_a$, according to the definition, we know that $a$ is assigned to $b$ which is inside of subtree $T_a$. If $d'(b, a)>r$, the radius of the cluster containing center $b$ and node $a$ is larger than $r$ which results in an infeasible clustering and thus $I(a, b)$ is set to be $\infty$. When $d'(b, a)\le r$, all nodes (including the child $\ch_b(a)$) in the path between $a$ and $b$ will be assigned to $b$ as $a$ is assigned to $b$ because of the connectivity clustering property. Either we choose $a$ as a center, or we choose a node $b\neq a \in V_a$ to be the center of $a$. In the first case, we count center $a$ and a number of necessary clusters for each subtree rooted at a child of $a$ while $ a$ is assigned to $a$, thus $I(a, b)=I(a, a)=1+ \sum_{a'\in \ch(a)} F(a', a) $; and in the second case, we can describe $I(a, b)$ by summing up $I(\ch_b(a), b)$ and the number of necessary clusters for each subtree rooted at a child of $a$ except $\ch_b(a)$, i.e., $ \sum_{a'\in \ch(a)\setminus \{\ch_b(a)\}} F(a', b) $, since we know that the parent of child $a'\in \ch(a)\setminus \{\ch_b(a)\}$ is assigned to a node $b$ outside of subtree $T_{a'}$ and the parent of child $\ch_b(a)$ is assigned to a node $b$ inside of subtree $\ch_b(a)$.  Thus, $I(a, b)=I(\ch_b(a), b)+ \sum_{a'\in \ch(a)\setminus \{\ch_b(a)\}} F(a', b) $. Altogether this shows the correctness of the recurrence of $I(a, b)$, and it is calculated by the function values with respect to nodes $\ch(a)$ with lower height $h$. We also obtain the correct $I(a)$ immediately by definition.  

	Now we consider $F(a, b)$. By definition, the parent of $a$ is assigned to node $b$, a node outside of $T_a$. If $a$ itself cannot be reached from $b$ we need to assign $a$ to a node in $T_a$ and thus $F(a,b) = I(a)$. Otherwise we can either assign $a$ to a node in $T_a$, or assign $a$ to $b$ based on the connectivity clustering property. Thus, $F(a, b)=\min\{I(a), \sum_{a'\in \ch(a)} F(a', b) \}$ in which the first term represents the minimum number of clusters needed for tree $T_a$ while $a$ is assigned to a node in $T_a$, and the second term sums the minimum number of clusters needed for subtrees $T_{a'}$ as the parent of $a'$ is assigned to a node outside of $T_{a'}$. Thus, $F(a, b)$ is calculated by function values with respect to nodes $\ch(a)$ with lower height $h$. 
	
	The invariant also holds true for nodes with height $h+1$ as the invariant holds true for nodes with height $h$, establishing the induction step.
\end{proof}

	\begin{lemma}
		\label{thm:dp_time_tree}
		There exists an implementation of Algorithm~\ref{alg:dp_tree} with running time $O(n^2\log n)$ for the disjoint connected $k$-center problem on trees with $n$ nodes.
		\end{lemma}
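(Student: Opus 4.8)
The plan is to bound the cost of a single evaluation of Algorithm~\ref{alg:dp_tree} for a fixed radius~$r$ by $O(n^2)$, and then wrap it in the binary-search scheme of Problem~\ref{p:subroutine}, which is where the extra $\log n$ factor enters. First I would fix a root $o$ and, once and for all (independently of $r$), precompute two auxiliary tables. The first is the table of modified distances $d'(u,v)$ for all ordered pairs: running a single DFS from each source $u$ and maintaining the running maximum of $d(u,\cdot)$ along the current root-path yields $d'(u,v)=\max(d'(u,\pi(v)),d(u,v))$ for every $v$ in $O(n)$ time per source, hence $O(n^2)$ overall. The second records, for every node $a$ and every $b\in V_a$, the child $\ch_b(a)$ of $a$ on the $a$--$b$ path; I would fill it by walking, for each $b$, the path from $b$ up to the root and setting $\ch_b(u_i)=u_{i-1}$ for each ancestor $u_i$, again $O(n^2)$ in total. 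Both tables use $O(n^2)$ space and are reused across all binary-search iterations.

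The crux is the $O(n^2)$ bound for one DP pass. There are $O(n^2)$ relevant entries $I(a,b)$ (one per $a$ and $b\in V_a$) and $O(n^2)$ relevant entries $F(a,b)$ (one per $a$ and $b\notin V_a$); processing nodes in order of increasing $\height$, every entry is evaluated once all child entries are available. Inspecting the recurrences, each $I(a,b)$ and each $F(a,b)$ is a sum or minimum over the children $\ch(a)$, augmented by an $O(1)$ lookup of $d'(b,a)$ and (for $I(a,b)$ with $a\neq b$) an $O(1)$ lookup of the precomputed $\ch_b(a)$. Thus every one of the $O(|V_a|)$ entries $I(a,\cdot)$ and every one of the $O(n)$ entries $F(a,\cdot)$ costs $O(|\ch(a)|)$. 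Summing the work for $I$ gives $\sum_{a}|V_a|\cdot O(|\ch(a)|)\le n\sum_a|\ch(a)|=O(n^2)$, using $|V_a|\le n$ and $\sum_a|\ch(a)|=n-1$; the same bound holds for $F$, while computing $I(a)=\min_{b}I(a,b)$ over all $a$ costs $\sum_a|V_a|=O(n^2)$. Hence one evaluation of Algorithm~\ref{alg:dp_tree} runs in $O(n^2)$ time.

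Finally I would assemble the overall $k$-center algorithm. The optimal radius is one of the at most $n^2$ pairwise distances $d(u,w)$ (each value $d'(b,a)$ equals some $d(b,w)$), so I would sort these candidates in $O(n^2\log n)$ time and binary-search for the smallest $r$ for which the number of centers returned by the DP is at most $k$; monotonicity of this count in $r$ makes the search valid, and one can pad with singleton clusters when the minimum is below $k$. This needs only $O(\log n)$ DP evaluations, each $O(n^2)$, so the whole procedure costs $O(n^2\log n)$ on top of the one-time $O(n^2)$ preprocessing, as claimed. I expect the per-pass accounting to be the only delicate point: naively recomputing $\ch_b(a)$ inside the inner loop (for instance by searching among the children of $a$) would add a $\log n$ factor to each of the $O(n^2)$ entries and already push a single pass to $O(n^2\log n)$. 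The key is to amortize the child-on-path information into the one-time table, which keeps each entry at $O(|\ch(a)|)$ and confines the $\log n$ factor to the binary search over radii.
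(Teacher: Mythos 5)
Your proposal is correct and follows essentially the same route as the paper's proof: precompute $d'$ in $O(n^2)$, charge each entry $I(a,b)$ and $F(a,b)$ a cost of $O(1+|\ch(a)|)$, sum over nodes using $\sum_a |\ch(a)| = n-1$ to get $O(n^2)$ per DP pass, and obtain the $\log n$ factor from the Hochbaum--Shmoys binary search over the $O(n^2)$ sorted candidate radii. Your explicit $O(n^2)$ precomputation of the table $\ch_b(a)$ is a nice touch that fills in a detail the paper leaves implicit when it asserts that ``all other operations need only constant time.''
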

    \begin{proof}

    We only need to argue that Algorithm~\ref{alg:dp_tree} needs $O(n^2)$ time to solve Problem~\ref{p:subroutine}.
    Precomputing $d'$ also requires a small dynamic programming approach to be achieved in time $O(n^2)$ by doing $n$ breadth-first-searches.

	The function $I(a,b)$ can be evaluted in time $O(1 + |\ch(a)|)$ because we might have to calculate a sum over all children of $a$ and all other operations need only constant time. The same holds for $F(a, b)$. Besides this it only remains to calculate $I(a)$ which is the minimum of $n$ values calculated before. Hence the entire running time to fill out the entries for a given node $a$ is upper bounded by $O(n (1 + |\ch(a)|)$.
	
	By summing this term up for every node in the tree we get that the total time complexity of our algorithm is upper bounded by
	\begin{equation*}
	O\left( \sum_{a \in V} n (1 + |\ch(a)|) \right) \subseteq O(n^2) + O\left(n \sum_{a \in V} |\ch(a)|\right) \subseteq O(n^2).
	\end{equation*}
	We used for the last step that every node has only one parent and thus is only contained in one of the $\ch(a)$.	
\end{proof}

We combine Theorem~\ref{thm:dp_opt_tree} and Theorem~\ref{thm:dp_time_tree} to obtain the following result.

\thmTreeOpt*

\subsection{An Optimal Assignment Algorithm for Trees}\label{sec:TreeAlgAssignment}

In this section, we provide an assignment algorithm for the connected $k$-center problem with disjoint clusters on trees. Let $C$ with $|C|=k$ be the set of given centers. The idea is to provide a subroutine that solves Problem~\ref{p:subroutine} optimally in $O(nk)$ time. In addition with checking only $O(\log{n})$ radii, this yields an optimal assignment algorithm with running time $O(nk \log n)$. 

We first simplify the $k$-center instance by assuming that all centers are leaves. This is without loss of generality as the following reasoning shows: Assume that there is a center $c\in C$ that is not a leaf and let $N\subseteq V$ denote the set of its neighbors (i.e., $N$ contains the parent and the children of $c$). Now we remove the node $c$ from the tree~$T$ and obtain by this operation $|N|$ disconnected trees. We add to each of these trees a copy of $c$ and connect it to the corresponding neighbor of~$c$. The copies of $c$ are leaves because they have degree one. We mark the copies of $c$ as centers and solve the assignment problem for each tree separately.

The idea of the assignment problem is relatively simple. Pick an arbitrary non-center point $t$ as root and process the points in a bottom-up way, from leaves to root. For each point $v\in V$ we save two sets: $n(v)$ and $z(v)$. The set $n(v)$ consists of all previously processed points that have to be served together with $v$, i.e., the points that should be assigned to a center that $v$ is assigned to. The set $z(v)$ consists of all reachable centers descending from $v$. Here, we say a center $c$ is reachable for $v$ if $c$ can be reached for all points in $n(v)$, i.e., $d(x, c) \leq r$ has to hold for all $x \in n(v)$ and for all $v\in n(z)$. The intuition of computing the above two sets for points strongly relies on the tree structure: since all centers are on the leaves and the two sets of each point only contain points descending from $v$, we can first easily compute them for all leaves, and then compute for all neighbors of these leaves and so on.  At the end, if $z(t)$ is empty, then no center is reachable for point $t$ and we concludes that $r$ is too small. Otherwise process the points from top to bottom and assign them to viable centers. 

A more precise description is given in Algorithm~\ref{alg:assignment-trees}. Let $\pi(v)$ be the parent point of $v$ in the tree and let $ P_{v, c}$ be the unique $v$-$c$-path in the tree. From line 1 to line 3, set the processed set $M=\emptyset$ and initialize all lists $n(\cdot), z(\cdot)$. From line 4 to line 9 and line 10 to 16, we update the two sets for all points bottom-up, from leaves to root. After processing a point $v$, if $z(v)$ is empty, then $v$ should be assigned to a center the same as its parent $\pi(v)$ and so we add all of $n(v)$ to $n(\pi(v))$. In lines 17-18 and line 19-24, the algorithm either outputs \emph{fail} implying that $r$ is too small, or an assignment solution. 
 
\begin{algorithm}
    \SetAlgoLined

    \KwIn{A tree $T = (V, E)$ with root $t$, centers $C$, a metric $d$, a radius $r$.}
    \KwResult{If there exists an assignment with radius $r$ then the algorithm will find such an assignment. Otherwise fail.}
    \BlankLine

    $M \gets \emptyset$\;
    $z(v) \gets \emptyset$ for all $v \in V$\;
    $n(v) \gets \{v\}$ for all $v \in V$\;

    \BlankLine

    \ForAll(){leaves $\ell \in V$}{
        $M \gets M \cup \{\ell\}$\;
        \eIf{$\ell \in C$}{
            $z(\ell) \gets z(\ell) \cup \{\ell\}$\;
        }{
            $n(\pi(\ell)) \gets n(\pi(\ell)) \cup n(\ell)$\;
        }
    }

    \BlankLine

    \While(){$M \neq V$}{
        pick $v \in V$ such that $u \in M$ for each child $u$ of $v$\;
        \ForAll(){child $u$ of $v$}{
            $z(v) \gets z(v) \cup \{c \in z(u) \; | \; \forall x \in n(v): d(x, c) \leq r\}$\;
        }
        \If(){$z(v) = \emptyset$}{
            $n(\pi(v)) \gets n(\pi(v)) \cup n(v)$\;
        }
        $M \gets M \cup \{v\}$\;
    }

    \BlankLine
    
    \If(){$z(t) = \emptyset$}{
        \textbf{fail}\;
    }

    \BlankLine

    \While{$M \neq \emptyset$}{
        let $v$ be the point in $M$ that was added last\;
        pick $c \in z(v)$ arbitrarily and let $P_{v, c}$ denote the unique $v$-$c$-path\;
        \ForAll(){$x \in P_{v, c}$}{
            assign all points in $n(x)$ to $c$\;
            $M \gets M \setminus n(x)$\;
        }
    }

    \Return{assignment}

    \caption{Assignment Algorithm}
    \label{alg:assignment-trees}
\end{algorithm}

\begin{lemma}
    \label{lem:assignment_tree_radius}
    With radius $r$, for any $v \in V$ and $c \in z(v)$ we have that $d(y, c) \leq r$ for all points $y \in \bigcup_{x \in P_{v, c}} n(x)$.
\end{lemma}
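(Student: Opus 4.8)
The plan is to prove the statement by induction on the number of edges of the path $P_{v,c}$, exploiting the bottom-up order in which Algorithm~\ref{alg:assignment-trees} processes the vertices. The crucial preliminary observation is that the set $n(v)$ is already in its final state at the moment $z(v)$ is computed. Indeed, $n(v)$ is only ever enlarged in the lines $n(\pi(u))\gets n(\pi(u))\cup n(u)$ executed while a \emph{child} $u$ of $v$ is being processed (either in the leaf case or in the case $z(u)=\emptyset$). Since a vertex is handled only after all of its children, every such enlargement of $n(v)$ happens strictly before $v$ itself is processed, and no line in the top-down phase touches the sets $n(\cdot)$. Hence the value of $n(v)$ used to test the membership condition $\forall x\in n(v)\colon d(x,c)\le r$ when a center $c$ is inserted into $z(v)$ coincides with the final value of $n(v)$, and the condition therefore persists.

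A second easy invariant I would record first is that every center in $z(v)$ is a descendant of $v$ (with $v$ itself allowed only when $v$ is a leaf center). This follows by the same bottom-up induction, because $z(v)$ is seeded only for leaf centers and otherwise assembled from the sets $z(u)$ of the children $u$ of $v$. In particular, if $c\in z(v)$ is inserted from a child $u$, then $c$ lies in the subtree rooted at $u$, so $u$ is the unique child of $v$ on the path to $c$ and we get the clean decomposition $P_{v,c}=\{v\}\cup P_{u,c}$, where $P_{u,c}$ is one edge shorter.

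With these two facts, the induction is short. For the base case, $P_{v,c}$ has no edges, i.e.\ $v=c$; by the descendant invariant this forces $v=c$ to be a leaf center, so $n(v)=\{v\}$ (a leaf has no children that could enlarge its set) and $d(v,c)=0\le r$, as required. For the inductive step with $v\neq c$, the center $c$ must have entered $z(v)$ during the processing of $v$ through some child $u$ with $c\in z(u)$; at that point the test $\forall x\in n(v)\colon d(x,c)\le r$ succeeded, and by the preliminary observation it still holds for the final $n(v)$. The induction hypothesis applied to $u$ and $c\in z(u)$ gives $d(y,c)\le r$ for every $y\in\bigcup_{x\in P_{u,c}} n(x)$. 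Combining this with the bound on $n(v)$ and the decomposition $\bigcup_{x\in P_{v,c}} n(x)=n(v)\cup\bigcup_{x\in P_{u,c}} n(x)$ yields the claim.

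The only genuinely delicate point is the preliminary observation that $n(v)$ is frozen once $z(v)$ is evaluated; everything else is a routine unwinding of the recurrence. I would therefore state and justify this processing-order/monotonicity property carefully before running the induction, since the whole argument hinges on the membership test performed at insertion time remaining valid for the final sets that appear in the lemma.
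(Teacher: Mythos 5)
Your proof is correct and takes essentially the same route as the paper's: the paper likewise argues that $c \in z(x)$ for every $x \in P_{v,c}$ because $c$ can only be passed upward along this path in the tree, and that the filter in line~13 of Algorithm~\ref{alg:assignment-trees} enforces $d(y,c)\le r$ for all $y\in n(x)$. Your inductive formulation, the descendant invariant, and the explicit observation that $n(x)$ is already final when $z(x)$ is computed simply make rigorous what the paper's terse two-sentence proof leaves implicit.
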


\begin{proof}
    For all $x \in P_{v,c}$, we know that $c \in z(x)$ since $c$ can only be passed upward along path $P_{v, c}$ (see lines 12 and 13) in a tree graph. If $z(x)$ does not contain $c$, then $z(v)$ can neither. In addition, line 13 ensures that the distance between each point in $n(x)$ and center $c$ should be at most $r$, which completes the proof.  
\end{proof}

Now we are ready to analyze the two possible outputs of the assignment algorithm with radius $r$: 
(1) The output is an assignment. According to the assignment process line 20-21 and line 22-24, for a (last added) point $v$ and the center $c$ chosen corresponding to $v$, all points assigned to the center $c$ form a subset of points in $\bigcup_{x \in P_{v, c}} n(x)$. Combining with the distance bound in Lemma~\ref{lem:assignment_tree_radius}, the radius of all points assigned to $c$ is at most $r$. 
(2) The output is fail. If $z(x)=\emptyset$ for some point $x\in V$, then there exists no center $c$ descending from $x$ that has distance at most $r$ to every point in $P_{c, x}$. In addition, since the subgraph induced by each output cluster has to be connected and the connectivity graph is a tree, point $x$ must be assigned to the same center as point $\pi(x)$. Since we process the points in a bottom-up fashion, from leaves to root, root $t$ is processed at the end, and thus $z(t)=\emptyset$ implies that there is no feasible assignment with radius $r$. We conclude that the assignment algorithm solves Problem~\ref{p:subroutine} optimally.

Note that Algorithm~\ref{alg:assignment-trees} needs at most $O(nk)$ time since we check the distance from every point to every center at most once. Immediately, we have the following theorem.

\begin{theorem}
    \label{thm:opt_assignment_trees_k-center}
    There exists an optimal assignment algorithm on trees for the connected $k$-center problem with disjount clusters that runs in $O(nk \log n)$ time, where $k$ is the number of centers.
\end{theorem}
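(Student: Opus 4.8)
The plan is to establish the theorem in two pieces: that Algorithm~\ref{alg:assignment-trees} solves the decision version (Problem~\ref{p:subroutine}) optimally, and that one call runs in $O(nk)$ time, after which the standard binary-search wrapper yields the claimed $O(nk\log n)$ bound. First I would record the preprocessing reduction that makes all centers leaves: a non-leaf center $c$ with neighbor set $N$ is deleted and replaced by $|N|$ leaf copies, one attached to each former neighbor; this splits the tree into independent subtrees that can be solved separately, and because each copy coincides with $c$ it changes neither feasibility nor any radius. I would then root the tree at an arbitrary non-center $t$, so that the bottom-up sweep is well defined.

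For soundness, suppose the algorithm returns an assignment. The radius bound is exactly Lemma~\ref{lem:assignment_tree_radius}: whenever a bundle hung on a path $P_{v,c}$ is assigned to $c$, every assigned point lies within distance $r$ of $c$. Connectivity is immediate because the assignment phase always commits an entire $v$-$c$ path at once (together with the sets $n(x)$ attached along it), so each cluster induces a connected subtree; disjointness holds since points are deleted from $M$ as soon as they are assigned and are never revisited; and every point is assigned because the loop runs until $M=\emptyset$. Thus any returned assignment is feasible with radius at most $r$.

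The hard part will be completeness, i.e.\ showing that if the algorithm fails ($z(t)=\emptyset$) then no feasible assignment of radius at most $r$ exists. I would prove the contrapositive through an invariant, established by induction along the processing order: in every feasible assignment and for every node $v$, all points in the current bundle $n(v)$ are served by one common center, which lies within distance $r$ of each of them. The only nontrivial step is the merge $n(\pi(v))\gets n(\pi(v))\cup n(v)$, which fires precisely when $z(v)=\emptyset$, meaning no center inside the subtree $T_v$ is simultaneously reachable from all of $n(v)$. Here the tree structure does the work: since the clusters are connected and $v$ is the unique attachment point of $T_v$ to the rest of the tree, the common center of $n(v)$ must then lie outside $T_v$, so the connecting paths force $v$ and hence $\pi(v)$ into the same cluster as $n(v)$, which justifies the merge and shows it loses no feasibility. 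Applying the invariant at the root, which has no parent to absorb its bundle, $z(t)=\emptyset$ leaves $n(t)$ with no reachable center, contradicting feasibility.

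Finally I would bound the running time. A single call processes each node once, and the crucial observation is that each point-center pair $(x,c)$ has its distance tested at most once over the whole run: a center $c$ is propagated up exactly one root path and stops the moment it fails a reachability test, while a point $x$ only climbs into its parent's bundle when its current bundle found no reachable center; if $c\in z(v)$ then $n(v)$ is not merged upward, so $x$ and $c$ cannot meet again higher up. This gives $O(nk)$ distance evaluations and $O(nk)$ time per call. For the wrapper, the optimal radius is one of the $O(nk)$ point-to-center distances; sorting these costs $O(nk\log n)$ and binary search performs $O(\log(nk))=O(\log n)$ calls, for a total of $O(nk\log n)$, as claimed.
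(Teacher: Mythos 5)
Your proposal is correct and follows essentially the same route as the paper: the leaf-center reduction, soundness via Lemma~\ref{lem:assignment_tree_radius}, completeness via the bottom-up merging argument, and the $O(nk)$ per-call bound from testing each point--center distance at most once, wrapped in a binary search over the $O(nk)$ candidate radii. Your explicit invariant (every bundle $n(v)$ must share a common center in any feasible assignment, with the tree structure forcing the center outside $T_v$ when $z(v)=\emptyset$) is just a more carefully formalized version of the paper's own, rather terse, completeness argument.
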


\end{document}